\documentclass{lmcs}
\pdfoutput=1
\usepackage[utf8]{inputenc}

\usepackage{lastpage}
\lmcsdoi{21}{1}{2}
\lmcsheading{}{\pageref{LastPage}}{}{}%
{Dec.~06,~2023}{Jan.~10,~2025}{}

\usepackage{graphicx}
\usepackage{microtype}
\usepackage{amssymb}
\usepackage{xcolor}
\usepackage{xspace}
\usepackage{amsmath}
\usepackage{stmaryrd}
\usepackage{hyperref}
\usepackage{proof}
\usepackage{url}
\usepackage{tikz}
\usepackage{wrapfig}
\usepackage{mathabx}
\usepackage{mathtools}
\usepackage{enumitem}
\usepackage{arydshln}

\usepackage[size=footnotesize,textwidth=2.6cm]{todonotes}


\newcommand{\instr}[2]{\stackrel{#1}{#2}}

\newcommand{\modif}[1]{#1}
\newcommand{\modiff}[1]{#1}
\newcommand{\modifRev}[1]{#1}

\newcommand{\added}[1]{#1}
\newcommand{\maude}{MAUDE}

\newcommand{\rollSafe}{roll-safe}

\newcommand{\cherry}{$\mathtt{cherry\text{-}pi}$}
\newcommand{\Cherry}{$\mathtt{cherry\text{-}pi}$}

\newcommand{\theoremApp}[1]{
\noindent
\textbf{Theorem  #1.}}

\newcommand{\pic}{$\pi$-calculus}

\newcommand{\typeT}{T}
\newcommand{\typeU}{U}
\newcommand{\typeV}{V}


\newcommand{\ifPi}{{\it\sf if}}
\newcommand{\thenPi}{{\it\sf then}}
\newcommand{\elsePi}{{\it\sf else}}

\newcommand{\ite}{\tau}
\newcommand{\commitLab}{\mathit{cmt}}
\newcommand{\rollLab}{\mathit{roll}}
\newcommand{\abortLab}{\mathit{abt}}
\newcommand{\commitType}{\mathtt{cmt}}
\newcommand{\rollType}{\mathtt{roll}}
\newcommand{\abortType}{\mathtt{abt}}

\newcommand{\ce}[1]{\bar{#1}} 

\newcommand{\requestAct}[3]{\ce{#1}(#2) . #3}
\newcommand{\acceptAct}[3]{#1(#2) . #3}

\newcommand{\requestPrefix}[2]{\overline{#1}(#2)}
\newcommand{\acceptPrefix}[2]{#1(#2)}

\newcommand{\send}[2]{#1!\langle#2\rangle}
\newcommand{\sendAct}[3]{\send{#1}{#2} .  #3}
\newcommand{\receive}[2]{#1?(#2)}
\newcommand{\receiveAct}[3]{\receive{#1}{#2} . #3}

\newcommand{\select}[2]{#1 \triangleleft #2}
\newcommand{\selectAct}[3]{\select{#1}{#2} . #3}
\newcommand{\branching}[1]{#1 \triangleright}
\newcommand{\branch}[2]{#1\, :\, #2}
\newcommand{\branchSep}{,}
\newcommand{\branchAct}[2]{\branching{#1} \{#2\}}

\newcommand{\ifthenelseAct}[3]{\ifPi\ #1\ \thenPi\ #2\ \elsePi\ #3}

\newcommand{\inact}{\mathbf{0}}

\newcommand{\recAct}[2]{\mu #1.#2}

\newcommand{\ctrue}{\texttt{true}}
\newcommand{\cfalse}{\texttt{false}}




\newcommand{\congr}{\equiv}
\newcommand{\subst}[2]{[ #1 / #2 ]}

\newcommand{\expreval}[2]{#1 \downarrow #2}

\newcommand{\rulelabel}[1]{[{\sc #1}]}

\newcommand{\actionLab}{\ell}
\newcommand{\typeLab}{\lambda}







\newcommand{\cmt}{\stackrel{cmt}{\fwred}}
\newcommand{\cmts}[1]{\stackrel{cmt \,#1}{\fwred}}
\newcommand{\rll}{\stackrel{roll}{\bwred}}
\newcommand{\rlls}[1]{\stackrel{roll \,#1}{\bwred}}

\newcommand{\abt}{\stackrel{abt}{\bwred}}

\newcommand{\fwred}{\twoheadrightarrow}
\newcommand{\bwred}{\rightsquigarrow}
\newcommand{\fwbwred}{\rightarrowtail}
\newcommand{\auxrel}[1]{\transition{#1}}
\newcommand{\typeTrans}[1]{\transition{#1}}
\newcommand{\typered}{\longmapsto}
\newcommand{\compliant}{\dashV}


\newcommand{\transition}[1]{\xrightarrow{#1}}

\newcommand{\boolType}{\mathtt{bool}}
\newcommand{\intType}{\mathtt{int}}
\newcommand{\strType}{\mathtt{str}}

\newcommand{\outType}[1]{![#1]}

\newcommand{\inpType}[1]{?[#1]}

\newcommand{\choiceType}{\oplus}
\newcommand{\selType}[1]{\triangleleft[#1]}

\newcommand{\branchType}[1]{\triangleright[#1]}

\newcommand{\selTypeLabel}[1]{\triangleleft\, #1}
\newcommand{\branchTypeLabel}[1]{\triangleright\, #1}
\newcommand{\inactType}{\mathtt{end}}
\newcommand{\errType}{\mathtt{err}}
\newcommand{\recType}[1]{\mu #1}

\newcommand{\sorting}{\Gamma}
\newcommand{\typing}{\Delta}
\newcommand{\basis}{\Theta}
\newcommand{\sessions}{A}
\newcommand{\comp}{\cdot}
\newcommand{\judge}{\ \vdash\ }
\newcommand{\hasType}{\ \blacktriangleright\ }
\newcommand{\reqOrApp}[1]{\hat{#1}}

\newcommand{\commit}{{\it\sf commit}}
\newcommand{\roll}{{\it\sf roll}}
\newcommand{\abort}{{\it\sf abort}}
\newcommand{\commitAct}[1]{\commit . #1}

\definecolor{mygray}{rgb}{.90,.90,.90}
\newcommand{\graybox}[1]{\colorbox{mygray}{#1}}
\newcommand{\gb}[1]{\graybox{$#1$}}

 
\newcommand{\singleSession}[2]{\big(\nu #1:#2\big)}
\newcommand{\logged}[2]{\langle #2 \rangle \blacktriangleright} 
\newcommand{\loggedShort}[2]{\langle #2 \rangle \!\blacktriangleright\!} 

\newcommand{\genSession}{k}
\newcommand{\sessId}{r}

\newcommand{\conf}[2]{\langle #1 \rangle \blacktriangleright #2} 
\newcommand{\initConf}[2]{( #1 ,  #2):} 
 
\newcommand{\confcomp}{\parallel}
\newcommand{\imposed}[1]{\underline{#1}} 
\newcommand{\checkpointType}[1]{\tilde{#1}} 

\newcommand{\comError}{\texttt{com\_error}}
\newcommand{\rollError}{\texttt{roll\_error}}

\newcommand{\collContext}{\mathbb{C}}

\newcommand{\BarbInp}[1]{\Downarrow_{#1?}}
\newcommand{\BarbOut}[1]{\Downarrow_{#1!}}
\newcommand{\BarbSelect}[2]{\Downarrow_{#1\triangleleft #2}}
\newcommand{\BarbBranching}[2]{\Downarrow_{#1\triangleright #2}}

\newcommand{\Barb}[1]{\Downarrow_{#1}}

\makeatletter
\newcount\dirtree@lvl
\newcount\dirtree@plvl
\newcount\dirtree@clvl
\def\dirtree@growth{%
  \ifnum\tikznumberofcurrentchild=1\relax
  \global\advance\dirtree@plvl by 1
  \expandafter\xdef\csname dirtree@p@\the\dirtree@plvl\endcsname{\the\dirtree@lvl}
  \fi
  \global\advance\dirtree@lvl by 1\relax
  \dirtree@clvl=\dirtree@lvl
  \advance\dirtree@clvl by -\csname dirtree@p@\the\dirtree@plvl\endcsname
  \pgf@xa=.32cm\relax
  \pgf@ya=-.32cm\relax
  \pgf@ya=\dirtree@clvl\pgf@ya
  \pgftransformshift{\pgfqpoint{\the\pgf@xa}{\the\pgf@ya}}%
  \ifnum\tikznumberofcurrentchild=\tikznumberofchildren
  \global\advance\dirtree@plvl by -1
  \fi
}

\tikzset{
  dirtree/.style={
    growth function=\dirtree@growth,
    every node/.style={anchor=north},
    every child node/.style={anchor=west},
    edge from parent path={(\tikzparentnode\tikzparentanchor) |- (\tikzchildnode\tikzchildanchor)}
  }
}
\makeatother

\usepackage{listings}
\lstset{frame=single,basicstyle=\scriptsize\ttfamily,escapeinside={@}{@},xleftmargin=0.5em,xrightmargin=0.5em, commentstyle=\color[rgb]{0.25,0.25,0.25}, captionpos=b,frame=single,
  morekeywords={rl,crl,if,op,red,eq}, mathescape=true,
  ,morecomment=[l]{---},morecomment=[l]{***}
}
\begin{document}

\title[Checkpoint-Based Rollback Recovery in Session-based Programming]{Checkpoint-Based Rollback Recovery\\ in Session-based Programming}

\thanks{This research was funded in whole, or in part, 
by EPSRC EP/T006544/2, EP/K011715/1,  EP/K034413/1, EP/L00058X/1, EP/N027833/2, EP/N028201/1, EP/T014709/2, EP/Y005244/1, EP/V000462/1, EP/X015955/1, NCSS/EPSRC VeTSS and  Horizon EU TaRDIS 101093006,
Italian MUR PRIN 2020 project NiRvAna, the Italian MUR PRIN 2022 project DeKLA, by INdAM -- GNCS 2024 project \emph{Modelli composizionali per l'analisi di sistemi reversibili distribuiti (MARVEL)}, and the project SERICS (PE00000014) under the NRRP MUR program funded by the EU - NextGenerationEU}

\author[C. A. Mezzina]{Claudio Antares Mezzina\lmcsorcid{0000-0003-1556-2623}}[a]

\author[F. Tiezzi]{Francesco Tiezzi\lmcsorcid{0000-0003-4740-7521}}[b]

\author[N. Yoshida]{Nobuko Yoshida\lmcsorcid{0000-0002-3925-8557}}[c]

\address{Dipartimento di Scienze Pure e Applicate, Universit\`a di Urbino, Italy}
\address{Universit\`a degli Studi di Firenze, Italy}
\address{University of Oxford, United Kingdom}

\begin{abstract}
  To react to unforeseen circumstances or amend abnormal situations 
in communi\-cation-centric systems, 
programmers are in charge of  ``undoing'' the interactions 
which led to an undesired state. To assist this task, session-based
languages can be endowed with reversibility mechanisms.  
In this paper we propose a language enriched with programming facilities 
to \textit{commit} session interactions, to \textit{roll back} the computation to 
a previous commit point, and to \textit{abort} the session.
Rollbacks in our language always bring the system to previous 
visited states and  a rollback cannot bring the system back to a point prior 
to the last commit.
Programmers are relieved from the burden of ensuring that a rollback never 
restores a checkpoint imposed by a session participant different from 
the rollback requester. Such undesired situations are prevented at
design-time (statically) by relying on a decidable \emph{compliance} check 
at the type level, implemented in \maude. 
We show that the language satisfies error-freedom and progress of a session. 

\keywords{Reversible Computing \and Session Types 
\and \maude}
\end{abstract}

\maketitle
\section{Introduction}
\label{sec:intro}

Reversible computing \cite{wg1,wg2} has gained interest for its application to different fields: 
from modelling biological/chemical phenomena~\cite{KuhnU18}, 
to simulation~\cite{PerumallaP13}, 
debugging~\cite{survey-of-reverse-debugging,GiachinoLM14,LaneseSuS22} 
and modelling fault-tolerant systems~\cite{DanosK05,LMSS13,VassorS18}. 
Our interest focuses on this latter application and stems from the fact
that reversibility can be used to rigorously model, implement and revisit programming 
abstractions for reliable software systems. 

Recent works~\cite{BarbaneraLd17,MezzinaP17a,MezzinaP21,CastellaniDG17,JLAMP_TY15} 
have studied the effect of reversibility in communication-centric scenarios, as a 
way to correct faulty computations by bringing back the system to a previous consistent state.
In this setting, processes' behaviours are strongly disciplined by their types, 
prescribing the actions they have to perform within a \textit{session}.
A session consists of a structured series of message exchanges, whose flow can be controlled 
via conditional choices, branching and recursion. Correctness of communication is statically 
guaranteed by a framework based on a (session) type discipline \cite{HuttelLVCCDMPRT16}. 
None of the aforementioned works 
addresses  systems in which the participants 
can \textit{explicitly} abort the session, commit a computation and roll it back to a previous checkpoint.
In this paper, we aim at filling this gap. 
We explain below the distinctive aspects of our checkpoint-based rollback recovery approach.
%

\paragraph{Linguistic primitives to explicitly  program reversible sessions.}
We introduce three primitives to: 
(i)~\emph{commit} a session, preventing undoing the interactions performed so far along the session; 
(ii)~\emph{roll back} a session, restoring the last saved process checkpoints;
(iii)~\emph{abort} a session, to discard the session, and hence all interactions already performed in it, 
thus allowing another session of the same protocol to start with possible different participants. 
Notice that most proposals in the literature (e.g., 
\cite{BarbaneraDd14,BarbaneraDLd16,BarbaneraLd17}) only consider an abstract view, as they focus on reversible 
contracts (i.e., types). Instead, we focus on programming primitives at process level, and use types 
for guaranteeing a safe and consistent system evolution.  

\paragraph{Asynchronous commits.}
Our commit primitive does not require a session-wide synchronisation among all participants, 
as it is a local decision. 
However, its effect is on the whole session, as it affects the other session participants. 
This means that each participant can independently decide when to commit. 
Such flexibility comes at the cost of being error-prone, especially considering that the 
programmer has not only to deal with the usual forward  executions, but also with the backward 
ones. Our type discipline allows for ruling out programs which may lead to these errors.
%
The key idea of our approach is that \emph{a session participant executing a rollback action is interested 
in restoring the last checkpoint he/she has committed}.
For the success of the rollback recovery it is irrelevant whether 
the `passive' participants go back to their own last checkpoints. Instead, if the `active' participant is  
unable to restore the last checkpoint he/she has created, because it has been replaced by a checkpoint 
imposed by another participant, the rollback recovery is considered unsatisfactory. 

\bigskip

In our framework, programmers are relieved from the burden of 
ensuring the satisfaction of rollbacks,  
since undesired situations are prevented at design time (statically) 
by relying on a \emph{compliance} check at the type level. 
%
%
To this end, we introduce \emph{\cherry}\
(\underline{che}ckpoint-based \underline{r}ollback \underline{r}ecover\underline{y} 
\underline{pi}-calculus), a variant of the session-based 
\pic~\cite{YoshidaV07} 
enriched with rollback recovery primitives. 
%
%
A key difference with respect to the standard binary type discipline is the \textit{relaxation} of the duality requirement. The types of two session 
participants are not required to be dual, but they will be compared with respect to a compliance 
relation (as in \cite{BarbaneraLd18}), which also 
takes into account the effects of commit and rollback actions. Such relaxation 
also involves the requirements concerning selection and branching types, 
and those concerning  branches of conditional choices.
The \cherry\ type system is used to infer types 
of session participants, which are then combined together for the compliance check. 

Reversibility in \Cherry\ is \textit{controlled} via two specific primitives:
a rollback one telling when a reverse computation has to take place, 
and a commit one limiting the scope of a potential reverse computation.
This implies that the calculus is not fully reversible (i.e., backward computations are not 
always enabled), leading to have properties that are relaxed and different with respect to other reversible 
calculi~\cite{DanosK04,CristescuKV13,LaneseMS10,JLAMP_TY15}.
We prove that \cherry\ satisfies the following properties:
(i) a rollback always brings back the system to a previous visited state
and (ii) it is not possible to bring the computation back to a point prior to the last checkpoint, which 
implies that our commits have a persistent effect. 
 %
%
%
Concerning soundness properties, we prove that (a)~our compliance check is decidable,
(b)~compliance-checked \cherry\ specifications never lead to communication errors (e.g., a blocked communication where there is a 
receiver without the corresponding sender), and (c)~compliance-checked \cherry\ specifications never activate undesirable rollbacks (according to 
our notion of rollback recovery mentioned above). Property (b) resembles the type safety property
of session-based calculi (see, e.g.,~\cite{YoshidaV07}), while property (c) is a new property specifically 
defined for \cherry. The technical development of property proofs turns out to be more intricate than 
that of standard properties of session-based calculi, due to the
combined use of type and compliance checking. 
%
To demonstrate feasibility and effectiveness of our rollback recovery approach, 
we have concretely implemented the compliance check using the \maude\ 
\cite{maude2007} framework \added{(the code is available 
at \url{https://github.com/tiezzi/cherry-pi}).}


\paragraph{Outline.} 
Section~\ref{sec:motivation} illustrates the key idea of our rollback 
recovery approach.
Section~\ref{sec:cherry} introduces the \cherry\ calculus.
Section~\ref{sec:rollback_safety} introduces typing and compliance 
checking.
\modif{Section~\ref{sec:maude} illustrates the \maude{} implementation of the compliance checking.}
Section~\ref{sec:properties} presents the properties satisfied by \cherry.
\modif{Section~\ref{sec:case_study} shows the application of the \cherry\ approach to a speculative execution scenario.}
\modif{Section~\ref{sec:rw} discusses  related work. Finally,} 
Section~\ref{sec:conclusion} concludes 
the paper with future work. 
%
\modif{The appendix reports on the omitted proofs. 
}
%

\medskip

This paper is a revised and enhanced version of \cite{coordination}. In particular:
\begin{itemize}
\item Sec.~\ref{sec:cherry} has been extended with rules in Fig. \ref{fig:semantics_cherry_pi_aux} and \ref{fig:struct}, which are omitted in \cite{coordination}, to provide a complete account of the \cherry\ semantics.
\item Sec.~\ref{sec:rollback_safety} has been extended with omitted rules (Fig. \ref{fig:typeSyntax_cherry} and \ref{fig:typingSystem_exp} are new, while Fig. \ref{fig:typingSystem_proc} and \ref{fig:typeSemantics_ext} have been extended), to provide a full account of the \cherry\ typing discipline. Moreover, the section includes the proof of Theorem~\ref{th:decidability}.
\item Sec.~\ref{sec:properties} has been extended by including full proofs of the results regarding the properties of \cherry.
\item Sec.~\ref{sec:case_study} is new. It shows the \cherry\ approach at work on a new scenario to provide a better understanding of the practical application of \cherry\ and, in particular, of the \maude{} implementation of its type semantics.
\item Sec.~\ref{sec:rw} has been revised and expanded, including the discussion of more recent related work and a table (Tab.~\ref{tab:overview}) providing a comparison of the related approaches in the literature. 
\item Finally, more commentary and explanations have been added throughout the paper, and the whole presentation has been carefully refined. 
\end{itemize}

\section{A reversible video on demand service example}
\label{sec:motivation}
\label{sec:motivating_example}
We discuss the motivations underlying our work by introducing our running example, 
a Video on Demand (VOD) scenario.   
The key idea is that a rollback requester is satisfied only if her restored checkpoint 
was set by herself. 
\begin{figure}[t]
\hspace*{-.5cm}
\includegraphics[scale=.32]{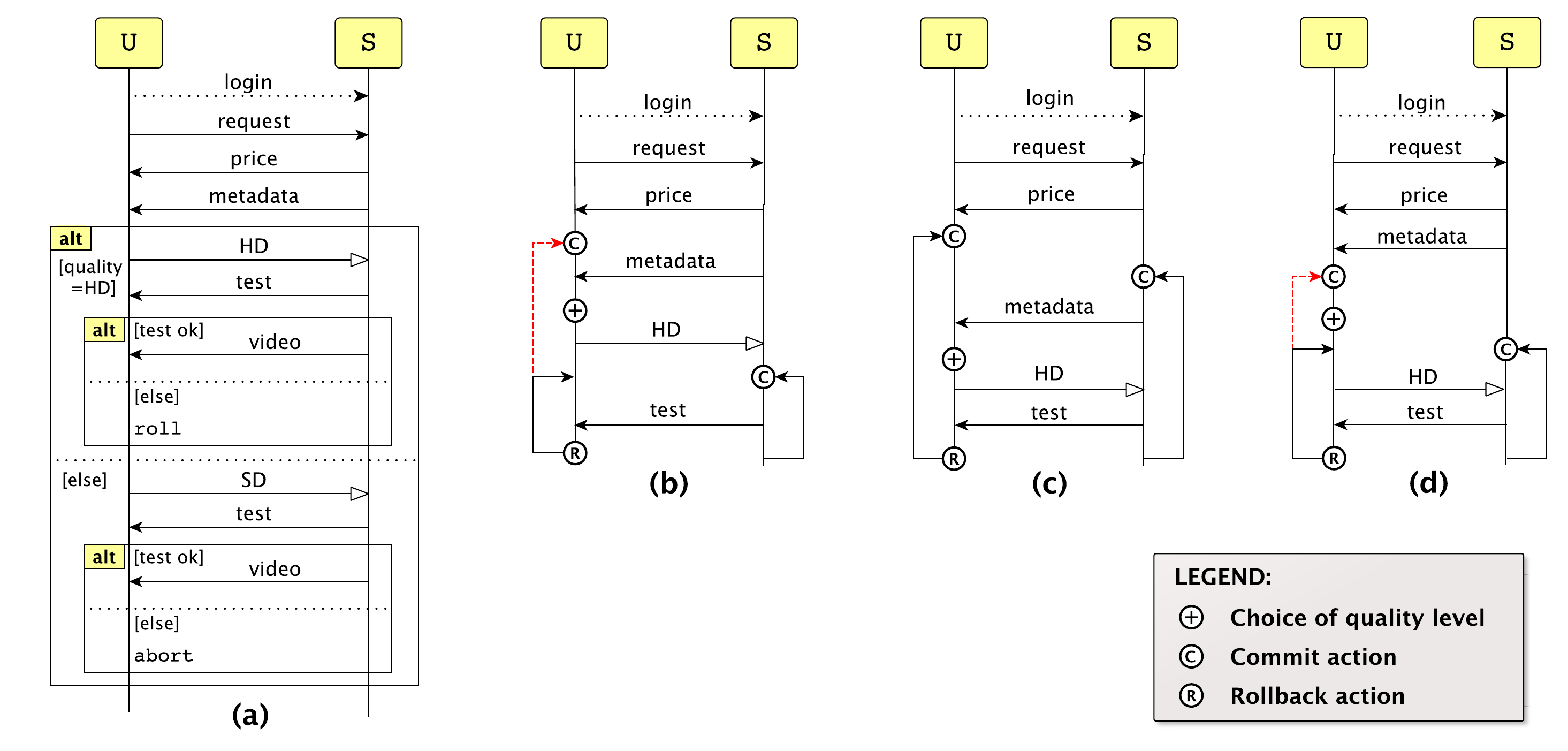}
\caption{VOD example: (a) a full description without commit actions; (b,d) runs with undesired rollback; 
(c) a run with satisfactory rollback. }
\label{fig:scenario}
\end{figure}
%
In Fig.~\ref{fig:scenario}(a), a service (\texttt{S}) offers to a user (\texttt{U}) videos with two different quality levels, namely high definition ({\small\textsf{HD}}) and standard definition ({\small\textsf{SD}}). 
After the \emph{login}, \texttt{U} sends her video \emph{request}, and receives the corresponding \emph{price} and \emph{metadata} (actors, directors, description, etc.) from \texttt{S}. According to this information, \texttt{U} selects the video quality. Then, she receives, first, a short \emph{test} video (to check the audio and video quality in her device) and, finally, the requested \emph{video}. If the vision of the {\small\textsf{HD}} test video is not satisfactory, \texttt{U} can roll back to her last checkpoint to possibly change the video quality, instead in the {\small\textsf{SD}} case \texttt{U} can abort the session. 

Let us now add commit actions as in the run shown in Fig.~\ref{fig:scenario}(b). 
After receiving the price, \texttt{U} commits, while \texttt{S} commits after the quality selection.  
In this scenario, however, if \texttt{U} activates the rollback, she is unable 
to go back to the checkpoint she set with her commit action because the actual 
effect of rollback is to restore the checkpoint set by the commit action performed by \texttt{S}.
\modif{Hence, \texttt{U} cannot use the rollback mechanism to undo her video quality choice and select the {\small\textsf{SD}} video.}

In the scenario in Fig.~\ref{fig:scenario}(c), instead, \texttt{S} commits after sending 
the price to \texttt{U}. In this case, no matter who first performed the commit action, the 
rollback results to be satisfactory. Also if \texttt{S} commits later, the checkpoint of 
 \texttt{U} remains unchanged, as \texttt{U} performed no other action between the two commits.
This would not be the case if both \texttt{U} and \texttt{S} committed after the communication of 
the metadata, as in Fig.~\ref{fig:scenario}(d). If \texttt{S} commits before \texttt{U}, 
no rollback issue arises, but if \texttt{U} commits first it may happen that her internal decision 
is taken before \texttt{S} commits. In this case, \texttt{U} would not be able to go back to the 
checkpoint set by herself, and she would be unable to change the video quality.   
 
These undesired rollbacks are caused by bad choices of commit points. 
We propose a compliance check that identifies  
these situations at design time.
%
\modifRev{Notably, our aim is to provide programming constructs and underlying mechanisms 
that would be easy to use and understand for the user. Thus, although it would be possible 
to define more expressive constructs, such as commit and rollback actions specifying 
a label as a parameter (in a way similar to the approaches introduced in \cite{LaneseMSS11,GiachinoLMT17}), we preferred to avoid them as they would make the system's behavior more intricate. 
In fact, such labelled actions would behave as a sort of \texttt{goto} jump, which 
can lead to unmanageable spaghetti code.}

\section{The \cherry\ calculus}\label{sec:cherry}

In this section, we introduce \cherry, 
a calculus (extending that in~\cite{YoshidaV07}) 
devised for studying 
sessions equipped with our checkpoint-based rollback recovery mechanism.

\subsection{Syntax}
The syntax of the \cherry\ calculus relies on the following base sets: 
\emph{shared channels} (ranged over by $a$), used to initiate sessions;
\emph{session channels} (ranged over by $s$), consisting of pairs of \emph{endpoints} 
(ranged over, with a slight abuse of notation, by $s$, $\ce{s}$) used by the two parties to interact within an established session;
\emph{labels} (ranged over by $l$), used to select and offer branching choices;
\emph{values} (ranged over by $v$), including booleans, integers and strings
(whose \emph{sorts}, ranged over by $S$, are $\boolType$, $\intType$ and $\strType$, respectively), 
which are exchanged within a session; 
\emph{variables} (ranged over by $x$, $y$, $z$), storing values and session endpoints;
 \emph{process variables} (ranged over by $X$), used for recursion. 

\emph{Collaborations}, ranged over by $C$, 
are given by the grammar in Fig.~\ref{fig:syntax_cherry_pi}. 
The key ingredient of the calculus is the set of actions for controlling the session rollback. 
Actions $\commit$, $\roll$ and $\abort$ are used, respectively, to commit a session 
(producing a checkpoint for each session participant), 
to trigger the session rollback (restoring the last committed checkpoints) 
or to abort the whole session. We discuss below the other constructs of the calculus, 
which are those typically used for session-based programming \cite{HondaVK98}.  
A \cherry\ collaboration is a collection \modifRev{(more specifically, a parallel composition)} of \emph{session initiators}, i.e. terms ready to initiate sessions 
by synchronising on shared channels.
A synchronisation of two initiators $\requestAct{a}{x}{P}$ and $\acceptAct{a}{y}{Q}$
causes the generation of a fresh session channel, whose endpoints replace 
variables $x$ and $y$ in order to be used by the triggered processes $P$ and $Q$, 
respectively, for later communications. No subordinate sessions can be initiated within a running session.

\begin{figure}[t]
\centering
	\begin{tabular}{@{}r@{\ }c@{\ }l@{\qquad }l@{}}
	$C$ & ::= & & \textbf{Collaborations} \\
	&             & $\requestAct{a}{x}{P}$ \ $\mid$ \ $ \acceptAct{a}{x}{P}$ 
	                   \ $\mid$ \
	                   $C_1 \!\mid\! C_2$ 
	                   & \ \ request, accept, parallel
	\\[.3cm]
	$P$ & ::= & & \textbf{Processes} \\
	&             & $\sendAct{x}{e}{P}$ 
		        \ $\mid$ \ $\receiveAct{x}{y:S}{P}$ 
		        & \ \ output, input
	\\
	& $\mid$ & $\selectAct{x}{l}{P}$ 
		        \ $\mid$ \ $ \branchAct{x}{\branch{l_1}{P_1} \branchSep \ldots \branchSep \branch{l_n}{P_n}}$  
		        & \ \ selection, branching
	\\	
	& $\mid$ & $\ifthenelseAct{e}{P_1}{P_2}$ 
	                   \ $\mid$ \ $X$ 
	                   \ $\mid$ \ $\recAct{X}{P}$
	                   \ $\mid$ \ $\inact$
	                   & \ \ choice, recursion, inact 
	\\
	& $\mid$ &  $\commitAct{P}$
	\ $\mid$ \ $\roll$
	\ $\mid$ \ $\abort$  
        & \ \ commit, roll, abort                   
	\\[.3cm]	
	$e$ & ::= &\ \  $v$ \ $\mid$ \ $+(e_1,e_2)$ \ $\mid$ \ $\wedge(e_1,e_2)$ \ $\mid$ \ \ldots  
	& \textbf{Expressions} 
	\\[.1cm]
	\hline
	\end{tabular}
	\caption{\Cherry\ syntax.}
	\label{fig:syntax_cherry_pi}
\end{figure}

When a session is started, each participant executes a \emph{process}. Processes are built up from the  empty process $\inact$ 
\modifRev{(which can do nothing)}
and basic actions by means of 
action prefix $\_\,.\,\_\,$ \modifRev{(which allows the process on the right of the $.$ operator to proceed 
once the action on the left of the $.$ operator is executed)},
conditional choice $\ifthenelseAct{e}{\_}{\_}$ \modifRev{(which has the usual meaning)}, 
and recursion $\recAct{X}{\_}$ \modifRev{(which behaves as its process argument where the occurrences 
of the process variable $X$ are replaced by the recursion process itself)}.
Actions $\send{x}{e}$ and \mbox{$\receive{y}{z:S}$}  denote output and input 
via session endpoints replacing $x$ and $y$, respectively. 
These communication primitives realise the standard synchronous 
message passing, where messages result from the evaluation of \emph{expressions},
which are defined by means of standard operators on boolean, integer and string values.
Variables that are arguments of input actions are (statically) typed by sorts. 
There is no need for statically typing the variables occurring as arguments of session initiating actions, as they are always 
replaced by session endpoints. 
Notice that in \cherry\ the exchanged values cannot be endpoints, 
meaning that session delegation (i.e., channel-passing) is not 
considered\footnote{\modifRev{Notably, even if session delegation is not supported, 
we cannot just consider single binary sessions avoiding the notion of 
collaborations. In fact, collaborations allows us to consider 
non-determinism at the level of session establishment (e.g., think of a client 
and two servers providing the same service).}}.
Actions $\select{x}{l}$ and $\branchAct{x}{\branch{l_1}{P_1} \branchSep \ldots \branchSep \branch{l_n}{P_n}}$ denote selection and branching respectively (where \linebreak[5] $l_1$, \ldots, $l_n$ are pairwise distinct). 

\begin{exa}\label{example_syntax}
Let us consider the VOD example informally introduced in Sec.~\ref{sec:motivating_example}.
The scenario described in Fig.~\ref{fig:scenario}(a) with commit actions placed as in 
Fig.~\ref{fig:scenario}(b) is rendered in \cherry\ as
$
C_{\texttt{US}} \ = \ \requestPrefix{login}{x}.\ P_\texttt{U} \ \mid\ \acceptPrefix{login}{y}.\ P_\texttt{S}
$,
where:
$$
\begin{array}{r@{\ \ }c@{ \ \ }l}
P_\texttt{U} & = & 
\sendAct{x}{v_{\textit{\textsf{req}}}}{\,}
\receiveAct{x}{x_{\textsf{\textit{price}}}:\intType}{\,}
\commitAct{\,}
\receiveAct{x}{x_{\textit{\textsf{meta}}}:\strType}{\,}
\ifPi\
(f_{\textit{\textsf{eval}}}(x_{\textit{\textsf{price}}},x_{\textit{\textsf{meta}}}))
\\
&&
\quad
\thenPi\ \selectAct{{x}}{l_{\textit{\textsf{HD}}}}{\,}
\receiveAct{x}{x_{\textit{\textsf{testHD}}}:\strType}{\,}\\
&&
\qquad\quad
(\ifPi\ (f_{\textit{\textsf{HD}}}(x_{\textit{\textsf{testHD}}}))\ \thenPi\ \receiveAct{x}{x_{\textit{\textsf{videoHD}}}:\strType}{\inact}\ \elsePi\ \roll)
\\
&&
\quad
\elsePi\ \selectAct{{x}}{l_{\textit{\textsf{SD}}}}{\,}
\receiveAct{x}{x_{\textit{\textsf{testSD}}}:\strType}{\,}\\
&&
\qquad\quad
(\ifPi\ (f_{\textit{\textsf{SD}}}(x_{\textit{\textsf{testSD}}}))\ \thenPi\ \receiveAct{x}{x_{\textit{\textsf{videoSD}}}:\strType}{\inact}\ \elsePi\ \abort)
\\[.5cm]
P_\texttt{S} & = & 
\receiveAct{y}{y_{\textit{\textsf{req}}}:\strType}{\,}
\sendAct{y}{f_{\textit{\textsf{price}}}(y_{\textit{\textsf{req}}})}{\,}
\sendAct{y}{f_{\textit{\textsf{meta}}}(y_{\textit{\textsf{req}}})}{\,}
\\
&&
\branchAct{y}{\
\branch{l_{\textit{\textsf{HD}}}\!}{\!\commitAct{\,}
\sendAct{y}{f_{\textit{\textsf{testHD}}}(y_{\textit{\textsf{req}}})}{\,}
\sendAct{y}{f_{\textit{\textsf{videoHD}}}(y_{\textit{\textsf{req}}})}{\,}
\inact}
\ \branchSep \\
&&
\qquad\ \
\branch{l_{\textit{\textsf{SD}}}\!}{\!\commitAct{\,}
\sendAct{y}{f_{\textit{\textsf{testSD}}}(y_{\textit{\textsf{req}}})}{\,}
\sendAct{y}{f_{\textit{\textsf{videoSD}}}(y_{\textit{\textsf{req}}})}{\,}
\inact}\
}
\end{array}
$$
Notice that expressions used for decisions and computations are abstracted by 
relations $f_{\textit{\textsf{n}}}(\cdot)$, whose definitions are left unspecified.
Considering the placement of commit actions depicted in Fig.~\ref{fig:scenario}(c), 
the \cherry\ specification of the service's process becomes:
$$
\begin{array}{l}
\receiveAct{y}{y_{\textit{\textsf{req}}}:\strType}{\,}
\sendAct{y}{f_{\textit{\textsf{price}}}(y_{\textit{\textsf{req}}})}{\,}
\commitAct{\,}
\sendAct{y}{f_{\textit{\textsf{meta}}}(y_{\textit{\textsf{req}}})}{\,}
\\
\branchAct{y}{\
\branch{l_{\textit{\textsf{HD}}}\!}{\!
\sendAct{y}{f_{\textit{\textsf{testHD}}}(y_{\textit{\textsf{req}}})}{\,}
\sendAct{y}{f_{\textit{\textsf{videoHD}}}(y_{\textit{\textsf{req}}})}{\,}
\inact}
\ \branchSep \\
\qquad\ \
\branch{l_{\textit{\textsf{SD}}}\!}{\!
\sendAct{y}{f_{\textit{\textsf{testSD}}}(y_{\textit{\textsf{req}}})}{\,}
\sendAct{y}{f_{\textit{\textsf{videoSD}}}(y_{\textit{\textsf{req}}})}{\,}
\inact}\
}
\end{array}
$$

Finally, considering the placement of commit actions depicted in Fig.~\ref{fig:scenario}(d), 
the \cherry\ specification of the user's process becomes:
$$
\begin{array}{l}
\sendAct{x}{v_{\textit{\textsf{req}}}}{\,}
\receiveAct{x}{x_{\textsf{\textit{price}}}:\intType}{\,}
\receiveAct{x}{x_{\textit{\textsf{meta}}}:\strType}{\,}
\commitAct{\,}
\ifPi\ (f_{\textit{\textsf{eval}}}(x_{\textit{\textsf{price}}},x_{\textit{\textsf{meta}}}))\
\thenPi\ \ldots
\end{array}
$$
\end{exa}
\medskip

\subsection{Semantics}
The operational semantics of \cherry\ is defined for \emph{runtime} terms, generated by the extended syntax of the calculus in Fig.~\ref{fig:syntax_cherry_pi_ext} (new constructs are highlighted by a grey 
background).
%
We use \modifRev{$\genSession$ to denote \emph{generic session endpoints}, i.e. $s$ or $\ce{s}$, and}
$\sessId$ to denote \emph{session identifiers}, i.e. session endpoints and variables.
Those runtime terms that can be also generated by the grammar in Fig.~\ref{fig:syntax_cherry_pi}
are called \emph{initial collaborations}.

At collaboration level, two constructs are introduced:
\mbox{$\singleSession{s}{C_1}\ C_2$ } represents a \emph{session} along the channel $s$ 
with associated starting checkpoint $C_1$ (corresponding to the collaboration that has initialised the session)
and code $C_2$;  \mbox{$\logged{\genSession}{P_1}P_2$} represents a \emph{log}
storing the checkpoint $P_1$ associated to the code $P_2$.
At process level, the only difference is that session identifiers $\sessId$ are used as first argument of communicating actions.  
\begin{figure}[t]
\centering
	\begin{tabular}{@{}r@{\ \ }c@{\ \ }l@{\ \ \ }l@{}}
	$C$ & ::= & $\requestAct{a}{x}{P}$ \ $\mid$ \ $ \acceptAct{a}{x}{P}$ 
	                   \ $\mid$ \
	                   $C_1 \!\mid\! C_2$ 
	                   \ $\mid$ \
	                   \graybox{$\singleSession{s}{C_1}\ C_2$}
	                   \ $\mid$ \ \graybox{$\logged{\genSession}{P_1}P_2$}    	                   
	                   & \textbf{Collaborations}
	\\[.2cm]
	$P$ & ::= & $\sendAct{\gb{\sessId\!}}{e}{P}$
		         $\mid$ \ $\receiveAct{\gb{\sessId\!}}{y:S}{P}$	
		         $\mid$ \ $\selectAct{\gb{\sessId\!}\!}{l}{P}$ 
		         $\mid$ \ $\branchAct{\gb{\sessId\!}\!}{\branch{l_1\!\!}{\!\!P_1} 
		                          \branchSep\! \ldots\! \branchSep
                                          \branch{l_n\!\!}{\!\!P_n}}
                                        \mid \cdots$
			& \textbf{Processes} 
	\\[.1cm]
	\hline
	\end{tabular}
	\caption{\Cherry\ runtime syntax (the rest of processes $P$ and 
expressions $e$ are as in Fig.~\ref{fig:syntax_cherry_pi}).}
	\label{fig:syntax_cherry_pi_ext}
\end{figure}

\modifRev{\textit{Bindings} are defined as follows: 
$\requestAct{a}{x}{P}$, $\acceptAct{a}{x}{P}$, and
\mbox{$\receiveAct{\sessId}{x:S}{P}$}
bind variable $x$ in $P$;
\mbox{$\singleSession{s}{C_1}\ C_2$} binds session endpoints $s$ and $\ce{s}$ in $C_2$
(in this respect, it acts similarly to the restriction of \pic, but its scope cannot be extended/extruded 
to avoid involving processes that are not part of the session in the rollback effect);
and $\recAct{X}{P}$ binds process variable $X$ in $P$. 
The occurrence of a name (where name stand for variable, process variable and session endpoint)
is \textit{free} if it is not bound; we assume that bound names are pairwise distinct. 
Two terms are \textit{alpha-equivalent} if one can be obtained from the other by consistently 
renaming bound names; as usual, we identify terms up to alpha-equivalence. 
Communication gives rise to \textit{substitutions} of variables with values:
we denote with $P\subst{v}{x}$ the process obtained by replacing each free 
occurrence of the variable $x$ in $P$ by the value $v$. 
Similarly, $P\subst{Q}{X}$ (resp. $P\subst{\genSession}{x}$)
denotes the process obtained by replacing each free 
occurrence of $X$ (resp. $x$) in $P$ by the process $Q$ (resp. generic session identifier $\genSession$).
The semantics of the calculus is defined for \emph{closed} terms, i.e. terms without free 
variables and process variables. 
}



Not all processes allowed by the extended syntax correspond to meaningful
collaborations. In a general term the processes stored in logs 
may not be consistent with the computation that has taken place.
We get rid of such malformed terms, as we
will only consider those runtime terms, called  \emph{reachable} collaborations, 
obtained by means of reductions from initial collaborations.
The operational semantics of \cherry\ is given in terms of a standard 
\emph{structural congruence} $\congr$ (\modif{given in Fig.~\ref{fig:struct}})
and a \emph{reduction} relation $\fwbwred$ 
given as the union of the \emph{forward reduction} relation $\fwred$ 
and \emph{backward reduction} relation $\bwred$.
%
%
The definition of the relation $\fwred$ over closed 
collaborations relies on an auxiliary labelled relation $\auxrel{\actionLab}$ over processes that specifies 
the actions that processes can initially perform and the continuation process obtained after each such action. 
\modif{We consider all reduction relations closed under structural congruence.}
%
%
Given a reduction relation $\mathcal{R}$, we will indicate with  $\mathcal{R}^{+}$ and  
$\mathcal{R}^{*}$ respectively the \textit{transitive} and the \textit{reflexive-transitive} closure of $\mathcal{R}$.

The operational rules defining the auxiliary labelled relation are in Fig.~\ref{fig:semantics_cherry_pi_aux}. 
Action label $\actionLab$ stands for either 
$\send{k}{v}$,
$\receive{k}{x}$,
$\select{k}{l}$,
$\branching{k}l$, 
$\commitLab$,
$\rollLab$,
$\abortLab$,
or $\ite$.
The meaning of the rules is straightforward, as they just produce as labels the actions currently 
enabled in the process. In doing that, expressions of sending actions and conditional choices are 
evaluated (auxiliary function \mbox{$\expreval{e}{v}$} says that closed expression $e$ evaluates to 
value $v$). 

\begin{figure*}[t]
	\centering
	\begin{tabular}{@{}l@{\quad\ \ }r@{}}
	$\sendAct{k}{e}{P} \auxrel{\send{k}{v}} P$
	\ \ ($\expreval{e}{v}$) 
	\  \rulelabel{P-Snd}
	&
	$\receiveAct{k}{x:S}{P} \auxrel{\receive{k}{x}} P$
	\  \rulelabel{P-Rcv}
	\\[.3cm]
	$\selectAct{{k}}{l}{P} \auxrel{\select{k}{l}} P$
	\  \rulelabel{P-Sel}
	&
	$\branchAct{k}{\branch{l_1\!\!}{\!\!P_1}  \branchSep \ldots \branchSep \branch{l_n\!\!}{\!\!P_n}}
	\auxrel{\branching{k}l_i} P_i$
	\ ($1 \!\leq\! i \!\leq\! n$) 
	\ \rulelabel{P-Brn}	
	\\[.3cm]
\multicolumn{2}{c}{
\modif{
	\raisebox{.4cm}{
	\begin{tabular}{@{}l@{}}
	$\ifthenelseAct{e}{P_1}{P_2} \auxrel{\ite} P_1$
	\quad ($\expreval{e}{\ctrue}$) 
	\\[.3cm]
	$\ifthenelseAct{e}{P_1}{P_2} \auxrel{\ite} P_2$ 
	\quad ($\expreval{e}{\cfalse}$) 
	\end{tabular}
	}
	\quad
	\raisebox{.4cm}{
	\begin{tabular}{@{\!}l}
	\rulelabel{P-IfT} 
	\\[.2cm]
	\rulelabel{P-IfF}
	\end{tabular}
         }
         \hspace{-.44cm}
}
}
	\\[.3cm]	
	$\commitAct{P} \auxrel{\commitLab} P$ 
	\ \ \rulelabel{P-Cmt}
	&
	$\roll \auxrel{\rollLab} \inact$ 
	\ \ \rulelabel{P-Rll}
	\qquad\qquad
	$\abort \auxrel{\abortLab} \inact$ 
	\ \  \rulelabel{P-Abt}
	\\[.1cm]
	\hline
	\end{tabular}
	\caption{\Cherry\ semantics: auxiliary labelled relation.}
	\label{fig:semantics_cherry_pi_aux}
\end{figure*}

\begin{figure}
\modif{
\begin{align*}	
& 	
C_1 \mid C_2 \equiv C_2 \mid C_1 \qquad\
   	(C_1 \mid C_2) \mid C_3 \equiv C_1 \mid (C_2 \mid C_3)
 \qquad \recAct{X}{P} \equiv P[\recAct{X}{P} / X]&
\end{align*}
\caption{Structural congruence for \cherry}
\label{fig:struct}
}
\end{figure}

The operational rules defining the reduction relation $\fwbwred$
are reported in Fig.~\ref{fig:semantics_cherry_pi}. 
We comment on salient points.
Once a session is created 
\modifRev{(via a synchronisation along a shared channel $a$),}
its initiating collaboration is stored in the 
session construct (rule \rulelabel{F-Con}); 
\modifRev{note how a fresh session channel $s$ is generated as the result 
of the interaction.}
%
%
\modifRev{Communication, branching selection and internal conditional choice proceed as usual, without affecting logs.
Specifically, communication takes place when an output and an input action synchronise along a session channel (rule \rulelabel{F-Com}); 
the delivery of the value $v$ (resulting from the evaluation of the expression argument of the output action, rule \rulelabel{P-Snd}) is expressed by the application of a substitution $\subst{v}{x}$ to the continuation of the receiving process. 
Branching selection results from the synchronisation on a label $l$ (rule \rulelabel{F-Lab});
note that only one of the branches is selected while the remaining ones are discarded (rule \rulelabel{P-Brn}).
Internal conditional choice behaves in a standard way (rule \rulelabel{F-If}); as usual, the choice depends by 
the positive (rule \rulelabel{P-IfT}) or negative (rule \rulelabel{P-IfF}) evaluation of the boolean expression
argument of the conditional choice construct. 
}
A commit action updates the checkpoint of a session, by replacing the processes stored in the logs of the two involved parties (rule \rulelabel{F-Cmt}). 
Notably, this form of commit is asynchronous as it does not require the passive participant 
to explicitly synchronise with the active participant by means of a primitive for accepting the commit. 
On the other hand, under the hood, a low-level implementation of this mechanism would synchronously  
update the logs of the involved parties.
Conversely, a rollback action restores the processes in the two logs (rule \rulelabel{B-Rll}). 
The abort action (rule \rulelabel{B-Abt}), instead, kills the session and restores the collaboration stored in the session 
construct formed by the two initiators that have started the session; this allows the initiators to be involved in new sessions.
The other rules simply extend the standard parallel and
restriction rules to forward and backward relations. 

\begin{figure*}[t]
	\centering
	\small
	\begin{tabular}{@{}l@{}}
        $\requestAct{a}{x_1}{P_1} \mid \acceptAct{a}{x_2}{P_2}$
	$\fwred$
	$\singleSession{s}{(\requestAct{a}{x_1}{P_1} \mid \acceptAct{a}{x_2}{P_2})}$
	\hspace*{3cm}
	\rulelabel{F-Con}
	\\ 
	\hspace*{3.4cm}
        $(\loggedShort{\ce{s}}{P_1\subst{\ce{s}}{x_1} }	P_1\subst{\ce{s}}{x_1}
        \mid 
	\loggedShort{s}{P_2\subst{s}{x_2}}	P_2\subst{s}{x_2} )\ $
	\\[.3cm]	
	$
	\infer[$\rulelabel{F-Com}$]{
	\loggedShort{\ce{\genSession}}{Q_1} P_1 
	\mid 
	\loggedShort{\genSession}{Q_2} P_2
	\ \fwred\ 
	\loggedShort{\ce{\genSession}}{Q_1} P_1' 
	\mid 
	\loggedShort{\genSession}{Q_2} P_2'\subst{v}{x}	
	}
	{
	P_1 \auxrel{\send{\ce{k}}{v}} P_1'
	\qquad
	P_2 \auxrel{\receive{k}{x}} P_2'	
	}
	$
	\hspace{1.7cm}
		$
	\infer[$\rulelabel{F-Par}$]{C_1 \!\mid\! C_2 \ \fwred\ C_1' \!\mid\! C_2}
	{C_1 \ \fwred\ C_1'}
	$		
       	\\[.3cm]	
	$
	\infer[$\rulelabel{F-Lab}$]{
	\loggedShort{\ce{\genSession}}{Q_1} P_1 
	\mid 
	\loggedShort{\genSession}{Q_2} P_2
	\ \fwred\ 
	\loggedShort{\ce{\genSession}}{Q_1} P_1' 
	\mid 
	\loggedShort{\genSession}{Q_2} P_2'
	}
	{
	P_1 \auxrel{\select{\ce{k}}{l}} P_1'
	\qquad
	P_2 \auxrel{\branching{k}l} P_2'	
	}
	$
	\hspace{.9cm}
	$
	\infer[$\rulelabel{F-Res}$]{\singleSession{s}{C_1}\,C_2 \ \fwred\ \singleSession{s}{C_1}\,C_2'}
	{C_2 \ \fwred\ C_2' }
	$
	\\[.3cm]		
	$
	\infer[$\rulelabel{F-Cmt}$]{
	\logged{\ce{\genSession}}{Q_1} P_1 
	\ \mid \ 
	\logged{\genSession}{Q_2} P_2
	\ \fwred\ 
	\logged{\ce{\genSession}}{P_1'} P_1' 
	\ \mid \ 
	\logged{\genSession}{P_2} P_2
	}
	{
	P_1 \auxrel{\commitLab} P_1'
	}
	$
	\hspace{1.6cm}
		$
	\infer[$\rulelabel{F-If}$]{
	\logged{\genSession}{Q} P
	\ \fwred\ 
	\logged{\genSession}{Q} P' 
	}
	{
	P \auxrel{\ite} P'
	}
	$		
	\\[.4cm]
	$
	\infer[$\rulelabel{B-Rll}$]{
	\logged{\ce{\genSession}}{Q_1} P_1 
	\ \mid \ 
	\logged{\genSession}{Q_2} P_2
	\ \bwred\ 
	\logged{\ce{\genSession}}{Q_1} Q_1
	\ \mid \ 
	\logged{\genSession}{Q_2} Q_2
	}
	{
	P_1 \auxrel{\rollLab} P_1'
	}
	$	
	\hspace{1.2cm}
	$
	\infer[$\rulelabel{B-Par}$]{C_1 \!\mid\! C_2 \ \bwred\ C_1' \!\mid\! C_2}
	{C_1 \ \bwred\ C_1'}
	$
	\\[.4cm]		
	$
	\infer[$\rulelabel{B-Abt}$]{
	\singleSession{s}{C}(
	\logged{\ce{\genSession}}{Q_1} P_1 
	\ \mid \ 
	\logged{\genSession}{Q_2} P_2)
	\ \bwred\ C
	}
	{
	P_1 \auxrel{\abortLab} P_1'
	}
	$	
	\hspace{1.4cm}
	$
	\infer[$\rulelabel{B-Res}$]{\singleSession{s}{C_1}C_2 \ \bwred\ \singleSession{s}{C_1}C_2'}
	{
	C_2 \ \bwred\ C_2'
	}
	$
	\\[.1cm]
	\hline
	\end{tabular}	
	\caption{\Cherry\ semantics: forward and backward reduction relations. }
	\label{fig:semantics_cherry_pi}
\end{figure*}

\begin{exa}\label{example_semantics}
Consider the first \cherry\ specification of the VOD scenario given in Ex.~\ref{example_syntax}.
In the initial state $C_{\texttt{US}}$ of the collaboration, \texttt{U} and \texttt{S} can synchronise in order to 
initialise the session, thus evolving to
$
C_{\texttt{US}}^1 \ = \ 	\singleSession{s}{C_{\texttt{US}}}\ 
	(\logged{\ce{s}}{P_\texttt{U}\subst{\ce{s}}{x} } P_\texttt{U}\subst{\ce{s}}{x} \ \mid\ 
  	 \logged{s}{P_\texttt{S}\subst{s}{y}}	P_\texttt{S}\subst{s}{y} )
$.
 
Let us consider now a possible run of the session. After three reduction steps, \texttt{U} executes the commit action, obtaining the following 
runtime term:
$$
\begin{array}{@{\!\!}r@{\, }c@{\, }l}
C_{\texttt{US}}^2 & = &	\singleSession{s}{C_{\texttt{US}}}\ (\logged{}{P_\texttt{U}'} P_\texttt{U}' \
\mid\ \logged{}{P_\texttt{S}'} P_\texttt{S}' )
\\[.2cm]
P_\texttt{U}' & = & 
\receiveAct{\ce{s}}{x_{\textit{\textsf{meta}}}:\strType}{\,}
\ifPi\, (f_{\textit{\textsf{eval}}}(v_{\textit{\textsf{price}}},x_{\textit{\textsf{meta}}}))\, \thenPi \ldots
\ \
P_\texttt{S}' = 
\sendAct{s}{f_{\textit{\textsf{meta}}}(v_{\textit{\textsf{req}}})}{\,}
\branchAct{y}{\ \ldots \ }
\end{array}
$$
After four further reduction steps, \texttt{U} chooses the {\small\textsf{HD}} video quality and \texttt{S} commits as well; the resulting runtime 
collaboration is as follows: 
$$
\begin{array}{r@{\ \ }c@{\ \ }l}
C_{\texttt{US}}^3 & = & 	\singleSession{s}{C_{\texttt{US}}}\ (\logged{}{P_\texttt{U}''} P_\texttt{U}'' \ \mid\ \logged{}{P_\texttt{S}''} P_\texttt{S}'' )
\\[.2cm]
P_\texttt{U}'' & = & 
\receiveAct{\ce{s}}{x_{\textit{\textsf{testHD}}}:\strType}{\,}
\ifPi\ (f_{\textit{\textsf{HD}}}(x_{\textit{\textsf{testHD}}}))\ \thenPi\ \receiveAct{\ce{s}}{x_{\textit{\textsf{videoHD}}}:\strType}{\inact}\ \elsePi\ \roll
\\[.2cm]
P_\texttt{S}'' & = & 
\sendAct{s}{f_{\textit{\textsf{testHD}}}(v_{\textit{\textsf{req}}})}{\,}
\sendAct{s}{f_{\textit{\textsf{videoHD}}}(v_{\textit{\textsf{req}}})}{\,}
\inact
\end{array}
$$
In the next reductions, \texttt{U} evaluates the test video and decides to revert the session execution,
resulting in 
$
C_{\texttt{US}}^4 \ = \ 	\singleSession{s}{C_{\texttt{US}}}\ (\logged{}{P_\texttt{U}''} \roll \ \mid\ \logged{}{P_\texttt{S}''} 
\sendAct{s}{f_{\textit{\textsf{videoHD}}}(v_{\textit{\textsf{req}}})}{\,}\inact )
$.
The execution of the roll action restores the checkpoints $P_\texttt{U}''$ and $P_\texttt{S}''$, that is 
\mbox{$C_{\texttt{US}}^4 \ \bwred\ C_{\texttt{US}}^3$}. 
After the rollback, \texttt{U} is not able to change 
the video quality as her own commit point would have permitted; in fact, it holds
\mbox{$C_{\texttt{US}}^4 \ \bwred\!\!\!\!\!\!\!/\ \ \ \ C_{\texttt{US}}^2$}.
\end{exa}

\modifRev{It is worth noticing that our approach does not guarantee the avoidance 
of infinite loops due to taking the same decisions after rollbacks. However, \cherry\
allows the programmer to specify appropriate conditions to exit from these loops. More specifically, when a  checkpoint is restored, the state of the interaction protocol is reverted at the committed configuration and, hence, the variables substituted by the undone interactions (see rule [F-COM] in Fig.~\ref{fig:semantics_cherry_pi}) are restored as well. Anyway, internal choice decisions result from the evaluation of expressions that predicate not only on these (protocol) variables, but also on information concerning the system state and decisions taken by external actors (e.g., humans) that is not subject to the reversibility effect. Relying on this kind of information is possible to make a decision to exit from a loop. These decisions are abstracted in our example scenarios by means of relations whose definitions are left unspecified.
For example, in the VOD scenario, the selection of the video quality (HD vs. SD) taken by the user depends on video's price and metadata (stored in the corresponding protocol variables) but also on the budget and the personal opinion of the user herself; on the whole, the selection decision is abstracted by means of the relation $f_{\textit{\textsf{eval}}}()$, which will return a different result after the rollback even if invoked with the same input data. In fact, despite the interaction protocol has been reverted, the user remembers that the HD video quality was not a good choice. 
There are different ways to keep track of information of a reverted computation.
One could use alternatives \cite{LMSS13} or an external oracle \cite{Vassor21}. 
The management of the information outside the one stored in the protocol variables is out of the scope of this work; we leave this for future investigation.}
\section{Rollback safety}
\label{sec:rollback_safety}

The operational semantics of \cherry\ provides a description of the functioning of the 
primitives for programming the checkpoint-based rollback recovery in a session-based language. 
However, as shown in Ex.~\ref{example_semantics}, it does not guarantee high-level 
properties about the safe execution of the rollback. 
To prevent such undesired rollbacks, we propose the use of \emph{compliance checking}, 
to be performed at design time. 
This check is not done 
on the full system specification, but only at the level of session types. 
\modifRev{In this way, we abstract as sorts the values exchanged via 
communication actions at process level. This not only permits formulating the compliance 
check more simply but, even more importantly, it ensures that the semantics produces finite LTSs, 
thus making the compliance check decidable.}

\subsection{Session types and typing}
The syntax of the \cherry\ \emph{session types} $\typeT$ is defined 
in Fig.~\ref{fig:typeSyntax_cherry}. 
\begin{figure}[t]
	\centering
	\small
	\modif{
	\begin{tabular}{@{}r@{\ }c@{\ }l@{\quad}l@{}}
	\\[-.7cm]
	$S$ & ::= & $\boolType$ \quad $\mid$ \quad $\intType$ \quad $\mid$ \quad $\strType$  
	& \textbf{Sorts}	
	\\[.2cm]	
	$\typeT$ & ::= & $\outType{S}.\typeT$ \ \ $\mid$ \ \
	$\inpType{S}.\typeT$ \ \  $\mid$ \ \ 
	$\selType{l}.\typeT$ 	 \ \  $\mid$ \ \ 
	$\branchType{\branch{l_1}{\typeT_1}, \ldots, \branch{l_n}{\typeT_n}}$	
	&  \textbf{Session types}
	\\[.05cm]
	& $\mid$ &  
	$\typeT_1 \choiceType \typeT_2$ \ \ $\mid$ \ \ 
	$t$ \ \ $\mid$ \ \ $\recType{t}.\typeT$ \ \ $\mid$ \ \ 
	$\inactType$ \ \ $\mid$ \ \ 
        \added{$\errType$} \ \ $\mid$ \ \ 
	$\commitType.\typeT$ \ \ $\mid$ \ \ 
	$\rollType$ \ \ $\mid$ \ \ 
	$\abortType$
	\\[.1cm]
	\hline
	\end{tabular}}
	\caption{\Cherry\ type syntax. 
	}
	\label{fig:typeSyntax_cherry}
\end{figure}
Type $\outType{S}.\typeT$ represents the behaviour of first outputting a value of sort $S$
(i.e., $\boolType$, $\intType$ or $\strType$), 
then performing the actions prescribed by type $\typeT$. Type $\inpType{S}.\typeT$ is the dual one, where a value is received instead of sent. 
Type $\selType{l}.\typeT$ represents the behaviour that selects the label $l$ and then behaves as $\typeT$. 
Type $\branchType{\branch{l_1}{\typeT_1}, \ldots, \branch{l_n}{\typeT_n}}$ describes a branching 
behaviour: it waits for one of the $n$ options to be selected, and behaves as type $\typeT_i$ if the $i$-th label is selected 
(external choice).
Type $\typeT_1 \choiceType \typeT_2$ behaves as either $\typeT_1$ or 
$\typeT_2$ (internal choice).
Type $\recType{t}.\typeT$ represents a recursive behaviour
\modif{that starts by doing $\typeT$ and, when variable $t$ is encountered, recurs to $\typeT$ again.} 
Types $\inactType$ and $\errType$ represent inaction and faulty termination, respectively.
Type $\commitType.\typeT$ represents a commit action followed by the actions prescribed by type 
$\typeT$. 
Finally, types $\rollType$ and $\abortType$ represent rollback and abort actions. 
\modifRev{Notably, \cherry\ session types are defined as in \cite{YoshidaV07}, but session delegation (i.e., channel-passing) is not supported while there are additional types corresponding to the reversibility actions. Concerning choice, the type discipline, as in \cite{YoshidaV07}, supports internal choice (corresponding to the if-then-else construct at process level) and external choice (corresponding to a label synchronization via interaction between selection and branching constructs at process level). Therefore, no other forms of choice are supported and, differently from other works (e.g., \cite{BarbaneraDd14,BarbaneraDLd16,BarbaneraLd17}), choices are not implicit checkpoints and rollback is explicitly triggered. }

The \cherry\ type system does not perform compliance checks, but 
only infers the types of collaboration participants, which will be then checked together 
according to the compliance relation. 
\emph{Typing judgements} are of the form \mbox{$C \hasType \sessions$},
where $\sessions$, called \emph{type associations}, is a set of session type 
associations of the form $\reqOrApp{a}:\typeT$, where $\reqOrApp{a}$
stands for either $\ce{a}$ or $a$.
Intuitively, 
\mbox{$C \hasType \sessions$} indicates that from 
the collaboration $C$ the type associations in $\sessions$ are inferred. 
The definition of the type system for these judgements relies on auxiliary 
typing judgements for processes, of the form $\basis;\sorting \judge P \hasType \typing$,
where $\basis$, $\sorting$ and $\typing$, called \emph{basis}, \emph{sorting} and \emph{typing}
respectively,  are finite partial maps 
from process variables to type variables,
from variables to sorts, and 
from variables to types, respectively.
Updates of basis and sorting are denoted, respectively, 
by $\basis\comp X:t$ and $\sorting\comp y:S$, where 
$X \notin dom(\basis)$, $t \notin cod(\basis)$ and $y \notin dom(\sorting)$.
The judgement \mbox{$\basis;\sorting \judge P \hasType \typing$ }
stands for ``under the environment $\basis;\sorting$, process $P$ has typing $\typing$''. 
In its own turn, the typing of processes relies on auxiliary judgments for
expressions, of the form \mbox{$\sorting \judge e \hasType S$}.
\modif{The axioms and rules defining the typing system for \cherry\ collaborations, processes 
and expressions are given in Fig.~\ref{fig:typingSystem_coll},~\ref{fig:typingSystem_proc}, and~\ref{fig:typingSystem_exp}, 
respectively.} 
%
The type system is defined only for initial collaborations, i.e. for terms generated by the grammar in Fig.~\ref{fig:syntax_cherry_pi}. Other runtime collaborations are not considered here, as no check will be performed at runtime.
We comment on salient points. Typing rules at collaboration level simply collect the type associations 
of session initiators in the collaboration. Rules at process level instead determine the session 
type corresponding to each process, by mapping each process operator to the corresponding type operator. 
Data and expression used in communication actions are abstracted as
sorts, and
a conditional choice is rendered as an internal non-deterministic choice. 
\modif{Typing rules for expressions are standard.}

\begin{figure}[!t]
	\centering
	\small
	\begin{tabular}{c}
	$
	\infer[$\!\!\rulelabel{T-Req}$]{\requestAct{a}{x}{P} \hasType \{\ce{a}:\typeT\}}
	{\emptyset;\emptyset \!\!\judge\!\!\! P \!\!\hasType\! x:\typeT}
	$		
	\quad\
	$
	\infer[$\!\!\rulelabel{T-Acc}$]{\acceptAct{a}{x}{P} \hasType \{a:\typeT\}}
	{\emptyset;\emptyset \!\!\judge\!\!\! P \!\!\hasType\! x:\typeT}
	$
	\quad\
	$
	\infer[$\!\!\rulelabel{T-Par}$]{C_1\mid C_2 \hasType \sessions_1 \cup \sessions_2}
	{C_1 \hasType \sessions_1 & C_2 \hasType \sessions_2}
	$
        \\[.1cm]
	\hline
	\end{tabular}
	\caption{Typing system for \cherry\ collaborations.}
	\label{fig:typingSystem_coll}
\end{figure}

\begin{figure}[!h]
	\centering
	\small
	\begin{tabular}{@{}c@{}}
		$
	\infer[$\rulelabel{T-Snd}$]{\basis;\sorting \judge \sendAct{x}{e}{P} \hasType x:\outType{S}.\typeT}
	{\sorting \judge e \hasType S && \basis;\sorting \judge P \hasType x:\typeT}
	$		
	\qquad
	$
	\infer[$\rulelabel{T-Rcv}$]{\basis;\sorting \judge \receiveAct{x}{y:S}{P} \hasType x:\inpType{S}.\typeT}
	{\basis;\sorting\comp y:S \judge P \hasType x:\typeT}
	$	
	\\[.4cm]	
	\raisebox{.2cm}{$\basis;\sorting\! \judge\! \inact\! \hasType\! x:\inactType$\ \ \rulelabel{T-Inact}}
	\qquad\qquad\qquad
	\modif{
	$
	\infer[$\rulelabel{T-Sel}$]{\basis;\sorting \judge \selectAct{x}{l}{P} \hasType 
	x:\selType{l}.\typeT}
	{\basis;\sorting  \judge P \hasType x:\typeT}
	$      
	}  
	\\[.3cm]			
	\modif{
	$
	\infer[$\rulelabel{T-Br}$]{\basis;\sorting \judge \branchAct{x}{\branch{l_1}{P_1} \branchSep \ldots \branchSep \branch{l_n}{P_n}} \hasType 
	x:\branchType{\branch{l_1}{\typeT_1}, \ldots, \branch{l_n}{\typeT_n}}}
	{\basis;\sorting  \judge P_1 \hasType x:\typeT_1 && \ldots 
	 && \basis;\sorting  \judge P_n \hasType x:\typeT_n}
	$}
	\\[.5cm]	
       	$
	\infer[$\rulelabel{T-If}$]{\basis;\sorting \judge \ifthenelseAct{e}{P_1}{P_2} \hasType 
	x:\typeT_1 \choiceType \typeT_2}
	{\sorting \! \judge\! e \!\hasType\! \boolType &&
	\basis;\sorting  \judge P_1 \hasType x:\typeT_1 && \basis;\sorting  \judge P_2 \hasType x:\typeT_2}
	$
	\\[.4cm]	
	\raisebox{.2cm}{$\basis\comp X:t;\sorting \judge X \!\hasType\! t$\ \ \rulelabel{T-Pvar}}			
	\qquad
	\infer[$\rulelabel{T-Rec}$]{\basis;\sorting \judge \recAct{X}{P} \hasType \recAct{t}{\typeT}}
	{\basis\comp X:t;\sorting  \judge P \hasType \typeT}	
	\\[.4cm]
	$
	\infer[$\rulelabel{T-Cmt}$]{\basis;\sorting \judge \commitAct{P}  \hasType 
	x:\commitType.\typeT}
	{
	\basis;\sorting  \judge P \hasType x:\typeT 
	}
	$
	\\[.4cm]
	$\basis;\sorting \judge \roll \hasType x:\rollType$\ \ \rulelabel{T-Rll}	
	\qquad
	$\basis;\sorting \judge \abort \hasType x:\abortType$\ \ 
	\rulelabel{T-Abt}
	\\[.1cm]	
	\hline
	\end{tabular}
	\caption{Typing system for \cherry\ processes.}
	\label{fig:typingSystem_proc}
\end{figure}

\begin{figure}[t]
	\centering
	\small
	\modif{
	\begin{tabular}{@{}c@{}}
	$\sorting \judge \ctrue \hasType \boolType$\ \ \rulelabel{T-Bool$_{\mathit{tt}}$}
	\qquad
	$\sorting \judge \cfalse \hasType \boolType$\ \ \rulelabel{T-Bool$_{\mathit{ff}}$}
	\\[.4cm]	
	$\sorting\comp x:S \judge x \hasType S$\ \ \rulelabel{T-Var}	
	\qquad
	$\sorting \judge 1 \hasType \intType$\ \ \rulelabel{T-Int}	
	\qquad
	$\sorting \judge\!\!``a" \hasType \strType$\ \ \rulelabel{T-Str}	
	\\[.4cm]
	$
	\infer[$\!\rulelabel{T-Sum}$]{\sorting \judge +(e_1,e_2) \hasType \intType}
	{\sorting \judge e_1 \hasType \intType &  \sorting \judge e_2 \hasType \intType}
	\qquad
	\infer[$\!\rulelabel{T-And}$]{\sorting \judge \wedge(e_1,e_2) \hasType \boolType}
	{\sorting \judge e_1 \hasType \boolType  & \sorting \judge e_2 \hasType \boolType}
	$		
        \\[.2cm]
	\hline
	\end{tabular}}
	\caption{Typing system for \cherry\ expressions (excerpt of
          rules).}
	\label{fig:typingSystem_exp}
\end{figure}

\subsection{Compliance checking}
To check compliance between pairs of session parties, we consider
\emph{type configurations} of the form 
$
\initConf{\typeT}{\typeT'}\conf{\checkpointType{\typeT}_1}{\typeT_2} \confcomp \conf{\checkpointType{\typeT}_3}{\typeT_4}
$,  
consisting in a pair $(\typeT,\typeT')$ of session types, corresponding to the types of the parties at the initiation of the session, and in the parallel composition of two pairs $\conf{\checkpointType{\typeT}_c}{\typeT}$, where $\typeT$ is the session type of a party and $\checkpointType{\typeT}_c$ is the type of the party's checkpoint. We use $\checkpointType{\typeT}$ to denote either a type $\typeT$, representing a checkpoint committed by the party, or $\imposed{\typeT}$, representing a checkpoint imposed by the other party.
%
The semantics of type configurations, necessary for the definition of the compliance relation, is given in Fig.~\ref{fig:typeSemantics_ext}, where label $\typeLab$ stands for either 
$\outType{S}$,
$\inpType{S}$,
$\selTypeLabel{l}$,
$\branchTypeLabel{l}$,
$\tau$,
$\commitType$,
$\rollType$,
or $\abortType$.
We comment on the relevant rules.
In case of a commit action, the checkpoints of both parties are updated, 
and the one 
of the passive party (i.e., the party that has not performed the commit) 
is marked as `imposed' (rule \rulelabel{TS-Cmt$_1$}). 
However, if the passive party did not perform any action from its 
current checkpoint, this checkpoint is not overwritten by the active party (rule \rulelabel{TS-Cmt$_2$}),
as discussed in Sec.~\ref{sec:motivation} (Fig.~\ref{fig:scenario}(c)). 
In case of a roll action (rule \rulelabel{TS-Rll$_1$}), the reduction step is performed 
only if the active party (i.e., the party that has performed the rollback 
action) has a non-imposed checkpoint; 
\modif{otherwise, the configuration cannot proceed with the rollback and reduces to 
an erroneous configuration (rule \rulelabel{TS-Rll$_2$}).}
Finally, in case of abort (rule \rulelabel{TS-Abt$_1$}), the configuration goes back to the initial state; this 
allows the type computation to proceed, in order not to affect the compliance check
between the two parties. 

\begin{figure}[!t]
	\centering
	\small
	\begin{tabular}{@{}l@{\hspace*{.8cm}}l@{\hspace*{.6cm}}r@{}}
	\modif{
	$\outType{S}.\typeT \typeTrans{\outType{S}} \typeT$\ \ \rulelabel{TS-Snd}	
	}
	&
	\modif{
	$\inpType{S}.\typeT \typeTrans{\inpType{S}} \typeT$\ \ \rulelabel{TS-Rcv}  
	}
	&
	\modif{
	$\selType{l}.\typeT \typeTrans{\selTypeLabel{l}} \typeT$\ \ \rulelabel{TS-Sel}
	}
	\\[.4cm]
	\multicolumn{2}{@{}l@{}}{\modif{
	\raisebox{.8cm}{$\!\!
	\branchType{\branch{l_1\!\!}{\!\typeT_1}, \ldots, \branch{l_n\!\!}{\!\typeT_n}}
	\typeTrans{\branchTypeLabel{l_i}} \typeT_i$ \ \ $(1 \!\leq\! i \!\leq\! n)$\ \ \rulelabel{TS-Br}
	}}
	}
	&
	\modif{
	$
	\infer[$\rulelabel{TS-rec}$]
	{\recType{t}.\typeT \typeTrans{\typeLab} \typeT'}
	{\typeT\subst{\recType{t}.\typeT}{t} \typeTrans{\typeLab} \typeT'}
	$
	}
\\[.3cm]
	\modif{
	$
	\typeT_1 \choiceType \typeT_2 \typeTrans{\tau} \typeT_1$\ \ \rulelabel{TS-If$_1$}
	}
	&
	\modif{
	$\typeT_1 \choiceType \typeT_2 \typeTrans{\tau} \typeT_2$\ \ \rulelabel{TS-If$_2$}
	}
	&
	\\[.4cm]
	$\commitType.\typeT \typeTrans{\commitLab} \typeT$\ \ \rulelabel{TS-Cmt}
	&
	$\rollType \typeTrans{\rollLab} \inactType$\ \ \rulelabel{TS-Rll}
	&
	$\abortType \typeTrans{\abortLab} \inactType$\ \ \rulelabel{TS-Abt}
	\\[.3cm]
	\end{tabular}
	\begin{tabular}{@{}c@{}}
	$
	\infer[$\rulelabel{TS-Tau}$]
	{\initConf{\typeT}{\typeT'}\conf{\checkpointType{\typeU}_1}{\typeT_1} \confcomp \conf{\checkpointType{\typeU}_2}{\typeT_2}
	\typered
	\initConf{\typeT}{\typeT'}\conf{\checkpointType{\typeU}_1}{\typeT_1'} \confcomp \conf{\checkpointType{\typeU}_2}{\typeT_2}}
	{\typeT_1 \typeTrans{\tau} \typeT_1'}
	$	
	\\[.3cm]
	$
	\infer[$\rulelabel{TS-Com}$]
	{\initConf{\typeT}{\typeT'}\conf{\checkpointType{\typeU}_1}{\typeT_1} \confcomp \conf{\checkpointType{\typeU}_2}{\typeT_2}
	\typered
	\initConf{\typeT}{\typeT'}\conf{\checkpointType{\typeU}_1}{\typeT_1'} \confcomp \conf{\checkpointType{\typeU}_2}{\typeT_2'}}
	{\typeT_1 \typeTrans{\!\outType{S}\!} \typeT_1'
	& &
	 \typeT_2 \typeTrans{\!\!\inpType{S}\!} \typeT_2'}	
	$	
	\\[.3cm]
	\modif{
	$
	\infer[$\rulelabel{TS-Lab}$]
	{\initConf{\typeT}{\typeT'}\conf{\checkpointType{\typeU}_1}{\typeT_1} \confcomp \conf{\checkpointType{\typeU}_2}{\typeT_2}
	\typered
	\initConf{\typeT}{\typeT'}\conf{\checkpointType{\typeU}_1}{\typeT_1'} \confcomp \conf{\checkpointType{\typeU}_2}{\typeT_2'}}
	{\typeT_1 \typeTrans{\selTypeLabel{l}} \typeT_1'
	& & 
	 \typeT_2 \typeTrans{\branchTypeLabel{l}} \typeT_2'}
	$	
	}
	\\[.4cm]
	$
	\infer[$\rulelabel{TS-Cmt$_1$}$]
	{\initConf{\typeT}{\typeT'}\conf{\checkpointType{\typeU}_1}{\typeT_1} \confcomp \conf{\checkpointType{\typeU}_2}{\typeT_2}
	\typered
	\initConf{\typeT}{\typeT'}\conf{\typeT_1'}{\typeT_1'} \confcomp \conf{\imposed{\typeT_2}}{\typeT_2}}
	{\typeT_1 \typeTrans{\commitLab} \typeT_1' & & \checkpointType{\typeU}_2 \neq \typeT_2}
	$	
	\\[.3cm]
	$
	\infer[$\rulelabel{TS-Cmt$_2$}$]
	{\initConf{\typeT}{\typeT'}\conf{\checkpointType{\typeU}_1}{\typeT_1} \confcomp \conf{\checkpointType{\typeU}_2}{\typeT_2}
	\typered
	\initConf{\typeT}{\typeT'}\conf{\typeT_1'}{\typeT_1'} \confcomp \conf{\checkpointType{\typeU}_2}{\typeT_2}}
	{\typeT_1 \typeTrans{\commitLab} \typeT_1' & & \checkpointType{\typeU}_2 = \typeT_2}
	$	
	\\[.3cm]
	$
	\infer[$\rulelabel{TS-Rll$_1$}$]
	{\initConf{\typeT}{\typeT'}\conf{\typeU_1}{\typeT_1} \confcomp \conf{\checkpointType{\typeU}_2}{\typeT_2}
	\typered
	\initConf{\typeT}{\typeT'}\conf{\typeU_1}{\typeU_1} \confcomp \conf{\checkpointType{\typeU}_2}{\typeU_2}}
	{\typeT_1 \typeTrans{\rollLab} \typeT_1'}
	$
        \\[.3cm]
	\added{$
	\infer[$\rulelabel{TS-Rll$_2$}$]
	{\initConf{\typeT}{\typeT'}\conf{\imposed{\typeU_1}}{\typeT_1} \confcomp \conf{\checkpointType{\typeU}_2}{\typeT_2}
	\typered
	\initConf{\typeT}{\typeT'}\conf{\imposed{\typeU_1}}{\errType} \confcomp \conf{\checkpointType{\typeU}_2}{\errType}}
	{\typeT_1 \typeTrans{\rollLab} \typeT_1'}
	$}
	\\[.3cm]
	$
	\infer[$\rulelabel{TS-Abt$_1$}$]
	{\initConf{\typeT}{\typeT'}\conf{\checkpointType{\typeU}_1}{\typeT_1} \confcomp \conf{\checkpointType{\typeU}_2}{\typeT_2}
	\typered
	\initConf{\typeT}{\typeT'}\conf{\typeT}{\typeT} \confcomp \conf{\typeT'}{\typeT'}}
	{\typeT_1 \typeTrans{\abortLab} \typeT_1'}
	$	
	\\[-.3cm]
	\hspace*{12.1cm}\\
	\hline
	\end{tabular}
	\caption{Semantics of types and type configurations 
	(symmetric rules for configurations are omitted).
	}
	\label{fig:typeSemantics_ext}
\end{figure}

On top of the above type semantics, 
we define the compliance relation, inspired by the relation in
\cite{BarbaneraDLd16}, and prove its decidability.

\begin{defi}[Compliance]\label{def:compliance}
Relation $\compliant$ on configurations is defined as follows:
$\initConf{\typeT}{\typeT'}\conf{\checkpointType{\typeU_1}}{\typeT_1} \compliant \conf{\checkpointType{\typeU_2}}{\typeT_2}$
holds if 
\added{for all}
$\typeU_1'$, $\typeT_1'$, $\typeU_2'$, $\typeT_2'$ such that
$\initConf{\typeT}{\typeT'}\conf{\checkpointType{\typeU_1}}{\typeT_1} \confcomp \conf{\checkpointType{\typeU_2}}{\typeT_2}
\typered^*$
$\initConf{\typeT}{\typeT'}\conf{\checkpointType{\typeU_1'}}{\typeT_1'} \confcomp \conf{\checkpointType{\typeU_2'}}{\typeT_2'}
\typered\!\!\!\!\!\!\!\!/\ \ \  $
we have that $\typeT_1'=\typeT_2'=\inactType$.
Two types $\typeT_1$ and $\typeT_2$ are \emph{compliant}, written 
$\typeT_1 \compliant \typeT_2$, if  
$\initConf{\typeT_1}{\typeT_2}\conf{\typeT_1}{\typeT_1} \compliant \conf{\typeT_2}{\typeT_2}$.
\end{defi}

\begin{thm}\label{th:decidability}
Let $\typeT_1$ and $\typeT_2$ be two session types, checking if $\typeT_1 \compliant \typeT_2$ holds
is decidable.
\end{thm}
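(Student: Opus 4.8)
The plan is to reduce checking $\typeT_1 \compliant \typeT_2$ to a reachability analysis over a finite transition system. By Definition~\ref{def:compliance}, compliance is a universally quantified statement over all configurations that are reachable from $\initConf{\typeT_1}{\typeT_2}\conf{\typeT_1}{\typeT_1} \confcomp \conf{\typeT_2}{\typeT_2}$ by $\typered$ and that are $\typered$-stuck, requiring each such configuration to have both current types equal to $\inactType$. Hence it suffices to show that (i)~the set of configurations reachable from a given one is finite and effectively computable, and (ii)~the side condition on a stuck configuration is decidable. Point (ii) is immediate, since it is a syntactic equality test against $\inactType$. The core of the argument is therefore (i).

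First I would prove the key finiteness property of the single-type relation $\typeTrans{\typeLab}$ of Fig.~\ref{fig:typeSemantics_ext}: for every session type $\typeT$, the set $\mathsf{Reach}(\typeT)$ of types obtainable from $\typeT$ by sequences of $\typeTrans{\typeLab}$-transitions is finite. Every rule other than \rulelabel{TS-rec} rewrites a type to one of its immediate syntactic subterms, so these rules introduce no new expressions; the only rule that can grow a type is the recursion unfolding $\recType{t}.\typeT \typeTrans{\typeLab} \typeT'$ derived from $\typeT\subst{\recType{t}.\typeT}{t}$. Relying on the standard regularity of recursive session types, modulo the unfolding identity $\recType{t}.\typeT \equiv \typeT\subst{\recType{t}.\typeT}{t}$ every reachable type is a subterm of $\typeT$ with each free recursion variable replaced by the body of its binder, and there are only finitely many such expressions because $\typeT$ has finitely many subterms and finitely many binders. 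Guardedness of recursion, as usual, additionally guarantees that $\typeTrans{\typeLab}$ is image-finite.

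Next I would lift this to configurations. A reachable configuration has the shape $\initConf{\typeT_1}{\typeT_2}\conf{\checkpointType{\typeU_1}}{\typeV_1} \confcomp \conf{\checkpointType{\typeU_2}}{\typeV_2}$, where the pair $(\typeT_1,\typeT_2)$ is fixed throughout (an abort, by \rulelabel{TS-Abt$_1$}, resets the components precisely to $\typeT_1$ and $\typeT_2$). Inspecting the configuration rules, each current type $\typeV_i$ is produced by a single-type transition of a previous current type, by a reset to a checkpoint or to the initial type (\rulelabel{TS-Rll$_1$}, \rulelabel{TS-Abt$_1$}), or is set to $\errType$ (\rulelabel{TS-Rll$_2$}); and each checkpoint $\checkpointType{\typeU_i}$ is either a copy of a current type, possibly tagged as imposed (\rulelabel{TS-Cmt$_1$}, \rulelabel{TS-Cmt$_2$}), or a reset. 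Hence $\typeV_1 \in \mathsf{Reach}(\typeT_1) \cup \set{\errType}$ and $\checkpointType{\typeU_1}$ is a possibly-imposed element of $\mathsf{Reach}(\typeT_1)$, and symmetrically for the second party with $\typeT_2$. All reachable configurations thus belong to the finite set obtained by combining these finitely many components, so the reachable state space is finite. Since each configuration has finitely many $\typered$-successors (finitely many rules, each image-finite), a breadth-first exploration computes the entire reachable set in finite time.

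Decidability of compliance then follows: enumerate the finite reachable set, select the $\typered$-stuck configurations, and test whether both current types equal $\inactType$ in each; $\typeT_1 \compliant \typeT_2$ holds precisely when this test succeeds for all of them. The main obstacle is the finiteness of $\mathsf{Reach}(\typeT)$, i.e.\ controlling the unbounded unfolding of recursive types; I would discharge it with the regular-tree argument above, representing the reachable types as the finitely many derivatives of $\typeT$ modulo unfolding. Everything else amounts to bookkeeping over a finite graph.
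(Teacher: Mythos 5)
Your proposal is correct and follows essentially the same route as the paper: both reduce compliance to inspecting the $\typered$-stuck states of the transition system reachable from the initial configuration and rest decidability on the finiteness of that reachable state space. The only difference is that where the paper discharges finiteness by appealing to Theorem~\ref{th:trace} and to the standard result that recursion without static operators within its scope yields finite-state processes (citing \cite[Sec.~7.5]{CCS}), you prove it directly by bounding every component of a reachable configuration within the finite sets of derivatives of $\typeT_1$ and $\typeT_2$ modulo unfolding, extended with the imposed tag and $\errType$ --- a more self-contained justification of the same fact.
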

\modif{
%
%
\begin{proof} 
By Def.~\ref{def:compliance}, checking $\typeT_1 \compliant \typeT_2$ consists in
checking that types $\typeT_1'$ and $\typeT_2'$ of each configuration
$\initConf{\typeT_1}{\typeT_2}\conf{\checkpointType{\typeU_1'}}{\typeT_1'} 
\confcomp \conf{\checkpointType{\typeU_2'}}{\typeT_2'}$
such that 
$\initConf{\typeT_1}{\typeT_2}\conf{\typeT_1}{\typeT_1} \confcomp \conf{\typeT_2}{\typeT_2}
\typered^*
\initConf{\typeT_1}{\typeT_2}\conf{\checkpointType{\typeU_1'}}{\typeT_1'} \confcomp
\conf{\checkpointType{\typeU_2'}}{\typeT_2'}
\typered\!\!\!\!\!\!\!\!/\ \ \  $
(i.e., type configurations that are reachable from the initial one and that cannot further evolve)
are $\inactType$ types.
Thus, to prove that the compliance check is decidable we have to show that the number of 
these reachable configurations is finite. 
Let us consider the transition system $TS=\langle \mathcal{S}, \mathcal{R} \rangle$ associated to the type configuration 
$t= \initConf{\typeT_1}{\typeT_2}\conf{\typeT_1}{\typeT_1} \confcomp \conf{\typeT_2}{\typeT_2}$
by the  reduction semantics of types (Fig.~\ref{fig:typeSemantics_ext}):
the set $\mathcal{S}$ of states  corresponds to the set of type configurations reachable from $t$,
i.e. $\mathcal{S}=\{\, t' \ \mid \  t \typered^* t'\,\}$, 
while the set $\mathcal{R}$ of system transitions corresponds to set of the type reductions 
involving configurations in $\mathcal{S}$, i.e.
$\mathcal{R}=\{ (t',t'') \in \mathcal{S} \times \mathcal{S}  \ | \ t' \typered t''\}$.
Hence, checking $\typeT_1 \compliant \typeT_2$ boils down to check the type configurations
corresponding to the leaves (i.e., states without outgoing transitions) of $TS$. 
Specifically, given a leaf of $TS$ corresponding to 
$\initConf{\typeT_1}{\typeT_2}\conf{\checkpointType{\typeV_1}}{\typeV_2} \confcomp
\conf{\checkpointType{\typeV_3}}{\typeV_4}$, 
we have to check if $\typeV_2=\typeV_4=\inactType$.
The decidability of this check therefore depends on the finiteness of $TS$.
This result is ensured by the fact that: : \emph{(i)} backward reductions connect states of $TS$
only to previously visited states of $TS$ (Theorem~\ref{th:trace}),
and \emph{(ii)} our language of types (Fig.~\ref{fig:typeSyntax_cherry}) 
corresponds to a CCS-like process algebra without 
static operators (i.e., parallel and restriction operators) within recursion (see \cite[Sec.~7.5]{CCS}). 
\end{proof}
}

\medskip

%
The compliance relation is used to define the notion of \emph{rollback safety}.

\begin{defi}[Rollback safety]\label{def:rollback_safety}
Let $C$ be an initial collaboration, then $C$ is \emph{rollback safe} (shortened \emph{\rollSafe}) if $C \hasType \sessions$
and for all pairs $\ce{a}:\typeT_1$ and $a:\typeT_2$ in $\sessions$ we have 
$\typeT_1 \compliant \typeT_2$.
\end{defi}


\begin{exa}\label{ex:types}
Let us consider again the VOD example. 
As expected, the first \cherry\ collaboration defined in Ex.~\ref{example_syntax}, corresponding to the 
scenario described in Fig.~\ref{fig:scenario}(b), is not rollback safe, because 
the types of the two parties are not compliant. 
Indeed, the session types $\typeT_\texttt{U}$ and $\typeT_\texttt{S}$ associated by the type system to the user and 
the service processes, respectively, are as follows:
$$
\begin{array}{r@{\ \ }c@{\ \ }l}
\typeT_\texttt{U} & = & 
\outType{\strType}.\,
\inpType{\intType}.\,
\commitType.\,
\inpType{\strType}.\,
(\, \selType{l_{\textit{\textsf{HD}}}}.\, \inpType{\strType}.\, (\, \inpType{\strType}.\,\inactType  \ \choiceType\ \rollType\,)\\
&&\qquad\qquad\qquad\qquad \qquad\qquad
\choiceType \ 
\selType{l_{\textit{\textsf{SD}}}}.\, \inpType{\strType}.\, (\, \inpType{\strType}.\,\inactType \ \choiceType\ \abortType\,)\,)
\\[.2cm]
\typeT_\texttt{S} & = & 
\inpType{\strType}.\,
\outType{\intType}.\,
\outType{\strType}.\,
\branchType{\branch{l_{\textit{\textsf{HD}}}}{\commitType.\, \outType{\strType}.\,\outType{\strType}.\,\inactType}
\ ,\\
&& \hspace*{3.77cm} 
\branch{l_{\textit{\textsf{SD}}}}{\commitType.\, \outType{\strType}.\,\outType{\strType}.\,\inactType}}
\end{array}
$$
Thus, the resulting initial configuration is 
$\initConf{\typeT_\texttt{U}}{\typeT_\texttt{S}}\conf{\typeT_\texttt{U}}{\typeT_\texttt{U}} \confcomp \conf{\typeT_\texttt{S}}{\typeT_\texttt{S}}$, 
which can evolve to the configuration 
$
\initConf{\typeT_\texttt{U}}{\typeT_\texttt{S}}\conf{\imposed{\typeT}}{\rollType} \confcomp \conf{\typeU}{\outType{\strType}.\inactType}
$,
with 
$\typeT = \inpType{\strType}.\, (\inpType{\strType}.\,\inactType  \ \choiceType\ \rollType)$
and 
$\typeU = \outType{\strType}.\,\outType{\strType}.\,\inactType$.
\added{This configuration evolves to 
$
\initConf{\typeT_\texttt{U}}{\typeT_\texttt{S}}\conf{\imposed{\typeT}}{\errType} \confcomp \conf{\typeU}{\errType}
$, which cannot further evolve and is not in a completed state (in fact, type $\errType$ is different from $\inactType$), meaning that ${\typeT_\texttt{U}}$ and ${\typeT_\texttt{S}}$ are not compliant.}

In the scenario described in Fig.~\ref{fig:scenario}(c), instead, the type of the server 
process is as follows:
$
\typeT_\texttt{S}' = \, 
\inpType{\strType}.\,
\outType{\intType}.\,
\commitType.\,
\outType{\strType}.\,
\branchType{\branch{l_{\textit{\textsf{HD}}}}{\outType{\strType}.\,\outType{\strType}.\,\inactType}
\, ,\,
\branch{l_{\textit{\textsf{SD}}}}{\outType{\strType}.$ $\outType{\strType}.\,\inactType}}
$
and we have $\typeT_\texttt{U} \compliant \typeT_\texttt{S}'$. 
Finally, the types of the processes depicted in Fig.~\ref{fig:scenario}(d) are:
$$
\begin{array}{@{}r@{\ }c@{\  }l}
\typeT_\texttt{U}' & = & 
\outType{\strType}.\,
\inpType{\intType}.\,
\inpType{\strType}.\,
\commitType.\,
(\, \selType{l_{\textit{\textsf{HD}}}}.\, \ldots\ \choiceType \ 
\selType{l_{\textit{\textsf{SD}}}}.\, \ldots\,)
\\[.2cm]
\typeT_\texttt{S}'' & = & 
\inpType{\strType}.\,
\outType{\intType}.\,
\outType{\strType}.\,
\commitType.
\branchType{\branch{l_{\textit{\textsf{HD\!\!}}}}{\outType{\strType}.\,\outType{\strType}.\,\inactType}
 ,
\branch{l_{\textit{\textsf{SD\!\!}}}}{\outType{\strType}.\,\outType{\strType}.\,\inactType}}
\end{array}
$$
and we have $\typeT_\texttt{U}' \compliant\!\!\!\!/\ \ \typeT_\texttt{S}''$. 
Indeed, the corresponding initial configuration can evolve to the configuration 
$
\initConf{\typeT_\texttt{U}'}{\typeT_\texttt{S}''}\conf{\imposed{\selType{l_{\textit{\textsf{HD}}}}.\, \ldots}}{\rollType} 
\confcomp \conf{\branchType{\branch{l_{\textit{\textsf{HD}}}}{\ldots}, 
\branch{l_{\textit{\textsf{SD}}}}{\ldots}}}{\outType{\strType}.\inactType}
$, 
which again 
\added{evolves to a configuration that is not in a completed state.}
\end{exa}

\section{\maude\ implementation.}
\label{sec:maude}
To show the feasibility of our approach, we have implemented
the semantics of type configurations in Fig.~\ref{fig:typeSemantics_ext}
in the \maude{} framework \cite{maude2007}.
\maude{} provides an instantiation of rewriting logic \cite{meseguer_conditional_1992} 
and it has been used to implement the semantics of several formal languages
\cite{DBLP:journals/jlp/Meseguer12}.

The syntax of \cherry\ types and type configurations is specified by defining 
algebraic data types, while transitions and reductions are rendered as rewrites and, hence, 
inference rules are given in terms of (conditional) rewrite rules. 
Since \maude{} specifications are executable, we have obtained in this way an 
interpreter for \cherry\ type configurations, which permits to explore the reductions
arising from the initial configuration of two given session types. 

Our implementation consists of two \maude{} modules.
The CHERRY-TYPES-SYNTAX module provides the definition of the sorts
	that characterise the syntax of \cherry\ types, such as session types, 
	selection/branching labels, type variables and type configurations. 
	In particular, basic terms of session types are rendered as constant 
	\emph{operations} 	on the sort $\texttt{Type}$;  e.g., the roll type is defined as 
\begin{lstlisting}[frame=single,basicstyle=\scriptsize\ttfamily]
op roll : -> Type .
\end{lstlisting}
	The other syntactic operators are instead defined as operations with one or 
	more arguments; e.g., the output type takes as input a $\texttt{Sort}$ 
	and a continuation type:
\begin{lstlisting}[frame=single,basicstyle=\scriptsize\ttfamily]	
op ![_]._ : Sort Type -> Type [frozen prec 25] .
\end{lstlisting}	
        To prevent undesired rewrites inside operator arguments, 
        following the approach 
        in~\cite{CCSMaude2}, we have declared these operations 
        as $\texttt{frozen}$. The $\texttt{prec}$ attribute has been used
        to define the precedence among operators. 

The CHERRY-TYPES-SEMANTICS module provides \emph{rewrite rules}, and 
	additional operators and equations, to define the \cherry\ type semantics.
	For example, the operational rule \rulelabel{TS-Snd} 
	is rendered as follows:
\begin{lstlisting}[frame=single,basicstyle=\scriptsize\ttfamily]	
rl [TS-Snd] : ![S].T => {![S]}T .
\end{lstlisting}	
        The correspondence between the operational rule and the rewrite rule is 
        one-to-one; the only peculiarity is the fact that,  since rewrites have no 
        labels, we have made the transition label part of the resulting term.  
        Reduction rules for type configurations are instead rendered in terms of 
        conditional rewrite rules with rewrites in their conditions. 
        	For example, the \rulelabel{TS-Com} rule is rendered as:
\begin{lstlisting}[frame=single,basicstyle=\scriptsize\ttfamily]	
crl [TS-Com] : 
  init(T,T') CT1 > T1 || CT2 > T2  =>  init(T,T') CT1 > T1' || CT2 > T2'
  if T1 => {![S]}T1' /\ T2 => {?[S]}T2' .
\end{lstlisting}	
        Again, there is 
        a close correspondence between 
        the operational rule and the rewrite one. 

The compliance check between two session types can be then conveniently 
realised on top of the implementation described above by resorting to 
the \maude\ command $\texttt{search}$. This permits indeed to explore 
the state space of the configurations reachable from an initial configuration. 
Specifically, the compliance check between types $\texttt{T1}$ and $\texttt{T2}$
is rendered as follows:
\begin{lstlisting}[frame=single,basicstyle=\scriptsize\ttfamily]	
search 
  init(T1,T2) ckp(T1) > T1 || ckp(T2) > T2 
  =>! 
  init(T:Type,T':Type) CT1:CkpType > T1':Type || CT2:CkpType > T2':Type 
such that T1' =/= end or T2' =/= end .
\end{lstlisting}	
This command searches for all terminal states (\texttt{=>!}), i.e. 
states that cannot be rewritten any more (see $\typered\!\!\!\!\!\!\!\!/\ \ \ $ in 
Def.~\ref{def:compliance}), and checks if at least one of the two session types in 
the corresponding configurations ($\texttt{T1'}$ and $\texttt{T2'}$) is different from 
the $\texttt{end}$ type. 
Thus, if this search has no solution,  $\texttt{T1}$ and $\texttt{T2}$ are compliant;
otherwise, they are not compliant and a violating configuration is returned.  

\begin{exa}
Let us consider the \cherry\ types defined in Ex.~\ref{ex:types} for the 
scenario described in Fig.~\ref{fig:scenario}(b).
In our \maude\ implementation of the type syntax, the session types $\typeT_\texttt{U}$ 
and $\typeT_\texttt{S}$, and the corresponding initial type configuration, are rendered 
as follows:

\begin{lstlisting}[frame=single,basicstyle=\scriptsize\ttfamily]	
eq Tuser = ![str]. ?[int]. cmt. ?[str]. 
           ((sel['hd]. ?[str]. ((?[str]. end) (+) roll)) 
            (+) (sel['sd]. ?[str]. ((?[str]. end) (+) abt))) .

eq Tservice = ?[str]. ![int]. ![str]. 
              brn[brnEl('hd, cmt. ![str]. ![str]. end); 
                  brnEl('sd, cmt. ![str]. ![str]. end)] .
  
eq InitConfig = init(Tuser,Tservice) 
                ckp(Tuser) > Tuser || ckp(Tservice) > Tservice .
\end{lstlisting}	
where $\texttt{(+)}$ represents the internal choice operator, 
$\texttt{sel}$ the selection operator, $\texttt{brn}$ the branching operator,  
$\texttt{brnEl}$ an option offered in a branching, and $\texttt{ckp}$ a non-imposed
checkpoint.
The compliance between the two session types can be checked by 
loading the two modules of our \maude\ implementation,
and executing the following command:
\begin{lstlisting}[frame=single,basicstyle=\scriptsize\ttfamily]	
search InitConfig  
       =>! 
       init(T:Type,T':Type) CT1:CkpType > T1:Type || CT2:CkpType > T2:Type 
such that T1 =/= end or T2 =/= end .
\end{lstlisting}	
This $\texttt{search}$ command returns the following solution:
\begin{lstlisting}[frame=single,basicstyle=\scriptsize\ttfamily]	
CT1 --> ickp(?[str]. ((?[str]. end)(+)roll))
T1 --> err
CT2 --> ckp(![str]. ![str]. end)
T2 --> err
\end{lstlisting}
\added{As  explained in Ex.~\ref{ex:types}, the two types are not compliant. Indeed, the configuration above is a terminal state, and $\texttt{T1}$ and $\texttt{T2}$ are clearly different from $\texttt{end}$.}

 The scenario in Fig.~\ref{fig:scenario}(c) is rendered by
the following implementation of the service type:
\begin{lstlisting}[frame=single,basicstyle=\scriptsize\ttfamily]	
eq Tservice' = ?[str]. ![int]. cmt. ![str]. 
               brn[brnEl('hd, ![str]. ![str]. end); 
                   brnEl('sd, ![str]. ![str]. end)] .
\end{lstlisting}
In this case, as expected, the $\texttt{search}$ command returns: 
\begin{lstlisting}[frame=single,basicstyle=\scriptsize\ttfamily]
No solution.
\end{lstlisting}
meaning that types $\texttt{Tuser}$ and $\texttt{Tservice'}$ are compliant. 
Finally, the $\texttt{search}$ command applied to the type configuration 
related to the scenario depicted in Fig.~\ref{fig:scenario}(d) returns 
a solution, meaning that in that case the user and service types are not compliant. 
\end{exa}

\section{Properties of \cherry}
\label{sec:properties}

This section presents the results regarding the properties of \cherry. 
The statement of some properties exploits 
labelled transitions that permit to easily distinguish the execution of commit and rollback 
actions from the other ones. 
To this end, we can instrument the reduction semantics of collaborations by means of labels of the form
$\commitLab \,s$, $\rollLab \,s$ and $\abortLab \,s$, indicating the  rule used to derive the reduction and the session on which such operation has been done. When we do not want to distinguish the applied rule we will write 
$C\instr{s}\fwbwred C'$ to indicate that a reduction is taking plase on session $s$.


\subsection{Useful lemmata}
We now introduce a couple of results that will allow us to focus on one session per time, even if there could be multiple concurrent sessions running together.
The first of such results tells us that 
reductions taking place on different sessions can be swapped. Formally:

\begin{lem}[Swap Lemma]\label{lm:swp}
Let $C$ be a  collaboration and $s$ and $r$ two sessions. If $C\instr{s}{\fwbwred} C_1 \instr{r}{\fwbwred} C_2$ then
there exists a collaboration $C_3$ such that $C\instr{r}{\fwbwred} C_3 \instr{s}{\fwbwred} C_2$.
\end{lem}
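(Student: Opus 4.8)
The plan is to prove this as a standard \emph{diamond} (local commutation) property, exploiting the fact that reductions carried out on two distinct sessions $s\neq r$ operate on \emph{disjoint} parallel components of $C$, and that disjoint rewrites commute. Throughout I will use that parallel composition is associative and commutative up to $\congr$ (Fig.~\ref{fig:struct}) and that all reduction relations are closed under $\congr$.

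First I would isolate, for each single-session reduction, the subterm it actually rewrites. By induction on the derivation of a reduction $C\instr{s}{\fwbwred}C_1$, with a case analysis on the last rule applied, the context rules \rulelabel{F-Par}, \rulelabel{B-Par}, \rulelabel{F-Res} and \rulelabel{B-Res} only peel off a parallel/session context, so I can write $C\congr \mathbb{D}\mid R_s$ and $C_1\congr \mathbb{D}\mid R_s'$, where $R_s\fwbwred R_s'$ is derived by one of the base rules and $\mathbb{D}$ is an unchanged parallel context. Inspecting Fig.~\ref{fig:semantics_cherry_pi}, the redex $R_s$ is either (a) a pair of session initiators $\requestAct{a}{x_1}{P_1}\mid\acceptAct{a}{x_2}{P_2}$ consumed by \rulelabel{F-Con}, in which case $s$ is the freshly generated channel and $R_s'=\singleSession{s}{R_s}(\ldots)$; or (b) the two logs $\logged{\ce{s}}{Q_1}P_1\mid\logged{s}{Q_2}P_2$ living inside the session construct $\singleSession{s}{C_0}(\cdot)$, rewritten by \rulelabel{F-Com}, \rulelabel{F-Lab}, \rulelabel{F-Cmt}, \rulelabel{F-If}, \rulelabel{B-Rll} or \rulelabel{B-Abt}. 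Crucially, in case (b) the redex and its reduct are confined to the session-$s$ construct, and no subordinate sessions may occur inside it, so distinct sessions give distinct parallel session constructs.

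The heart of the argument is the disjointness claim: the $s$-redex and the $r$-redex are non-overlapping components of $C$. Applying the decomposition of the previous step to $C_1\instr{r}{\fwbwred}C_2$ gives $C_1\congr \mathbb{E}\mid R_r$ with $R_r\fwbwred R_r'$. Since $r\neq s$, the redex $R_r$ cannot be (part of) the reduct $R_s'$: if both are existing-session reductions they sit inside the constructs $\singleSession{r}{\ldots}$ and $\singleSession{s}{\ldots}$, which are distinct parallel components; an initiator pair (a creation redex) never overlaps a session construct; and two creations use disjoint initiator pairs, because $R_s$ is consumed by the first step and the freshness of generated channels forces the two created sessions to differ. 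Hence $R_r$ is untouched by the first reduction and already occurs in $C$, so I can refine the context to obtain $C\congr \mathbb{D}'\mid R_s\mid R_r$ and $C_1\congr \mathbb{D}'\mid R_s'\mid R_r$ for a common context $\mathbb{D}'$.

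It then only remains to fire the two base reductions in the opposite order. Closing $R_r\fwbwred R_r'$ under the context rules yields $C\instr{r}{\fwbwred}\mathbb{D}'\mid R_s\mid R_r'=:C_3$, and closing $R_s\fwbwred R_s'$ under the same rules yields $C_3\instr{s}{\fwbwred}\mathbb{D}'\mid R_s'\mid R_r'\congr C_2$, which is exactly the required witness. I expect the main obstacle to be making the disjointness step fully rigorous in the two delicate cases: when both steps are session creations (\rulelabel{F-Con}), where I must combine the consumption of the initiator pair with channel freshness to guarantee the redexes are separate; and when one step is an abort (\rulelabel{B-Abt}), which deletes an entire session construct and reinstates the stored initiators $C_0$ — here I must check that the reinstated initiators play no role in the other, session-$r$, reduction, so that disjointness, and hence commutation, is preserved.
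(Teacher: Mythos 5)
Your proposal follows essentially the same route as the paper: the paper's entire proof is the one line ``by case analysis on the reductions $\instr{s}{\fwbwred}$ and $\instr{r}{\fwbwred}$'', and what you write is a careful elaboration of exactly that case analysis (peel off the \rulelabel{F-Par}/\rulelabel{B-Par}/\rulelabel{F-Res}/\rulelabel{B-Res} context, locate each redex either as an initiator pair or inside its own $\singleSession{s}{\cdot}$ construct, argue the two redexes are disjoint parallel components, and fire them in the opposite order). That elaboration is correct for all the cases where both redexes already coexist in $C$. The one place you say you ``must check'' --- \rulelabel{B-Abt} on $s$ followed by a step on $r$ --- is, however, not merely delicate: the check can fail. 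If the $s$-step is an abort that reinstates the stored initiator pair $C_0$ and the $r$-step is an \rulelabel{F-Con} that creates the fresh session $r$ from precisely those reinstated initiators, then the $r$-redex does not occur in $C$ at all (it is locked inside the session-$s$ construct as a checkpoint), so $C\instr{r}{\fwbwred}C_3$ is impossible and the swap cannot be performed. This is a genuine counterexample to the lemma as literally stated, not a defect you introduced; the paper's one-line proof glosses over it too, and repairing it requires either a side condition excluding the case where the second step is enabled only by the first, or an argument that this case never arises in the only use of the lemma (Lemma~\ref{lm:rearr}). Apart from making that exclusion explicit, your proof is the paper's proof, spelled out.
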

\begin{proof}
By case analysis on the reductions $\instr{s}{\fwbwred} $ and $\instr{r}{\fwbwred}$.
\end{proof}

All the reductions taking place on a session $s$ can be put all together in sequence. Formally:

\begin{lem}\label{lm:rearr}
Let $C$ be a collaboration. 
If $C\fwbwred^{*}C_1$, then for any session $s$ in $C_1$ there exists a collaboration $C_0$
such that 
$C \fwbwred^{*}C_0\instr{s}{\fwbwred}{}^{\!\!*}\, C_1$ and $s$ is never used in the trace $C \fwbwred^{*}C_0$.
\end{lem}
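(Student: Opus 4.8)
The plan is to argue by induction on the length $n$ of the reduction sequence $C \fwbwred^{*} C_1$, using the Swap Lemma (Lemma~\ref{lm:swp}) to push every reduction performed on $s$ toward the end of the trace, i.e. to the right of the decomposition point $C_0$. The base case $n=0$ is immediate: we take $C_0 = C$, so that both the prefix $C \fwbwred^{*} C_0$ and the suffix $C_0 \instr{s}{\fwbwred}{}^{\!\!*}\, C_1$ are empty traces, and $s$ is trivially never used in the (empty) prefix.

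For the inductive step I would write the reduction sequence as $C \fwbwred^{*} D \instr{t}{\fwbwred} C_1$, where the prefix has length $n-1$ and $t$ is the session on which the last step operates. If $t = s$, then (since a reduction on $s$ requires $s$ to occur in $D$) the induction hypothesis applies to the prefix with session $s$, yielding $C \fwbwred^{*} C_0 \instr{s}{\fwbwred}{}^{\!\!*}\, D$ with $s$ unused before $C_0$; appending the final $s$-step gives the claim with the same $C_0$. If instead $t = r \neq s$, I first observe that $s$ still occurs in $D$: the last step acts on the distinct session $r$, and neither session creation (rule \rulelabel{F-Con}, whose fresh channel is $r$) nor session abortion (rule \rulelabel{B-Abt}, which discards $r$) can introduce or remove $s$. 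Hence the induction hypothesis applies to the prefix, giving $C \fwbwred^{*} C_0' \instr{s}{\fwbwred}{}^{\!\!*}\, D$ with $s$ unused before $C_0'$, and therefore $C \fwbwred^{*} C_0' \instr{s}{\fwbwred}{}^{\!\!*}\, D \instr{r}{\fwbwred} C_1$.

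At this point I would apply the Swap Lemma repeatedly to commute the single $r$-step leftward past each of the (finitely many) $s$-steps, transforming the tail $\instr{s}{\fwbwred}{}^{\!\!*}\, D \instr{r}{\fwbwred} C_1$ into $\instr{r}{\fwbwred} C_0 \instr{s}{\fwbwred}{}^{\!\!*}\, C_1$ for a suitable intermediate $C_0$. This yields $C \fwbwred^{*} C_0' \instr{r}{\fwbwred} C_0 \instr{s}{\fwbwred}{}^{\!\!*}\, C_1$; since $s$ is unused before $C_0'$ and the interposed step is on $r \neq s$, session $s$ is never used in $C \fwbwred^{*} C_0$, which is exactly the required decomposition.

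The only delicate points are the justification that $s$ persists in $D$ when the last step is on another session (needed in order to invoke the induction hypothesis on the prefix) and the bookkeeping of the repeated Swap Lemma applications, whose termination is guaranteed by the fixed, finite number of $s$-steps to be overtaken. Both are routine, and the real content of the argument is carried entirely by Lemma~\ref{lm:swp}; the induction simply organises its iterated use into a bubble-sort that segregates the $s$-reductions at the tail of the trace.
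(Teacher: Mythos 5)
Your proof is correct and takes essentially the same route as the paper: both arguments are an induction whose entire content is the iterated use of the Swap Lemma (Lemma~\ref{lm:swp}) to segregate the $s$-reductions at the tail of the trace, differing only in that you induct on trace length and bubble the last non-$s$ step leftward, whereas the paper inducts on the number of $s$-reductions and bubbles the last $s$-step rightward. Your explicit justification that $s$ persists in $D$ when the final step acts on a different session is a point the paper's sketch leaves implicit, and it is argued correctly.
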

\begin{proof}
By induction on the number $n$ of reduction on $s$. If there are no reductions then the thesis trivially holds.
Otherwise, we can take the very last reduction on $s$, that is the closest one to $C_1$ and iteratively apply
 Lemma~\ref{lm:swp} in order to bring it to the very end. Then we can conclude by induction on a trace with less occurrences of reductions on $s$.
\end{proof}

Thanks to Lemma~\ref{lm:rearr} we can rearrange any trace as a sequence of \textit{independent} sessions.
Moreover, given an initial collaboration $C$, for
 any reachable collaboration $C_1$ and session $s$ such that
  $C\fwbwred^{*} C_1 \instr{s}{\fwbwred}{}^{\!\!*}\, $ and $s\not \in \fwbwred^{*}$, we indicate $C_1$ 
  as the \textit{initial} collaboration for $s$. 
  This will allow us to focus just on a sigle session, say $s$, and to consider collaboration initial for $s$ without loosing of generality.

\subsection{Rollback properties}
We show some properties 
concerning the reversible behaviour of \cherry\ related to the interplay between  rollback and  commit primitives.
%
\modiff{Notably,}
Theorem~\ref{th:trace} and Lemma~\ref{lemma:safe_roll}, are an adaptation of typical properties of reversible calculi, while Lemma \ref{lemma:determinism} and Lemma \ref{lemma:undoability} are brand new.

\modiff{The following two lemmata are the key ingredients for proving Theorem~\ref{th:trace}. Specifically, the former lemma states that an abort leads back to the initial collaboration, while the latter one states that a rollback leads back to the last committed checkpoint. 
}

\begin{lem}\label{lemma:init}
Let $C$ be an initial collaboration such that
$C \fwbwred^{*} C_1$. If $C_1\abt C_2$ then $C_2 \congr C$.
\end{lem}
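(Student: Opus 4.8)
The plan is to read off the shape of $C_2$ from the only rule that can justify an abort step, and then to show that the collaboration it restores is, up to $\congr$, the one we started from. Since $C_1 \abt C_2$ and all reduction relations are closed under $\congr$, the step must be an instance of \rulelabel{B-Abt}, possibly lifted by \rulelabel{B-Par} (sessions live at top level, so no nesting via \rulelabel{B-Res} occurs); hence there are a session $s$, a residual collaboration $C'$ and logs such that $C_1 \congr C' \mid \singleSession{s}{C_0}(\logged{\ce{\genSession}}{Q_1} P_1 \mid \logged{\genSession}{Q_2} P_2)$, with one of the two processes enabling an abort (say $P_1 \auxrel{\abortLab} P_1'$, the other case being symmetric by commutativity of $\mid$), and $C_2 \congr C' \mid C_0$. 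Thus the thesis reduces to proving $C' \mid C_0 \congr C$.

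First I would establish that the \emph{session-level} checkpoint $C_0$, namely the first argument of the session construct $\singleSession{s}{C_0}\,(\cdots)$, is frozen from the moment the session is created. This I would prove as an invariant, by induction on the length of $C \fwbwred^* C_1$ together with a case analysis on the rules of Fig.~\ref{fig:semantics_cherry_pi}: the only rule introducing a session construct is \rulelabel{F-Con}, which copies the current pair of initiators $\requestAct{a}{x_1}{P_1} \mid \acceptAct{a}{x_2}{P_2}$ into the checkpoint slot; every other forward or backward rule acting inside the session (\rulelabel{F-Com}, \rulelabel{F-Lab}, \rulelabel{F-If}, \rulelabel{F-Cmt}, \rulelabel{B-Rll}, and the context rules \rulelabel{F-Res}, \rulelabel{B-Res}) rewrites only the code component and the log-level checkpoints, leaving $C_0$ untouched. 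Consequently $C_0$ is exactly a $\congr$-representative of the request/accept pair that spawned $s$ in $C$.

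It then remains to identify $C' \mid C_0$ with $C$. Here I would invoke the single-session focus set up above via Lemma~\ref{lm:rearr}: since $C$ is initial for the aborted session $s$, the trace $C \fwbwred^* C_1$ may be taken to consist solely of reductions on $s$, so the residual $C'$ surrounding the session is precisely the part of $C$ not involved in $s$ and is unchanged along the whole trace, while $C_0$ is the pair of $s$-initiators occurring in $C$. Recombining these two unchanged pieces yields $C' \mid C_0 \congr C$, as required. The main obstacle I anticipate is not the checkpoint invariance, which is mechanical, but the bookkeeping of the surrounding context: the equation $C' \mid C_0 \congr C$ genuinely needs the focus on $s$ (otherwise reductions that advanced other sessions would leave $C'$ differing from the corresponding part of $C$, and the statement would fail), and it holds only up to $\congr$ because recursion unfolding and reorderings of $\mid$ may have been applied before \rulelabel{F-Con} stored the initiators. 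This is exactly why Lemma~\ref{lm:rearr} and the convention of working with collaborations initial for $s$ are brought to bear.
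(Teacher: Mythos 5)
Your proposal is correct and follows essentially the same route as the paper's proof: the first reduction of the trace must be \rulelabel{F-Con}, which stores the pair of initiators in the session construct's checkpoint slot, no subsequent rule ever modifies that slot, and \rulelabel{B-Abt} restores exactly it. You merely make explicit two points the paper leaves implicit — the checkpoint-invariance induction and the bookkeeping of the surrounding context $C'$, which the paper dispenses with by assuming WLOG (via Lemma~\ref{lm:rearr}) that $C$ is the bare request/accept pair for the aborted session.
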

\begin{proof}
Since $C$ is \textit{initial}, without losing of generality we can assume
$
C\congr 		\requestAct{a}{x_1}{P_1} \mid \acceptAct{a}{x_2}{P_2} 
$.
 %
The first reduction of $C \fwbwred^{*} C_1$ has to be an application of rule \rulelabel{F-Con}, that is
\begin{align*}
C \fwred & 	\singleSession{s}{(\requestAct{a}{x_1}{P_1} \mid \acceptAct{a}{x_2}{P_2})} \\
&        (\logged{\ce{s}}{P_1\subst{\ce{s}}{x_1} }	P_1\subst{\ce{s}}{x_1}
        \ \mid\ 
	\logged{s}{P_2\subst{s}{x_2}}	P_2\subst{s}{x_2} ) 
	= C' 
\end{align*}
and, by hypothesis, $C' \fwbwred^{*} C_1$.	
Now, no matter the shape of processes in $C_1$, by applying rule \rulelabel{B-Abt}, 
we will go back to $C$, that is $C_1 \abt C$, as desired.
\end{proof}

\begin{lem}\label{lemma:fw_trace}
Let $C$ be a reachable collaboration, such that $C\cmt C_1$. If $C_1\fwred^{*}C_2 \rll C_3$ and there
is no commit in $C_1\fwred^{*}C_2$, then $C_3 \congr C_1$.
\end{lem}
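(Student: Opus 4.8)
The plan is to follow the \emph{stored checkpoints} (the first component of each log) along the trace, exploiting the fact that immediately after a commit each party's stored checkpoint coincides with its current code, while no non-commit forward step ever rewrites a stored checkpoint. Relying on the reduction to a single session justified after Lemma~\ref{lm:rearr}, I may assume that the commit $C \cmt C_1$, the forward trace, and the rollback $C_2 \rll C_3$ all take place on the same session $s$, so that $C \congr \singleSession{s}{C_0}\big(\logged{\ce{s}}{Q_1} P_1 \mid \logged{s}{Q_2} P_2\big)$ for some starting collaboration $C_0$. This single-session focus is what ultimately lets the part of the collaboration outside $s$ be ignored, so that the congruence holds globally rather than only on $s$.

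First I would analyse the commit step. Since $C \cmt C_1$ is an instance of \rulelabel{F-Cmt} (up to $\congr$ and the context rules), the committing party — say the one with endpoint $\ce{s}$ — satisfies $P_1 \auxrel{\commitLab} P_1'$, and the reduction yields
\[
C_1 \congr \singleSession{s}{C_0}\big(\logged{\ce{s}}{P_1'} P_1' \mid \logged{s}{P_2} P_2\big).
\]
The key observation is that in $C_1$ \emph{both} logs have checkpoint equal to code: this is immediate for the active party, and — crucially — also holds for the passive party, whose checkpoint is overwritten by \rulelabel{F-Cmt} with its \emph{current} code $P_2$. Writing $R_1 = P_1'$ and $R_2 = P_2$, we have $C_1 \congr \singleSession{s}{C_0}\big(\logged{\ce{s}}{R_1} R_1 \mid \logged{s}{R_2} R_2\big)$.

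Next I would establish the invariant that non-commit forward reductions never alter the stored checkpoints. By inspection of the forward rules, \rulelabel{F-Cmt} is the only rule rewriting the first component of a log; the rules \rulelabel{F-Com}, \rulelabel{F-Lab} and \rulelabel{F-If} rewrite only the code, while \rulelabel{F-Con} cannot fire inside a running session. As $C_1 \fwred^{*} C_2$ consists solely of forward steps and contains no commit, a straightforward induction on the length of this trace gives
\[
C_2 \congr \singleSession{s}{C_0}\big(\logged{\ce{s}}{R_1} P_1'' \mid \logged{s}{R_2} P_2''\big)
\]
for some codes $P_1'', P_2''$; that is, the checkpoints are still exactly $R_1$ and $R_2$.

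Finally I would apply the rollback step. The reduction $C_2 \rll C_3$ is an instance of \rulelabel{B-Rll}, which restores the code of \emph{both} logs to their respective stored checkpoints, regardless of which party triggers the roll. Hence
\[
C_3 \congr \singleSession{s}{C_0}\big(\logged{\ce{s}}{R_1} R_1 \mid \logged{s}{R_2} R_2\big) \congr C_1,
\]
which is the claim. The main obstacle is the bookkeeping of the two middle steps: verifying that a commit leaves the passive party with checkpoint equal to code (so that restoring the checkpoint genuinely reproduces $C_1$, not merely the active party's state), and confirming by a full rule inspection that no non-commit forward rule touches a stored checkpoint, which together make the rollback land precisely on $C_1$.
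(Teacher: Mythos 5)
Your proof is correct and follows essentially the same route as the paper's: reduce to a single session, observe that after \rulelabel{F-Cmt} both logs have checkpoint equal to code (including the passive party, whose checkpoint is overwritten with its current code), note that commit-free forward steps leave the stored checkpoints untouched, and conclude by \rulelabel{B-Rll}. Your write-up is in fact slightly more explicit than the paper's about the passive party's checkpoint and the rule-by-rule invariant, but the argument is the same.
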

\begin{proof}
Since $C$ is a reachable collaboration, this implies it has been generated from an initial 
collaboration $C_0$. Without losing of generality, similarly to the Lemma~\ref{lemma:init}'s proof, we can assume $C\congr \singleSession{s}{C_0}(\logged{\genSession}{P_1} P_2 \mid \logged{\genSession}{Q_1} Q_2)$. 
Therefore, 
\modiff{assuming w.l.o.g. that 
$P_2$ is the committing process (i.e., $P_2 \auxrel{\commitLab} P_2'$),
we have that $C_1=
 \singleSession{s}{C_0}(\logged{\genSession}{P_2'} P_2' \mid \logged{\genSession}{Q_2} Q_2) 
$.}
By hypothesis, there is no commits in $C_1 \fwred^{*} C_2$, and this implies that the log part
of $C_1$ will never change. Hence, we have that
\modiff{$C_2 \congr  \singleSession{s}{C_0}(\logged{\genSession}{P_2'} P \mid \logged{\genSession}{Q_2} Q)$}
for some processes $P$ and $Q$. By applying \rulelabel{B-Rll} 
we have that
\modiff{$C_2 \rll  \singleSession{s}{C_0}(\logged{\genSession}{P_2'} P_2' \mid \logged{\genSession}{Q_2} Q_2) \congr C_1$,}
as desired.
\end{proof}

The following theorem states that any reachable collaboration is also a 
\textit{forward only} reachable collaboration.
This means that all the states a collaboration reaches via mixed executions (also involving backward reductions)
are states that we can reach from the initial configuration with just forward reductions. 
This assures us that if the system goes back it will reach
previous visited states.

\begin{thm}\label{th:trace}
Let $C_0$ be an initial collaboration. If $C_0 \fwbwred^{*}C_1$ then $C_0 \fwred^{*}C_1$.
\end{thm}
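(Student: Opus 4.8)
The plan is to induct on the number $k$ of backward reductions occurring in the trace $C_0 \fwbwred^{*} C_1$. If $k=0$ the trace is already forward-only and there is nothing to prove. For the inductive step I would isolate the \emph{first} backward reduction, writing the trace as
$$C_0 \fwred^{*} D \instr{s}{\bwred} D' \fwbwred^{*} C_1,$$
so that $C_0 \fwred^{*} D$ contains only forward steps and $D \instr{s}{\bwred} D'$ is a roll or an abort on some session $s$. The crux is to show that $D'$ is itself forward-reachable from $C_0$; once this holds, replacing the prefix by a forward-only trace to $D'$ yields a trace $C_0 \fwred^{*} D' \fwbwred^{*} C_1$ with only $k-1$ backward reductions (the tail $D' \fwbwred^{*} C_1$ is unchanged and contained the other $k-1$ of them), to which the induction hypothesis applies.

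To establish forward-reachability of $D'$, I would first use the Swap Lemma (Lemma~\ref{lm:swp}) repeatedly to bubble all reductions on session $s$ to the end of the forward prefix, rewriting $C_0 \fwred^{*} D$ as
$$C_0 \fwred^{*} E \instr{s}{\fwred}{}^{\!\!*}\, D,$$
where the steps $C_0 \fwred^{*} E$ touch sessions other than $s$, the steps $E \instr{s}{\fwred}{}^{\!\!*}\, D$ are exactly those on $s$, and the endpoint $D$ is preserved since reductions on distinct sessions commute. Since the backward step on $s$ (rule \rulelabel{B-Rll} or \rulelabel{B-Abt}) rewrites only the components belonging to session $s$, the states of all other sessions in $D'$ coincide with those in $E$.

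I would then distinguish the two forms of the backward step. If it is an abort, Lemma~\ref{lemma:init}, read on the isolated session $s$, says that \rulelabel{B-Abt} restores $s$ to its stored initiators, which is precisely its state in $E$ (where $s$ has not yet been created by \rulelabel{F-Con}); hence $D' \congr E$ and $C_0 \fwred^{*} D'$. If it is a roll, I would locate on the segment $E \instr{s}{\fwred}{}^{\!\!*}\, D$ the state $F$ immediately following the last commit on $s$ (or, if no commit occurred, the state immediately following the \rulelabel{F-Con} initialising $s$, which installs the first checkpoint); Lemma~\ref{lemma:fw_trace} then gives $D' \congr F$, and since $C_0 \fwred^{*} E \instr{s}{\fwred}{}^{\!\!*}\, F$ we again obtain $C_0 \fwred^{*} D'$.

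I expect the main obstacle to be the bookkeeping around the Swap Lemma: making precise that bubbling the $s$-reductions to the end preserves both the endpoint $D$ and the states of all other sessions, so that Lemmas~\ref{lemma:init} and~\ref{lemma:fw_trace} — which are naturally read on a single session — apply verbatim to the segment $E \instr{s}{\fwred}{}^{\!\!*}\, D$. A secondary subtlety is the roll that precedes any explicit commit: there the checkpoint restored is the one installed by \rulelabel{F-Con}, so Lemma~\ref{lemma:fw_trace} must be read with the session initialisation playing the role of the initial commit.
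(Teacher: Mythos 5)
Your proposal is correct and follows essentially the same route as the paper: induction on the number of backward reductions, isolating the backward step nearest to $C_0$, discharging it via Lemma~\ref{lemma:init} (abort) or Lemma~\ref{lemma:fw_trace} (roll) to obtain a forward-only prefix, and then applying the induction hypothesis to the resulting trace with one fewer backward move. The additional bookkeeping you flag --- bubbling the $s$-reductions to the end with the Swap Lemma, and treating the checkpoint installed by \rulelabel{F-Con} as the ``last commit'' when a roll precedes any explicit commit --- only makes explicit details that the paper leaves implicit in the statements and proofs of the two auxiliary lemmata.
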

\begin{proof}
By induction on the number $n$ of backward reductions contained into $C_0 \fwbwred^{*}C_1$.
 The base case
($n=0$)
trivially holds. In the inductive case, let us take the backward reduction which is the
nearest to $C_0$. That is:
$$C_0 \fwred^{*} C' \bwred C''\fwbwred^{*}C_1$$
Depending whether it is an $\abt$ or a $\rll$ we can apply respectively Lemma~\ref{lemma:init}
or Lemma~\ref{lemma:fw_trace} to obtain a forward trace of the form
$$C_0 \fwred^{*} C''\fwbwred^{*}C_1$$
and we can conclude by applying the inductive hypothesis on  the obtained trace which contains
less backward moves with respect to the original one.
\end{proof}
\medskip

We now show a variant of the so-called Loop Lemma~\cite{DanosK04}. In a fully reversible
calculus this lemma states that each computational step, either forward or backward, 
can be undone. Since reversibility in \cherry\ is controlled, we have to state that if a 
reversible step is possible (e.g., a rollback is \textit{enabled}) then the effects
of the rollback can be undone. 

\begin{lem}[Safe rollback]\label{lemma:safe_roll}
Let $C_1$ and $C_2$ be reachable collaborations. 
If \mbox{$C_1 \bwred C_2$} then \mbox{$C_2 \fwred^{*} C_1$}.
\end{lem}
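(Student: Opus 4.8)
The plan is to proceed by a case analysis on the rule that justifies the single backward step $C_1 \bwred C_2$, i.e.\ on whether it is an instance of \rulelabel{B-Rll} or of \rulelabel{B-Abt}; the parallel and restriction closures \rulelabel{B-Par} and \rulelabel{B-Res} are absorbed by focusing, as in the remark following Lemma~\ref{lm:rearr}, on the single session $s$ on which the backward step takes place. In both cases the goal is to exhibit a purely forward trace from $C_2$ back to $C_1$; the main tools are Theorem~\ref{th:trace}, which, since $C_1$ is reachable, provides a \emph{forward-only} computation reaching $C_1$, and Lemma~\ref{lm:rearr}, which lets me assume this computation performs all the reductions on $s$ consecutively, starting from a collaboration initial for $s$.

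For the rollback case we have $C_1 \congr \singleSession{s}{C}(\logged{\ce{\genSession}}{Q_1}\roll \mid \logged{\genSession}{Q_2}P_2)$ and $C_2 \congr \singleSession{s}{C}(\logged{\ce{\genSession}}{Q_1}Q_1 \mid \logged{\genSession}{Q_2}Q_2)$. The key observation I would exploit is that rule \rulelabel{F-Cmt} resets the logs of \emph{both} parties to their currently running processes, so that immediately after any commit --- and, likewise, immediately after the session creation by \rulelabel{F-Con} --- the state has process equal to log in both components. I would then take the forward-only trace to $C_1$ given by Theorem~\ref{th:trace} (restricted to $s$ via Lemma~\ref{lm:rearr}) and single out its last commit on $s$ before $C_1$, or the initialising \rulelabel{F-Con} step if no commit occurs. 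Since the only forward rule that modifies logs is \rulelabel{F-Cmt}, the logs stay frozen at $Q_1,Q_2$ from that point on, so the state right after that commit is exactly $\logged{\ce{\genSession}}{Q_1}Q_1 \mid \logged{\genSession}{Q_2}Q_2$, i.e.\ $C_2$; the remaining (commit-free) forward steps of the same trace then witness $C_2 \fwred^{*} C_1$. This is essentially the invariant already behind Lemma~\ref{lemma:fw_trace}.

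For the abort case we have $C_1 \congr \singleSession{s}{C}(\logged{\ce{\genSession}}{Q_1}\abort \mid \logged{\genSession}{Q_2}P_2)$ and $C_2 \congr C$, where $C$ is the pair of session initiators stored in the session construct. Since $C_1$ is reachable, the session $s$ must have been opened from $C$ by an application of \rulelabel{F-Con}; hence, starting from $C_2 \congr C$, I would re-apply \rulelabel{F-Con} and then replay the forward steps of the forward-only trace leading to $C_1$, obtaining $C_2 \fwred^{*} C_1$. I expect the rollback case to be the delicate one: the crux is arguing that the restored configuration $C_2$ coincides with the configuration reached right after the last commit, which relies precisely on the simultaneous double-log update performed by \rulelabel{F-Cmt} together with the fact that logs remain unchanged between two consecutive commits.
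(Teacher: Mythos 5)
Your proposal is correct and follows essentially the same route as the paper's proof: linearise the history into a forward-only trace via Theorem~\ref{th:trace}, split on \rulelabel{B-Abt} versus \rulelabel{B-Rll}, and for the rollback case isolate the last commit (or the initial \rulelabel{F-Con} step) and use the fact that logs are frozen between commits --- which is exactly what the paper packages as Lemma~\ref{lemma:fw_trace} and Lemma~\ref{lemma:init}. The only cosmetic difference is that you invoke Lemma~\ref{lm:rearr} explicitly to localise to the session $s$, where the paper does this implicitly via its ``without loss of generality'' convention.
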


\begin{proof}
Since $C_1$ is a reachable collaboration, we have that there exists an initial collaboration $C_0$
such that $C_0\fwbwred^{*}C_1$.
By applying Theorem~\ref{th:trace} we can rearrange the trace 
such that it contains just forward transitions as follows
$$C_0\fwred^{*} C_1$$
If the backward reduction is obtained by applying \rulelabel{B-Abt},
by Lemma~\ref{lemma:init} we have $C_2\congr C_0$, from which the thesis trivially follows.
Instead, if the backward reduction is obtained by applying \rulelabel{B-Rll},
we proceed by case analysis depending on the presence of commit reductions in the trace.
If they are present, we select the last of such commit, that is we can decompose the trace in the following way:
$$C_0 \fwred^{*}\ \cmt C_{cmt} \fwred^{*}C_1\rll C_2$$
and by applying Lemma~\ref{lemma:fw_trace} we have that $C_2 \fwred^{*}\congr C_1$ as desired.

In the case there is no commit in the trace, 
without losing of generality we can assume
\modiff{$C_0\congr\requestAct{a}{x_1}{P_1} \mid \acceptAct{a}{x_2}{P_2}$ and we have:
$$
C_0
 \fwred
 \singleSession{s}{C_0}(\logged{\genSession}{P_1'} P_1' \mid \logged{\genSession}{P_2'} P_2')
 \fwred^{*}
 \singleSession{s}{C_0}(\logged{\genSession}{P_1'} P_1'' \mid \logged{\genSession}{P_2''} P_2')\congr C_1
$$}
By rule \rulelabel{B-Rll}, we have 
\modiff{$C_2 \congr \singleSession{s}{C_0}(\logged{\genSession}{P_1'} P_1' \mid \logged{\genSession}{P_2'} P_2').$}
Thus, we can conclude by noticing that $C_0\fwred C_2$ 
\modiff{is} the first reduction in $C_0\fwred^{*} C_1$.
\end{proof}

A rollback always brings the system 
to the last taken checkpoint.
We recall that, since there may be sessions running in parallel, a collaboration may be able to do different rollbacks within different sessions. Thus, determinism only holds relative to a given session, and rollback within one session has no effect on any other parallel session.

\begin{lem}[Determinism]\label{lemma:determinism}
Let C be a reachable collaboration. 
\added{If \mbox{$C \rlls{} C'$} and \mbox{$C \rlls{} C''$} then \mbox{$C' \congr C''$}.}
\end{lem}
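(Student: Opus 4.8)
The plan is to reduce the statement to the single concrete shape a session can have when a rollback fires, and then to observe that rule \rulelabel{B-Rll} discards the triggering party's continuation entirely, so the outcome depends only on the (fixed) stored checkpoints. In line with the discussion preceding the lemma, I read $\rlls{}$ as a rollback taking place on one fixed session $s$; the presence of other parallel sessions is handled by Lemma~\ref{lm:rearr}, which lets us isolate $s$ and, without loss of generality, treat $C$ as a collaboration initial for $s$.

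First I would use reachability together with Lemma~\ref{lm:rearr} to assume, up to $\congr$, that the part of $C$ concerning $s$ has the canonical form $\singleSession{s}{C_0}(\logged{\ce{\genSession}}{Q_1} P_1 \mid \logged{\genSession}{Q_2} P_2)$, where $Q_1$ and $Q_2$ are the two committed checkpoints and $P_1$, $P_2$ are the current running processes. Any rollback reduction on $s$ must be derived by rule \rulelabel{B-Rll} (lifted through \rulelabel{B-Res} and, where needed, \rulelabel{B-Par}). Its premise requires one of the two parties to exhibit a roll action, i.e. $P_1 \auxrel{\rollLab} P_1'$ or, symmetrically (using commutativity of $\mid$ in $\congr$), $P_2 \auxrel{\rollLab} P_2'$; by \rulelabel{P-Rll} this forces the relevant process to be $\roll$, possibly after unfolding recursion via $\congr$.

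The crucial observation is then that the right-hand side of \rulelabel{B-Rll} is $\singleSession{s}{C_0}(\logged{\ce{\genSession}}{Q_1} Q_1 \mid \logged{\genSession}{Q_2} Q_2)$ in either case: the rule replaces both running processes by their stored checkpoints and never inspects $P_1'$, $P_2'$, nor which side triggered the step. Hence both $C'$ and $C''$ are structurally congruent to this same term, which yields $C' \congr C''$ as required.

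I expect the only real obstacle to be the bookkeeping that guarantees uniqueness of the checkpoint pair $(Q_1,Q_2)$ attached to the session $s$: this is exactly where reachability is needed, since a malformed runtime term could in principle carry inconsistent logs. Invoking Lemma~\ref{lm:rearr} and the notion of a collaboration initial for $s$ supplies the well-formed canonical shape and rules this out. Once that shape is fixed, determinism is immediate, because \rulelabel{B-Rll} behaves as a constant function of the logs, independent of the triggering party.
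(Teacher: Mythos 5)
Your proposal is correct and follows essentially the same route as the paper: both reduce $C$ (up to $\congr$, using reachability) to the canonical session shape with two logged checkpoints and observe that the right-hand side of rule \rulelabel{B-Rll} restores both stored checkpoints identically regardless of which party triggered the roll, so the two results coincide. The only cosmetic difference is that the paper invokes Theorem~\ref{th:trace} and a (largely inessential) case split on the presence of commits in the forward trace, whereas you isolate the session via Lemma~\ref{lm:rearr}; the key observation is the same.
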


\begin{proof}
Since $C$ is a reachable collaboration, it is has been generated by an initial collaboration
$C_0$ of the form $C_0 = \requestAct{a}{x}{P_1}   \mid \acceptAct{a}{x}{P_2}$, 
and by Theorem~\ref{th:trace} we have that
$C_0 \fwred^{*} C$. We distinguish two cases, whether in the trace
there has been at least one commit or not. In the first case, we can decompose the trace in such a way
to single out the last commit as follows:
$$C_0 \fwred^{*}C_{cmt} \fwred^{*} C$$
so that in the reduction $C_{cmt} \fwred^{*} C$ there is no commit. 
If from $C$ the rollbacks $C\rlls{} C'$ and $C\rlls{} C''$ are triggered by the same process, 
the thesis trivially follows.
In the other case, we have that:
$$
C \congr \singleSession{s}{C_0}(\logged{\genSession}{Q_1} Q_1' \mid \logged{\genSession}{Q_2} Q_2') 
$$
with both $Q_1'$ and $Q_2'$ able to trigger a rollback. If the roll action is executed by $Q_1'$, by applying rule \rulelabel{B-Rll} we have that
$$
\singleSession{s}{C_0}(\logged{\genSession}{Q_1} Q_1' \mid \logged{\genSession}{Q_2} Q_2') 
 \rll 
\singleSession{s}{C_0}(\logged{\genSession}{Q_1} Q_1 \mid \logged{\genSession}{Q_2} Q_2) 
\congr C'$$
If the roll is triggered by $Q_2'$, , by applying rule \rulelabel{B-Rll} up to structural congruence we have that
$$
\singleSession{s}{C_0}(\logged{\genSession}{Q_1} Q_1' \mid \logged{\genSession}{Q_2} Q_2') 
 \rll
\singleSession{s}{C_0}(\logged{\genSession}{Q_1} Q_1 \mid \logged{\genSession}{Q_2} Q_2) 
\congr C''$$
We can conclude by noticing that $C' \equiv C''$, as desired.
\end{proof}

The last rollback property states that a  collaboration cannot go back 
 to a state prior to the execution of a commit action, that is commits have a 
 persistent effect. Let us note that recursion does not affect this theorem, since at the beginning of a collaboration computation 
 there is always a new session establishment, leading to a stack of past configurations. Hence it is never the case that from a collaboration $C$ you can reach again $C$ via forward steps.
\begin{thm}
[Commit persistency]\label{lemma:undoability}
Let C be a reachable collaboration. If $C \cmts{} C'$ then there exists no $C''$ such that
$C' \fwred^{*}\,\rlls{} C''$ and $C'' \fwred^{+} C$.
\end{thm}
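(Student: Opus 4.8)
The plan is to argue by contradiction. Suppose such a $C''$ existed, so that
$$C \cmts{} C' \fwred^{*} D \rlls{} C'' \qquad\text{and}\qquad C'' \fwred^{+} C$$
for some intermediate collaboration $D$. First I would invoke Lemma~\ref{lm:rearr} to restrict attention to the single session $s$ on which the commit $C \cmts{} C'$ and the subsequent rollback act: reductions on other sessions can be swapped away by Lemma~\ref{lm:swp} and play no role in closing the loop, so we may treat $C$ as initial for $s$ and reason about $s$ alone. Since $C$ is reachable, Theorem~\ref{th:trace} also gives a purely forward trace reaching it, which anchors the argument below.

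The core step is to turn the mixed computation into a purely \emph{forward} loop. Consider the last commit occurring in the segment $C \cmts{} C' \fwred^{*} D$; such a commit exists because $C \cmts{} C'$ is itself one. Writing $\hat C \cmt \hat C_1$ for this last commit, there is no further commit in $\hat C_1 \fwred^{*} D$, so Lemma~\ref{lemma:fw_trace} applies to the rollback $D \rll C''$ and yields $C'' \congr \hat C_1$. As $\hat C_1$ is reached from $C$ by at least the forward step $C \cmts{} C'$, we obtain $C \fwred^{+} C''$. Combining this with the hypothesis $C'' \fwred^{+} C$ produces a purely forward loop $C \fwred^{+} C$ that traverses the commit $C \cmts{} C'$.

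It then remains to show that no such forward loop can exist, which delivers the contradiction. Here I would exploit the effect of the commit on the checkpoint stored in the log of the committing party: in $C$ that party has code $\commitAct{R}$ together with some checkpoint $Q$, whereas rule \rulelabel{F-Cmt} overwrites the checkpoint with the strictly advanced continuation $R$ in $C'$. For the loop to return to $C$, whose log still carries $Q$, forward steps would have to reinstate the earlier checkpoint $Q$; but along forward reductions a log is modified only by \rulelabel{F-Cmt}, which always installs the current (already advanced) code and never restores a previous checkpoint, and the session construct introduced by \rulelabel{F-Con} cannot be dismantled by any forward rule. Hence the state reached just after the commit can never be brought back, by forward steps alone, to a state preceding it, contradicting $C \fwred^{+} C$.

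The main obstacle is exactly this last step: making rigorous the claim that forward computation cannot re-enter a configuration lying before a commit. The delicate case is recursion, which can in principle return a process to a syntactically earlier shape; as the remark following the statement indicates, this is resolved by observing that every session computation is anchored by the irreversible session establishment and that commits have a persistent, monotone effect on the stored checkpoints, so the pre-commit checkpoint $Q$ can never be reinstalled by forward reductions. I expect the only real work to be the careful bookkeeping, via the behaviour of \rulelabel{F-Cmt} on logs, confirming that no forward path can revert a committed checkpoint to its predecessor.
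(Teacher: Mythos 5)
Your proof follows essentially the same route as the paper's: argue by contradiction, use Theorem~\ref{th:trace} to obtain a purely forward trace, identify the rollback target as the state reached just after the last preceding commit (via Lemma~\ref{lemma:fw_trace}), and conclude by observing that forward reduction cannot bring the collaboration back to the pre-commit state $C$. The only differences are cosmetic — you merge the paper's two cases (commit present or absent in the segment after $C'$) into a single one by selecting the last commit over the whole segment, and you justify the final non-reachability claim through the monotone effect of \rulelabel{F-Cmt} on logs, which is precisely the informal argument the paper itself relies on (and leaves at the same level of detail) in the remark preceding the theorem.
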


\begin{proof}
We proceed by contradiction. Suppose that there exists $C''$ such that 
$C' \fwred^{*}\rlls{} C''$ and $C'' \fwred^{+} C$.
Since $C$ is a reachable collaboration, thanks to Theorem~\ref{th:trace}
we have that there exists an initial collaboration $C_0$ 
such that $C_0\fwred^{*}C\cmts{s} C'$. Since a rollback brings back the 
collaboration to a point before a commit, this means it has been restored 
\modiff{a checkpoint committed before 
the last one}
(by \rulelabel{B-Rll}, indeed, only processes stored in logs can be 
\modiff{restored}). 
This implies that there exist at least two different commits in the trace
such that
$$C_0 \fwred^{*}\modiff{\cmts{}}
C_{cmt}\fwred^{*}C \cmts{} C'\fwred^{*}\modiff{C_{rl}}\rlls{} C''$$
with $C'' = C_{cmt}$. 
Now, 
consider the commit performed by $C$,
supposing that it is triggered by
$P_c$ evolving to $P'_c$ in doing that,  we have that:
\begin{align*}
&C \congr 	\singleSession{s}{C_0}(\logged{\genSession}{P} P_c \mid \logged{\genSession}{Q} Q_c) 
\;\text{ and }\; C'= \singleSession{s}{C_0}(\logged{\genSession}{P'_c} P'_c \mid \logged{\genSession}{Q_c} Q_c)
\end{align*}
Now, 
let us consider the case where
$C'\fwred^{*} C_{rl}$ without any commit being present in the trace, 
hence:
$$C_{rl} \congr 	\singleSession{s}{C_0}(\logged{\genSession}{P'_c} P_{rl} \mid \logged{\genSession}{Q_c} Q_{rl}) $$
By hypothesis, from $C_{rl}$ a rollback is possible. 
Regardless the rollback is triggered by $P_{rl}$ or  $Q_{rl}$, we have that
$C_{rl} \rll C'$, \modiff{hence $C'\congr C''$.}
Now, from $C'$ we cannot reach $C$ (i.e., $C'\fwred^{+}\!\!\!\!\!\!\!\!\!\!/\ \ \ \  C$), 
as $C'$ is derived from $C$ and the rollback can only bring the collaboration back to $C'$.
Therefore, we have $C''\fwred^{+}\!\!\!\!\!\!\!\!\!\!/\ \ \ \  C$,
which violates the initial hypothesis.

Let us consider, instead, the case where 
$C'\fwred^{*} C_{rl}$ with at least a commit being present in the trace. By applying rule \rulelabel{B-Rll}, from $C_{rl}$ the rollback will lead to the last committed collaboration $C_{cmt}'$ such that 
$C' \fwred^{*}\cmts{s}
C_{cmt}'\fwred^{*}C_{rl}$.
Hence, $C_{cmt}'\congr C''$ and, like in the previous case,
from $C_{cmt}'$ we cannot reach $C$. 
Again we have $C''\fwred^{+}\!\!\!\!\!\!\!\!\!\!/\ \ \ \  C$,
which violates the initial hypothesis.
Therefore, in each case the initial hypothesis is violated, and hence we conclude. 
\end{proof}

\subsection{Soundness properties}
The second group of properties concerns soundness guarantees. 
The definition of these properties requires formally characterising the errors that 
may occur in the execution of an unsound collaboration. We rely on error reduction 
(as in \cite{ChenDSY17}) rather than on the usual static characterisation of errors
(as, e.g., in \cite{YoshidaV07}), since rollback errors cannot be easily detected statically. 
In particular, we extend the syntax of \cherry\ collaborations
with the $\rollError$ and $\comError$ terms, denoting respectively 
collaborations in rollback and communication error states:
$$
\begin{array}{@{}r@{ \ }c@{\ }l@{}}
\\[-.8cm]
C & ::= &
\ldots
\, \mid\,
\graybox{$\logged{\genSession}{\checkpointType{P_1}}P_2$}    
\mid
\graybox{$\rollError$}
\mid
\graybox{$\comError$}
\\[-.3cm]
\end{array}
$$
where $\checkpointType{P}$ denotes either a checkpoint $P$ committed by the party 
or a checkpoint $\imposed{P}$ imposed by the other party of the session. 
%
%
%
%
\begin{figure*}[t]
	\centering
	\small
	\modif{
	\begin{tabular}{@{}c@{}}
	$
	\infer[$\rulelabel{E-Com$_1$}$]{
	\logged{{\genSession}}{\checkpointType{Q_1}} P_1 
	\mid 
	\logged{\genSession}{\checkpointType{Q_2}} P_2
	\fwred
	\comError
	}
	{
	P_1 \auxrel{\send{{k}}{v}} P_1'
	\quad
	\neg P_2 \BarbInp{k}
        \ \
	\added{\neg P_2 \Barb{roll}}
	}
	$		
	\quad\ 
	$
	\infer[$\rulelabel{E-Com$_2$}$]{
	\logged{{\genSession}}{\checkpointType{Q_1}} P_1 
	\mid 
	\logged{\genSession}{\checkpointType{Q_2}} P_2
	\fwred 
	\comError
	}
	{
	P_1 \auxrel{\receive{{k}}{x}} P_1'
	\quad
	\neg P_2 \BarbOut{k}
        \ \ 
        \added{\neg P_2 \Barb{roll}}
	}
	$	
	\\[.3cm]
	$
	\infer[$\rulelabel{E-Lab$_1$}$]{
	\logged{\ce{\genSession}}{\checkpointType{Q_1}} P_1 
	\mid 
	\logged{\genSession}{\checkpointType{Q_2}} P_2
	\fwred
	\comError
	}
	{
	P_1 \auxrel{\select{{k}}{l}} P_1'
	\quad
	\neg P_2\BarbBranching{k}{l}
        \ \ \added{\neg P_2 \Barb{roll}}
	}
	$			
	\qquad
	$
	\infer[$\rulelabel{E-Lab$_2$}$]{
	\logged{\ce{\genSession}}{\checkpointType{Q_1}} P_1 
	\mid 
	\logged{\genSession}{\checkpointType{Q_2}} P_2
	\fwred
	\comError
	}
	{
	P_1 \auxrel{\branching{k}l} P_1'
	\quad
	\neg P_2\BarbSelect{k}{l}
        \ \ \added{\neg P_2 \Barb{roll}}
	}
	$
		\\[.3cm]	
	$
	\infer[$\rulelabel{E-Cmt$_1$}$]{
	\logged{\ce{\genSession}}{\checkpointType{Q_1}} P_1 
	\mid
	\logged{\genSession}{\checkpointType{Q_2}} P_2
	\fwred 
	\logged{\ce{\genSession}}{P_1'} P_1' 
	\mid
	\logged{\genSession}{\imposed{P_2}} P_2
	}
	{
	P_1 \auxrel{\commitLab} P_1'
	& & 
	\checkpointType{Q_2} \neq P_2	
	}
	$	
	\\[.3cm]
	$
	\infer[$\rulelabel{E-Cmt$_2$}$]{
	\logged{\ce{\genSession}}{\checkpointType{Q_1}} P_1 
	\mid
	\logged{\genSession}{\checkpointType{Q_2}} P_2
	\fwred 
	\logged{\ce{\genSession}}{P_1'} P_1' 
	\mid
	\logged{\genSession}{\checkpointType{Q_2}} P_2
	}
	{
	P_1 \auxrel{\commitLab} P_1'
	& & 
	\checkpointType{Q_2} = P_2
	}
	$	
	\\[.3cm]		
	$
	\infer[$\rulelabel{E-Rll$_1$}$]{
	\logged{\ce{\genSession}}{Q_1} P_1 
	\mid
	\logged{\genSession}{\checkpointType{Q_2}} P_2
	\bwred
	\logged{\ce{\genSession}}{Q_1} Q_1
	\mid
	\logged{\genSession}{\checkpointType{Q_2}} Q_2
	}
	{	
	P_1 \auxrel{\rollLab} P_1'		
	}
	$	
	\\[.3cm]	
	$
	\infer[$\rulelabel{E-Rll$_2$}$]{
	\logged{\ce{\genSession}}{\imposed{Q_1}} P_1 
	\mid
	\logged{\genSession}{\checkpointType{Q_2}} P_2
	\fwred 
	\rollError
	}
	{
	P_1 \auxrel{\rollLab} P_1'	
	}
	$
	\\[.1cm]
	\hline
	\end{tabular}}	
	\caption{\Cherry\ semantics: error reductions.}
	\label{tab:error_red}
\end{figure*}

The semantics of \cherry\ is extended as well 
\modif{by the additional
error reduction rules in  Fig.~\ref{tab:error_red},
where \rulelabel{E-Cmt$_1$} and \rulelabel{E-Cmt$_2$} replace \rulelabel{F-Cmt},
and \rulelabel{E-Rll$_1$} replaces  \rulelabel{B-Rll}; moreover,
$\checkpointType{\ }$ is used in the checkpoints of the other rules.
}
The error semantics does not affect the normal behaviour of \cherry\ specifications, 
but it is crucial for stating our soundness theorems. 
Its definition is based on the notion of \emph{barb} predicate:
$P\Barb{\mu}$ holds if there exists $P'$ such that $P \Rightarrow P'$ and 
$P'$ can perform an action $\mu$, 
where $\mu$ stands for $k?$, $k!$ ,$k\triangleleft l$, $k\triangleright l$, or $roll$
(i.e., input, output, select, branching action along session channel $k$, \added{or roll action});
$\Rightarrow$ is the reflexive and transitive closure of $\auxrel{\ite}$.
The meaning of the error semantics rules is as follows.
A \emph{communication error} takes place in a collaboration when 
a session participant is willing to perform an output but the other participant is 
\added{ready to perform neither the corresponding input nor a roll back}
(rule \rulelabel{E-Com$_1$}) or vice versa
\modif{(rule \rulelabel{E-Com$_2$}),} 
or 
one participant is willing to perform a selection but the corresponding branching is not available 
on the other side \modif{(rule \rulelabel{E-Lab$_1$})}
or viceversa \modif{(rule \rulelabel{E-Lab$_2$}).}
Instead, a \emph{rollback error} takes place in a collaboration when a participant is willing 
to perform a rollback action but her checkpoint has been imposed by the other participant
(rule \rulelabel{E-Rll$_2$}). To enable this error check, the rules for commit and rollback 
\modif{(rules \rulelabel{E-Cmt$_1$}, \rulelabel{E-Cmt$_2$}, and  \rulelabel{E-Rll$_1$})}
have been modified to keep track of imposed overwriting of checkpoints. 
This information is not relevant for the runtime execution of processes, 
but it is necessary for characterising 
the rollback errors that our 
type-based approach prevents. 

Besides defining the error semantics, we also need to define erroneous collaborations,
based on the following notion of context:
$
\collContext \ ::=  \ [\cdot] \,\, \mid\,\,  \collContext \!\mid\! C \,\, \mid\,\,  \singleSession{s}{C}\ \collContext
$.

\begin{defi}[Erroneous collaborations]
A collaboration $C$ is \emph{communication (resp. rollback) erroneous} if 
$C = \collContext[\comError]$ (resp. $C = \collContext[\rollError]$).
\end{defi}
The key soundness results follow: a rollback safe collaboration never 
reduces to either a rollback erroneous collaboration (Theorem~\ref{th:roll_soundness}) or a
communication erroneous collaboration (Theorem~\ref{th:com_soundness}). 

\begin{thm}[Rollback soundness]
\label{th:roll_soundness}
If $C$ is a \rollSafe\ collaboration, then we have that $C\ {\fwbwred\!\!\!\!\!\!/\ }^*$\ $\collContext[\rollError]$.
\end{thm}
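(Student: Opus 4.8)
The plan is to argue by contradiction, showing that a process-level rollback error forces the companion type configuration into the error type $\errType$, which the compliance guaranteed by roll-safety forbids. Suppose toward a contradiction that $C \fwbwred^{*} \collContext[\rollError]$, and consider the first reduction along this trace that introduces $\rollError$: its source is a normal (error-free) reachable collaboration $C'$, and the step is an instance of \rulelabel{E-Rll$_2$}, so in some session $s$ one party has an imposed checkpoint $\imposed{Q_1}$ while its running process is willing to roll. Since $\rollError$ lives inside a single session, I would use the rearrangement lemmata (Lemma~\ref{lm:swp} and Lemma~\ref{lm:rearr}) to group all reductions on $s$, and Theorem~\ref{th:trace} to replace the prefix by a forward-only trace $C \fwred^{*} C'$, so that the offending session can be analysed in isolation. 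The session $s$ was created by \rulelabel{F-Con} from a pair $\requestAct{a}{x_1}{P_1} \mid \acceptAct{a}{x_2}{P_2}$ whose inferred types $\typeT_1$ (for $\ce{a}$) and $\typeT_2$ (for $a$) satisfy $\typeT_1 \compliant \typeT_2$ by Definition~\ref{def:rollback_safety}, since $C$ is \rollSafe.

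Next I would set up a forgetful abstraction $\forgetMap$ from single-session runtime collaborations to type configurations: a session $\singleSession{s}{C_0}(\logged{\ce{\genSession}}{\checkpointType{Q_1}}P_1 \mid \logged{\genSession}{\checkpointType{Q_2}}P_2)$ is mapped to $\initConf{\typeT_1}{\typeT_2}\conf{\checkpointType{\typeU_1}}{\typeV_1} \confcomp \conf{\checkpointType{\typeU_2}}{\typeV_2}$, where each running process and each stored checkpoint is replaced by its session type and the imposed marker $\imposed{\cdot}$ is kept exactly where it occurs at the process level. The crux is an operational-correspondence lemma: if $C' \fwred C''$ on $s$ (or $C'$ takes the final \rulelabel{E-Rll$_2$} step), then $\forget{C'} \typered \forget{C''}$. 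This is checked rule by rule, the decisive point being that the checkpoint bookkeeping of the process rules mirrors that of the type rules: \rulelabel{E-Cmt$_1$}/\rulelabel{E-Cmt$_2$} match \rulelabel{TS-Cmt$_1$}/\rulelabel{TS-Cmt$_2$} (same side condition on whether the passive checkpoint equals the current state), \rulelabel{E-Rll$_1$} matches \rulelabel{TS-Rll$_1$}, and, crucially, \rulelabel{E-Rll$_2$}—which produces $\rollError$ precisely from an imposed checkpoint $\imposed{Q_1}$—matches \rulelabel{TS-Rll$_2$}, which produces $\errType$ from the imposed type checkpoint $\imposed{\typeU_1}$. Communication and labelled choice abstract to \rulelabel{TS-Com} and \rulelabel{TS-Lab} (values collapsing to sorts), conditionals become the internal choice \rulelabel{TS-Tau}, and recursion is handled by the unfolding $\recAct{X}{P} \equiv P\subst{\recAct{X}{P}}{X}$.

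With the correspondence in hand the contradiction is immediate. Composing the lemma along the forward trace, the initial configuration $\initConf{\typeT_1}{\typeT_2}\conf{\typeT_1}{\typeT_1} \confcomp \conf{\typeT_2}{\typeT_2}$ (which is exactly the log configuration produced by \rulelabel{F-Con}, and the one appearing in Definition~\ref{def:compliance}) reduces by $\typered^{*}$ to a configuration carrying $\errType$ on the party that attempted the roll. No type rule applies to $\errType$, so that configuration is terminal, and since $\errType \neq \inactType$ it is not a completed state. This contradicts $\typeT_1 \compliant \typeT_2$, which by Definition~\ref{def:compliance} requires every reachable non-reducible configuration to carry $\inactType$ on both sides. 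Hence no reduction of $C$ to $\collContext[\rollError]$ exists.

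The main obstacle is the operational-correspondence lemma, and within it two delicate points. The first is defining $\forgetMap$ on logs so that the imposed/non-imposed status is tracked faithfully across reductions; this is what makes the commit and rollback cases line up, and it exploits that the error-instrumented rules \rulelabel{E-Cmt$_1$}, \rulelabel{E-Cmt$_2$}, \rulelabel{E-Rll$_1$}, \rulelabel{E-Rll$_2$} were designed to carry exactly the side conditions of their type-level counterparts. The second is the value-to-sort abstraction: because communication forgets the transmitted value and a conditional becomes a nondeterministic internal choice, the type semantics \emph{over}-approximates the set of process runs—but this is precisely why compliance, being a \emph{universal} statement over all reachable terminals, is strong enough to exclude the error on every run. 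A minor supporting step is checking that the types threaded along the forward-only trace given by Theorem~\ref{th:trace} remain genuine residuals of $\typeT_1$ and $\typeT_2$, so that the configuration produced by the correspondence is indeed one reachable from the initial compliant configuration.
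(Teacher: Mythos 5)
Your proposal is correct and follows essentially the same route as the paper: a contradiction argument that isolates the offending session, uses roll-safety to obtain compliance of the two inferred types, and establishes an operational correspondence between process reductions and type-configuration reductions that preserves the imposed/non-imposed status of checkpoints (your forgetful map $\forgetMap$ is exactly the content of the paper's Lemmas~\ref{lemma:subj} and~\ref{lemma:typeCorrespondence}), so that \rulelabel{E-Rll$_2$} is mirrored by \rulelabel{TS-Rll$_2$} and the resulting terminal $\errType$ configuration contradicts Definition~\ref{def:compliance}. The only (inessential) differences are presentational: the paper isolates the session via the static scope of $\singleSession{s}{\_}$ rather than via Lemmas~\ref{lm:swp}--\ref{lm:rearr}, and it proves the correspondence for backward steps directly instead of first normalising to a forward-only trace with Theorem~\ref{th:trace}.
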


\modiff{
\begin{proofS} 
The proof proceeds by contradiction 
(the full proof is reported in Appendix~\ref{proofs:binary}).
\end{proofS}}

\bigskip

\begin{thm}[Communication soundness]
\label{th:com_soundness}
If $C$ is a \rollSafe\ collaboration, then we have that $C\ {\fwbwred\!\!\!\!\!\!/\ }^*\ \collContext[\comError]$.
\end{thm}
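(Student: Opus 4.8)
The plan is to argue by contradiction, along the same lines as the proof of Theorem~\ref{th:roll_soundness}. Assume $C$ is \rollSafe\ yet $C \fwbwred^{*} \collContext[\comError]$. A $\comError$ is produced only by the rules \rulelabel{E-Com$_1$}, \rulelabel{E-Com$_2$}, \rulelabel{E-Lab$_1$} and \rulelabel{E-Lab$_2$}, each of which fires inside a single session. First I would invoke Lemma~\ref{lm:rearr} (and, underlying it, the Swap Lemma~\ref{lm:swp}) to regroup the offending trace so that all reductions on the session $s$ carrying the error are consecutive, and then reason on the collaboration that is initial for $s$, so that $s$ can be analysed in isolation from the other concurrent sessions. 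Let $C'$ be the state immediately preceding the error step; since $C'$ is reachable, Theorem~\ref{th:trace} lets me assume the trace $C \fwred^{*} C'$ is forward only, with $C' \fwred \comError$ as its last (forward) step.

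The core of the argument is a subject-reduction correspondence relating the forward execution of session $s$ to the reduction semantics of type configurations (Fig.~\ref{fig:typeSemantics_ext}). Let $\typeT_1 \compliant \typeT_2$ be the compliant pair associated by $C \hasType \sessions$ to the two parties of $s$ (Definition~\ref{def:rollback_safety}). I would prove, by induction on $C \fwred^{*} C'$, that whenever the session has the runtime shape $\singleSession{s}{C_0}(\logged{\ce{s}}{Q_1} P_1 \mid \logged{s}{Q_2} P_2)$ there is a reachable configuration $\initConf{\typeT_1}{\typeT_2}\conf{\checkpointType{\typeU_1}}{\typeV_1} \confcomp \conf{\checkpointType{\typeU_2}}{\typeV_2}$, with $\initConf{\typeT_1}{\typeT_2}\conf{\typeT_1}{\typeT_1} \confcomp \conf{\typeT_2}{\typeT_2} \typered^{*} \initConf{\typeT_1}{\typeT_2}\conf{\checkpointType{\typeU_1}}{\typeV_1} \confcomp \conf{\checkpointType{\typeU_2}}{\typeV_2}$, in which each $\typeV_i$ is the type inferred for the running $P_i$ and each $\checkpointType{\typeU_i}$ is the (possibly imposed) type inferred for the logged $Q_i$. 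The inductive step pairs each process rule with its type counterpart: \rulelabel{F-Com} with \rulelabel{TS-Com}, \rulelabel{F-Lab} with \rulelabel{TS-Lab}, \rulelabel{F-If} with \rulelabel{TS-If$_1$}/\rulelabel{TS-If$_2$}, \rulelabel{F-Cmt} with \rulelabel{TS-Cmt$_1$}/\rulelabel{TS-Cmt$_2$} (the two sub-cases matching exactly the $\checkpointType{\typeU}_2 \neq \typeT_2$ side conditions that govern the imposed marking), and any residual \rulelabel{B-Rll} with \rulelabel{TS-Rll$_1$}. Two auxiliary facts are required: that substituting a value for a variable leaves inferred types unchanged (communication types record only the sort $S$, cf.\ \rulelabel{F-Com} against \rulelabel{TS-Com}), and that the unfolding of recursion performed by $\congr$ is mirrored by \rulelabel{TS-rec}.

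With this correspondence, I would read off the contradiction from the shape of the final error step. Take \rulelabel{E-Com$_1$}: then $P_1 \auxrel{\send{k}{v}} P_1'$ forces $\typeV_1 = \outType{S}.\typeT$, while $\neg P_2 \BarbInp{k}$ and $\neg P_2 \Barb{roll}$ say that after exhausting all internal steps (which at the type level are exactly the \rulelabel{TS-Tau}, \rulelabel{TS-rec} and commit moves) the running type of $P_2$ offers neither the input $\inpType{S}$ nor $\rollType$. Hence the corresponding configuration may still perform internal moves on its second component, but in every such derivative no configuration rule applies---\rulelabel{TS-Com}, \rulelabel{TS-Lab}, \rulelabel{TS-Rll$_1$} and \rulelabel{TS-Abt} are all blocked---so (using the finiteness established in Theorem~\ref{th:decidability}) it reaches a stuck configuration whose first running type is still $\outType{S}.\typeT \neq \inactType$. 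This contradicts $\typeT_1 \compliant \typeT_2$ by Definition~\ref{def:compliance}. The cases \rulelabel{E-Com$_2$}, \rulelabel{E-Lab$_1$} and \rulelabel{E-Lab$_2$} are symmetric, replacing the dangling output by a dangling input, selection or branching. The main obstacle is the correspondence itself: it must faithfully track the imposed/non-imposed status of checkpoints and the sort abstraction of exchanged values, and---most delicately---it must certify that a process-level barb failure (which quantifies over the whole internal closure $\Rightarrow$, with commit treated as a silent local step exactly as in the type-configuration semantics) translates into a genuinely stuck type configuration rather than one that is only transiently blocked; this equivalence between ``no barb'' and ``stuck leaf'' is precisely what turns non-compliance into the sought contradiction.
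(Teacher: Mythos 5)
Your proposal is correct and follows essentially the same route as the paper: a proof by contradiction that isolates the offending session, establishes a subject-reduction correspondence between collaboration reductions and type-configuration reductions (the paper packages this as Lemma~\ref{lemma:subj} and its iterated form Lemma~\ref{lemma:typeCorrespondenceCom}), and then derives a contradiction with Definition~\ref{def:compliance} from the stuck, non-$\inactType$ configuration produced by the dangling output/input/selection/branching. The only (inessential) deviations are that you normalise to a forward-only trace via Theorem~\ref{th:trace} and isolate the session via Lemma~\ref{lm:rearr} where the paper uses a scoping argument on $\singleSession{s}{\_}$, and note that in the paper the barb closure $\Rightarrow$ is generated by $\auxrel{\ite}$ alone, so commit is not a silent step there.
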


\modiff{
\begin{proofS} 
The proof proceeds by contradiction 
(the full proof is reported in Appendix~\ref{proofs:binary}).
\end{proofS}}

\bigskip

We conclude with a  progress property of \cherry\ sessions: 
given a rollback safe collaboration that can initiate a session, 
each collaboration reachable from it either is able to progress on the session 
with a forward/backward reduction step or has correctly reached the end of 
the session. This result follows from 
Theorems~\ref{th:roll_soundness} and~\ref{th:com_soundness}, and from
the fact that we consider binary sessions without delegation and subordinate sessions. 

\begin{thm}[Session progress]
\label{th:deadlock_freedom}
Let $C=(\requestAct{a}{x_1}{P_1} \mid \acceptAct{a}{x_2}{P_2})$ be a \rollSafe\ collaboration.
If $C \!\fwbwred^*\! C'$ then either $C' \!\fwbwred\! C''$ for some $C''$ or 
$C' \!\congr\! \singleSession{s\!}{\!C}
(\logged{{\genSession}}{\checkpointType{Q_1\!}}\! \inact 
\!\mid \!
\logged{\genSession}{\checkpointType{Q_2\!}}\! \inact)$
for some $\checkpointType{Q_1}$ and $\checkpointType{Q_2}$.
\end{thm}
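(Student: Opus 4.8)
The plan is to argue by an exhaustive case analysis on the shape of the reachable collaboration $C'$, using the two soundness theorems to discard every configuration that would otherwise be stuck. First I would apply Theorem~\ref{th:trace} to replace $C \fwbwred^{*} C'$ by a purely forward trace $C \fwred^{*} C'$. Since $C$ is a single request/accept pair, its only reduction is the session initiation \rulelabel{F-Con}, and no subordinate session nor delegation can arise; hence every forward-reachable $C'$ is either $C$ itself or a session construct $C' \congr \singleSession{s}{C}(\logged{\ce{s}}{Q_1} P_1 \mid \logged{s}{Q_2} P_2)$. If $C' \congr C$, then \rulelabel{F-Con} already gives $C' \fwbwred C''$, so the first disjunct holds, and the rest of the argument concerns the session-construct case. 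There I would first unfold any leading recursion through the structural-congruence law $\recAct{X}{P} \congr P\subst{\recAct{X}{P}}{X}$, so that each component process $P_i$ is in head normal form: $\inact$, a communication prefix (output, input, selection or branching), a conditional, a commit, a roll, or an abort.

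Next I would dispatch the configurations in which at least one party offers an action that needs no partner. If some $P_i$ heads with a conditional, then by closedness its guard evaluates to a boolean and \rulelabel{F-If} fires; a leading commit enables \rulelabel{F-Cmt}, and a leading abort enables \rulelabel{B-Abt}. The only subtle member of this group is a leading roll: here I would invoke Theorem~\ref{th:roll_soundness} to conclude that the corresponding checkpoint cannot be imposed, since otherwise \rulelabel{E-Rll$_2$} would reduce $C'$ to $\collContext[\rollError]$, contradicting roll safety; consequently \rulelabel{E-Rll$_1$} applies and the rollback constitutes a backward reduction. In every one of these subcases $C'$ admits a reduction, so the first disjunct holds.

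It then remains to treat configurations in which each $P_i$ is either $\inact$ or a communication prefix. If both are $\inact$, then $C'$ has precisely the terminal shape $\singleSession{s}{C}(\logged{\ce{s}}{\checkpointType{Q_1}} \inact \mid \logged{s}{\checkpointType{Q_2}} \inact)$ of the second disjunct, and we are done. Otherwise some party, say $P_1$, is at a communication prefix, and I would argue that its partner necessarily offers the matching co-action. Since neither party heads with a conditional, no internal $\tau$ step is possible and the weak barbs of $P_2$ coincide with its immediate capabilities; as $P_2$ is a prefix or $\inact$, we also have $\neg P_2 \Barb{roll}$. Were $P_2$ not to offer the co-action demanded by $P_1$, the appropriate error rule (\rulelabel{E-Com$_1$}, \rulelabel{E-Com$_2$}, \rulelabel{E-Lab$_1$} or \rulelabel{E-Lab$_2$}, according to whether $P_1$ outputs, inputs, selects or branches) would reduce $C'$ to $\collContext[\comError]$, contradicting Theorem~\ref{th:com_soundness}. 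Hence $P_2$ supplies the matching input, output or branch, and \rulelabel{F-Com} or \rulelabel{F-Lab} yields $C' \fwbwred C''$.

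The step I expect to be the main obstacle is this last one: tying the failure of a synchronisation to the firing of an error rule. The delicate points are the side condition $\neg P_2 \Barb{roll}$ and the weak nature of the barbs — one must be sure that, once all conditional-, commit-, abort- and roll-headed configurations have been removed in the previous cases, the surviving communication-prefixed configurations are genuinely $\tau$-stable and roll-free, so that the barbs are strong and the error rules really do apply. A closely related subtlety is label matching of selection against branching: a selected label must actually be offered by the branch, on pain of \rulelabel{E-Lab$_1$} firing, and this too is secured by appeal to communication soundness rather than by a separate combinatorial argument.
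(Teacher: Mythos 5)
Your proof is correct and rests on exactly the same two pillars as the paper's: Theorems~\ref{th:roll_soundness} and~\ref{th:com_soundness} are used to rule out the only configurations that could be stuck without having the terminal shape. The paper compresses your entire case analysis into the single assertion that ``the only situations that prevent $C'$ from progressing are $C'=\collContext[\rollError]$ or $C'=\collContext[\comError]$'' and concludes by contradiction, so your explicit enumeration of head shapes --- including the handling of an imposed checkpoint under a leading $\roll$ via rule \rulelabel{E-Rll$_2$}, and the observation that the weak barbs in the error rules become strong once the $\tau$-headed cases have been dispatched --- is precisely the justification that assertion tacitly relies on.
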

\modiff{\begin{proof} 
The proof proceeds by contradiction.
Suppose that $C$ is rollback safe and 
$C \fwbwred^* C'$ with $C' \fwbwred\!\!\!\!\!\!/\ \ $ and 
$C' \congr\!\!\!\!\!/\ \  
\singleSession{s}{C}
(\logged{{\genSession}}{\checkpointType{Q_1}} \inact 
\ \mid \ 
\logged{\genSession}{\checkpointType{Q_2}} \inact)$
for any $\checkpointType{Q_1}$ and $\checkpointType{Q_2}$.
The only situations that prevents $C'$ from progressing 
are $C'= \collContext[\rollError]$ or $C'= \collContext[\comError]$. 
However, from Theorems~\ref{th:roll_soundness} and~\ref{th:com_soundness}, respectively, 
we have $C'\neq \collContext[\rollError]$ and $C'\neq \collContext[\comError]$, 
which is a contradiction.  
\end{proof}}

%


\section{\Cherry\ at work on a speculative parallelism scenario}
\label{sec:case_study}

To shed light on the practical effectiveness of \cherry\ and the related notion 
of rollback safety, we consider in this section a simple, yet realistic, scenario concerning a form 
of speculative execution borrowed from \cite{speculative_exec}.
%
In this scenario, \emph{value speculation} is used as a mechanism for 
increasing parallelism, hence system performance, by predicting 
values of data dependencies between tasks. 
Whenever a value prediction is incorrect, corrective actions must be taken 
in order to re-execute the data consumer code with the correct data value.
In this regard, as shown in \cite{GiachinoLMT17} for a shared-memory setting, 
reversible execution can permit to relieve programmers from the burden of 
properly undoing the actions subsequent to an incorrect prediction.
Here, we tailor the scenario to the channel-based communication model of 
session-based programming, and show how our rollback safety checking supports
programmers in identifying erroneous rollback recovery settings. 

\begin{figure}[t]
\centering
\includegraphics[scale=.39]{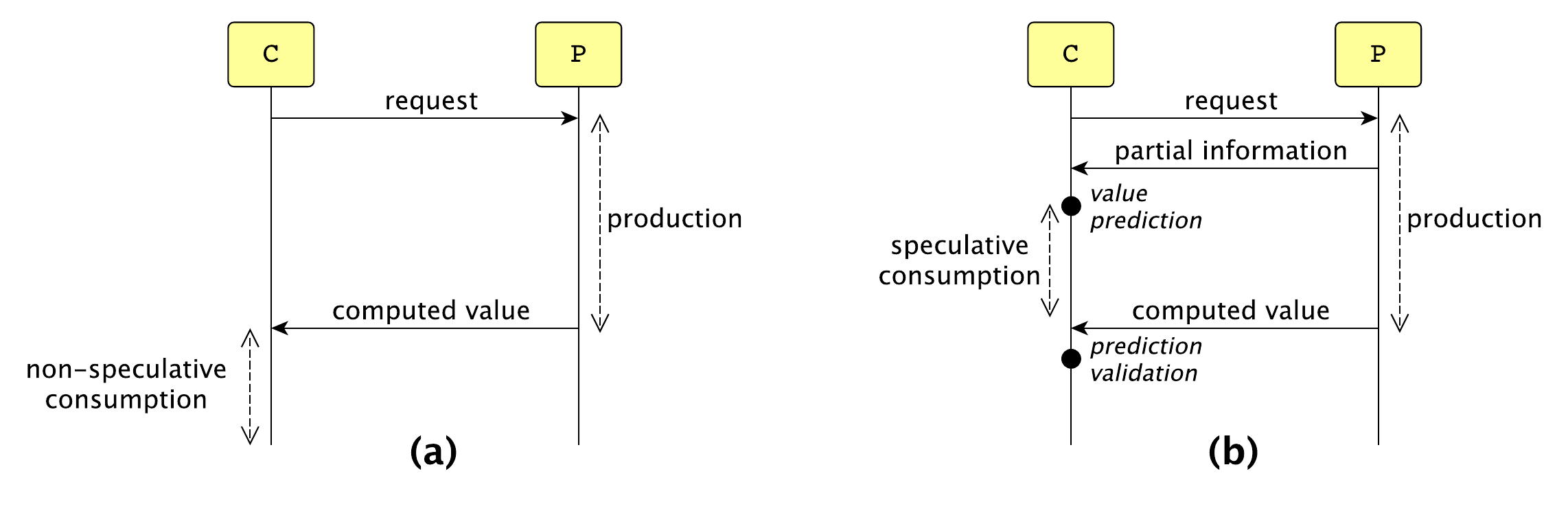}
\vspace*{-.4cm}
\caption{Producer-consumer scenario with non-speculative (a) 
and speculative (b) consumers.}
\label{fig:speculative}
\end{figure}

In the producer-consumer scenario depicted in Fig.~\ref{fig:speculative}(a) the session 
participant \texttt{P} produces a value and the participant \texttt{C} consumes it. The data 
dependence between \texttt{P} and \texttt{C} serialises their executions, thus forcing \texttt{C} 
to wait for the completion of the value production that requires a fairly long time. In the 
scenario in Fig.~\ref{fig:speculative}(b), instead, \texttt{C} enacts a speculative behaviour, as it 
predicts ahead of time the value computed by \texttt{P} from a partial information. By using the 
predicted value, \texttt{C} can execute speculatively and concurrently with \texttt{P}. When 
\texttt{P} completes the production, \texttt{C} validates the prediction by comparing the actual 
value computed by \texttt{P} and the predicted one; if the prediction is precise, we gain 
performance because the execution of \texttt{C} and \texttt{P} overlapped in time, otherwise 
rollback is used to move \texttt{C} and \texttt{P} back to a state that precedes the speculative 
behaviour, in order to re-execute \texttt{C} using the correct value. The behaviours of \texttt{C} 
and \texttt{P} can be recursively defined in order to repeat the overall execution once a value 
is correctly consumed. 

The scenario informally described above is rendered in \cherry\ as 
$$
\requestPrefix{start}{x}. P_{\texttt{C}} \mid \acceptPrefix{start}{y}. P_{\texttt{P}}
$$
where the consumer and producer processes are:
$$
\begin{array}{r@{\ }c@{ \ }l}
P_{\texttt{C}} & = & 
\recAct{X}{\,}
\sendAct{x}{f_{\textit{\textsf{req}}}()}{\,}
\branchAct{x}{\
\branch{l_{\textit{\textsf{spec}}}}{
\receiveAct{x}{x_{\textit{\textsf{partial}}}:\strType}{\,}
\receiveAct{x}{x_{\textit{\textsf{final}}}:\strType}{\,}\\
&&
\hspace*{4.6cm}
\ifPi\ (f_{\textit{\textsf{compare}}}(x_{\textit{\textsf{partial}}},x_{\textit{\textsf{final}}}))\
\thenPi\ \roll\ 
\elsePi\ \commitAct{\,}X
}
\, \branchSep \\
&&
\hspace*{3.4cm}
\branch{l_{\textit{\textsf{nonSpec}}}}{
\receiveAct{x}{x_{\textit{\textsf{computed}}}:\strType}{\,}
\commitAct{\,}
X}\
}
\\[.3cm]
P_{\texttt{P}} & = & 
\recAct{Y}{\,}
\receiveAct{y}{y_{\textit{\textsf{req}}}:\strType}{\,}\\
&&
\hspace*{.7cm}
\ifPi\ (f_{\textit{\textsf{eval}}}(y_{\textit{\textsf{req}}}))\
\thenPi\
\selectAct{{y}}{l_{\textit{\textsf{spec}}}}{\,}
\sendAct{y}{f_{\textit{\textsf{partial}}}(y_{\textit{\textsf{req}}})}{\,}
\sendAct{y}{f_{\textit{\textsf{final}}}(y_{\textit{\textsf{req}}})}{\,}
Y
\\
&&
\hspace*{.7cm}
\elsePi\
\selectAct{{y}}{l_{\textit{\textsf{nonSpec}}}}{\,}
\sendAct{y}{f_{\textit{\textsf{compute}}}(y_{\textit{\textsf{req}}})}{\,}
Y
\end{array}
$$
The producer evaluates each consumer's request in order to establish whether to provide 
directly the produced value or the partial information for the prediction.
In the former case the consumer commits the session and both participants restart, 
while in the latter one the consumer commits or rolls back depending on the result of the 
comparison between the predicted value and the produced one. 


\modif{To check whether the above collaboration is rollback safe, we have to check compliance 
between the session types of the two parties. 
The session types \texttt{Tc} and \texttt{Tp} associated by the type system to the consumer and the producer 
processes, respectively, and the corresponding initial type configuration 
are as follows (hereafter we use the \maude{} implementation of the \cherry\ 
type syntax):}
\begin{lstlisting}[frame=single,basicstyle=\scriptsize\ttfamily]
eq Tc = mu X . ![str]. brn[brnEl('spec, ?[str]. ?[str]. (roll (+) (cmt. X))); 
                           brnEl('nonSpec, ?[str]. cmt. X)] .

eq Tp = mu Y . ?[str]. ((sel['spec]. ![str]. ![str]. Y) (+) (sel['nonSpec]. 
                                                             ![str]. Y)) .

eq CPInitConfig = init(Tc,Tp) ckp(Tc) > Tc || ckp(Tp) > Tp .
\end{lstlisting}
\modif{As discussed in Sec.~\ref{sec:maude}, the compliance check for the above type specification 
is performed by resorting to the \maude\ command $\texttt{search}$ as follows:}
\begin{lstlisting}[frame=single,basicstyle=\scriptsize\ttfamily]	
search 
  CPInitConfig  
  =>! 
  init(T:Type,T':Type) CT1:CkpType > T1:Type || CT2:CkpType > T2:Type
such that T1 =/= end or T2 =/= end .
\end{lstlisting}	
The above command returns: 
\begin{lstlisting}[frame=single,basicstyle=\scriptsize\ttfamily]
No solution.
\end{lstlisting}
\modif{meaning that the producer-consumer collaboration is rollback safe.
To investigate more in detail the behaviour of this collaboration, we can 
generate the transition system associated to the type configuration
\texttt{CPInitConfig} by using the following \maude\ commands:}
\begin{lstlisting}[frame=single,basicstyle=\scriptsize\ttfamily]	
search CPInitConfig =>* C:Configuration .
show search graph .
\end{lstlisting}	
\modif{The first command searches for all states of the transition, reachable 
in none, one, or more steps, while the second command returns 
the current search graph generated by the previous search command.
A graphical representation of the generated graph is reported in Fig.~\ref{fig:graph1},
where states represent the reachable type configurations (state $0$ corresponds to \texttt{CPInitConfig})
and transitions are labelled by the applied rules (from Figure~\ref{fig:typeSemantics_ext}).  
}
\begin{figure}[t]
\centering
\includegraphics[scale=.55]{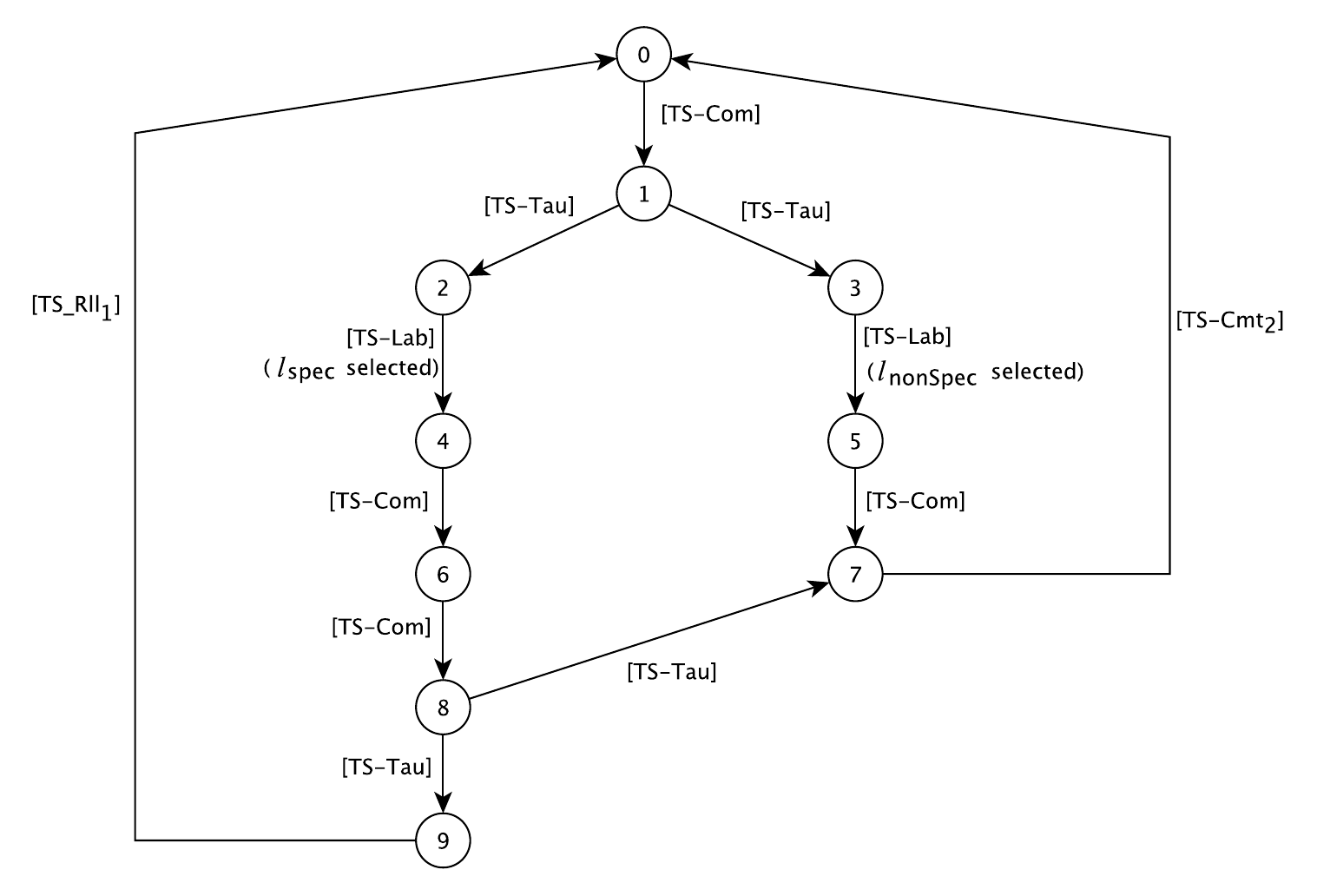}
\vspace*{-.4cm}
\caption{Transition system of the type configuration \texttt{CPInitConfig}.}
\label{fig:graph1}
\end{figure}
\modif{Notably, in case a value is correctly consumed, since the commit action is performed 
at the end of the recursive step, the type configuration resulting from the commit coincides 
with the initial one (see transition from state 7 to state 0 in the graph), as both consumer and producer
are ready to repeat their interactions for a new value. 
In case of incorrect prediction (transition from state 8 to state 9), 
instead, the session execution is moved back to the last checkpoint (transition from state 9 to state 0),
corresponding to the successfully consumption of the previous requested value that,
at type level, correspond again to the initial type configuration. Indeed, as previously discussed,
no commit occurs during the speculative consumption, hence no checkpoint has been created after 
the one recorded at the end of the previous recursive step. 
%
It is worth noticing that in case the prediction of the first produced value is wrong, and hence no commit action 
is performed by the consumer yet, according to the \cherry\ semantics the checkpoint 
corresponds to the beginning of the session. }


\modif{Let us consider now a variant of the above scenario where the producer commits each 
time a value production is completed, which could apparently seem a reasonable behaviour from 
the producer side.
The session type \texttt{Tp'} of such a producer process and the corresponding initial type configuration 
are as follows:}
\begin{lstlisting}[frame=single,basicstyle=\scriptsize\ttfamily]
eq Tp' = mu Y . ?[str]. ((sel['spec]. ![str]. ![str]. cmt. Y) 
                         (+) (sel['nonSpec]. ![str]. cmt. Y)) .

eq CPInitConfig' = init(Tc,Tp') ckp(Tc) > Tc || ckp(Tp') > Tp' .
\end{lstlisting}
\pagebreak
\modif{This time the collaboration is not rollback safe. Indeed, the compliance check
returns two solutions:}
\begin{lstlisting}[frame=single,basicstyle=\scriptsize\ttfamily]
Solution 1 (state 16)
CT1 --> ickp(roll)
T1 --> err
CT2 --> ckp(mu Y . ?[str]. ((sel['spec]. ![str]. ![str]. cmt. Y)
                            (+)sel['nonSpec]. ![str]. cmt. Y))
T2 --> err

Solution 2 (state 17)
CT1 --> ickp(roll(+)cmt. mu X . ![str]. brn[brnEl('nonSpec, ?[str]. cmt. X); 
                                brnEl('spec, ?[str]. ?[str]. (roll(+)cmt. X))])
T1 --> err
CT2 --> ckp(mu Y . ?[str]. ((sel['spec]. ![str]. ![str]. cmt. Y)
                            (+)sel['nonSpec]. ![str]. cmt. Y))
T2 --> err
\end{lstlisting}
\modif{corresponding to two erroneous configurations. The overall behaviour is 
graphically depicted by the transition system in Fig.~\ref{fig:graph2},
where state $0$ corresponds to \texttt{CPInitConfig'} and states 16 and 17
are the two erroneous configurations above.  }
\begin{figure}[t]
\centering
\includegraphics[scale=.55]{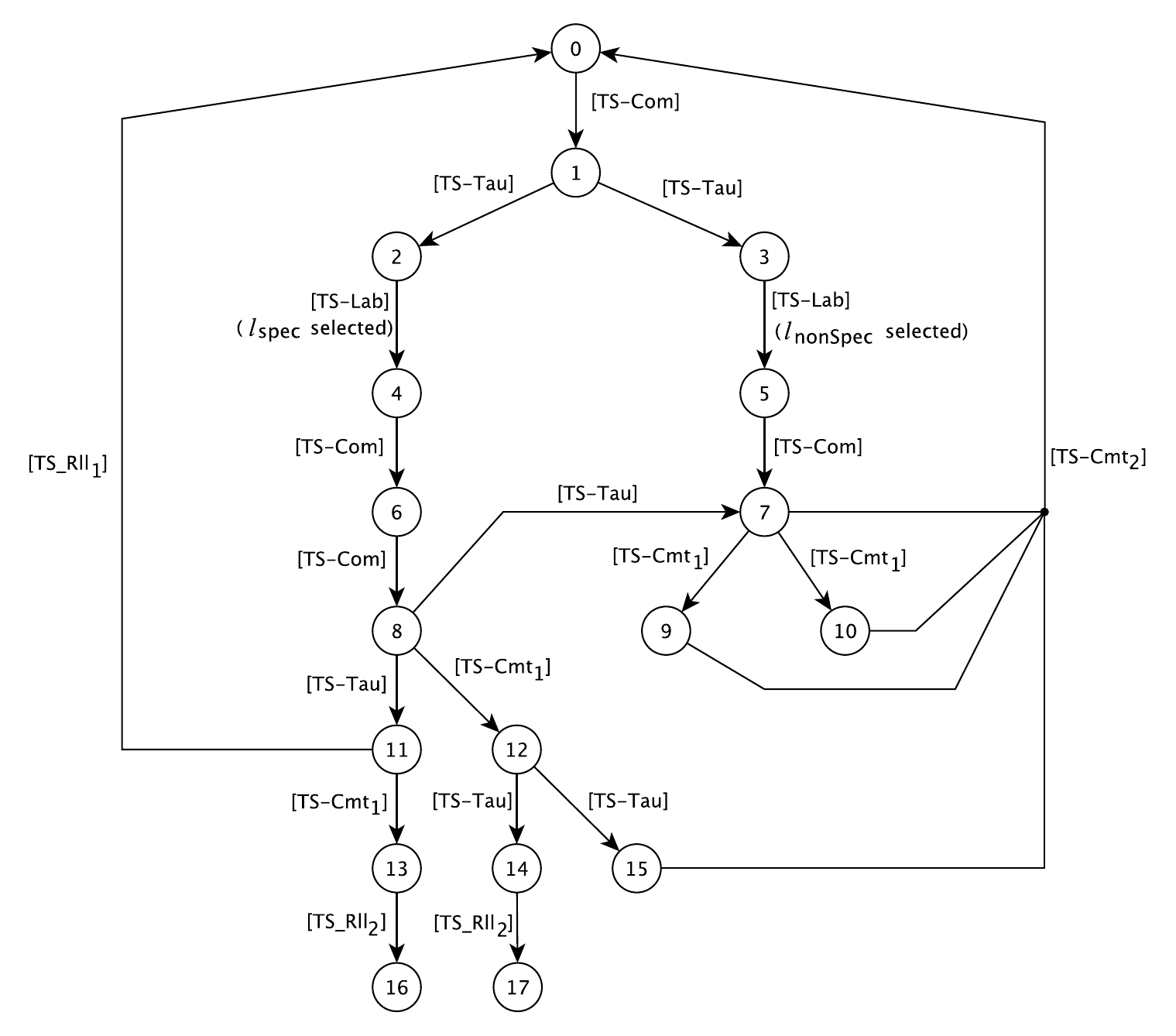}
\vspace*{-.4cm}
\caption{Transition system of the type configuration \texttt{CPInitConfig'}.}
\label{fig:graph2}
\end{figure}
\modif{While the commit action in the non-speculative case
(transitions from state 7 to state 10, and from state 9 to state 0),
 does not affect the compliance between the two session participants, 
the other commit action (transitions from state 8 to state 12, and from state 11 to state 13),
overwrites the checkpoint set by the consumer, making it impossible to 
re-execute the consumer with the correct value. This situation, undesirable for the consumer, 
is detected by our compliance check. 
}

\section{Related work}
\label{sec:rw}

In the literature we can distinguish three ways of dealing with rollback: 
either using explicit rollbacks and implicit commits~\cite{LaneseMSS11}, or by using explicit commits and spontaneous aborts~\cite{DanosK05,VriesKH10}, \modifRev{or by just
considering implicit rollbacks and commits \cite{FieldV05,SwalensKM17}.}
Differently from these works, we have introduced a way to 
control reversibility by both \textit{triggering} it and \textit{limiting} its scope. 
Reversibility is triggered via an explicit rollback primitive (as in~\cite{LaneseMSS11}), 
 while explicit commits limit the scope of potential future reverse executions (as in~\cite{DanosK05,VriesKH10}).
 Differently from~\cite{DanosK05,VriesKH10}, commit does not require any synchronisation, as it is a local decision.
 This could lead to run-time misbehaviours where a process willing to roll back to its last checkpoint reaches a point
 which has been imposed by another participant of the session.  
 Our type discipline rules out such cases.

Reversibility in behavioural types has been studied in different formalisms:
\textit{contracts}~\modifRev{\cite{BarbaneraDd14,BarbaneraDLd16,BarbaneraLd17}}, 
\textit{binary session types}~\cite{MezzinaP17a}, \textit{multiparty 
session types}~\cite{MezzinaP21,CastellaniDG17,JLAMP_TY15,TiezziY16},
and \textit{global graphs}~\cite{MezzinaT17,fmt18}.
In~\modifRev{\cite{BarbaneraDd14,BarbaneraDLd16,BarbaneraLd17}}  choices can be seen as \textit{implicit}
checkpoints and the system can go back to a previous choice and try another branch. In~\cite{BarbaneraDd14}  rollback is triggered non-deterministically (e.g., it may happen at any time during the execution), while in~\modifRev{\cite{BarbaneraDLd16,BarbaneraLd17}} it is triggered by the system only when the \modifRev{forward} computation is stuck (e.g., the client cannot further continue). 
In both works reversibility (and rollbacks) is used to achieve a relaxed variant of client-server compliance: if there exists an execution in which the client is able to terminate then  the client and server are compliant. Hence, reversibility is used as a means to explore different branches if the current one leads to a deadlock.
In~\cite{MezzinaP17a} reversibility is studied in the context of binary session types. 
Types information is used at run-time by monitors, for  binary~\cite{MezzinaP17a} and multiparty~\cite{MezzinaP21} settings, to keep track of the computational history of the system.
allowing to revert any computational step. 
 where global types are enriched with computational history. 
There, reversibility is uncontrolled, and each computational step can be undone. 
In~\cite{CastellaniDG17} global types are enriched with history information, and choices are seen as 
labelled checkpoints. 
The information about checkpoints is projected into local types. 
At any moment, the party who decided which branch to take in a choice may decide to revert it, forcing the entire system to go back to a point prior to the choice. Hence, rollback is confined inside choices and it is spontaneous 
meaning that the former can be programmed while the latter cannot. 
Checkpoints are not seen as commits, and  a rollback can bring the system to a state prior to several checkpoints. 
In~\cite{JLAMP_TY15} an \textit{uncontrolled} reversible variant of session $\pi$-calculus is presented,
while~\cite{TiezziY16} studies different notions of reversibility for both binary and multiparty single sessions. 
 In~\cite{MezzinaT17,fmt18,FrancalanzaMT20}  global graphs
are extended with conditions on branches. These conditions at runtime can trigger coordinated rollbacks to revert a distributed choice. 
Reversibility is confined into branches of a distributed choice and 
not all the computational steps are reversible; inputs, in fact,  are  irreversible unless they are
inside an ongoing loop. 
%
To trigger a rollback several conditions  and constraints about loops have to be satisfied.
Hence, in order to trigger a rollback a runtime condition should be satisfied.

The closest work in the literature to ours is
\cite{Vidal23} where a checkpoint rollback facility is studied on top a reversible actor based language. Here, checkpoint and rollbacks primitives are explicit, as in our approach. Also, the proposed approach scales with dynamically created processes (e.g., spawning), while we do not deal with this issue. On the other hand, \cite{Vidal23} does not deal with ruling out undesired behaviours.
\modifRev{Reversibility, in terms of transactional behaviours, has been investigated in
the context of the actor model also in \cite{FieldV05,SwalensKM17}. While
\cite{FieldV05} introduces a facility to create a global checkpoint among multiple actors, \cite{SwalensKM17} studies the application of software transactional memory to the actor model. In \cite{FieldV05} commits and rollbacks are explicit and confined inside the scope of a stabilisation of the actor state. On the other hand, in \cite{SwalensKM17}
rollbacks and commits are implicit: if at the end the transaction can commit it does so, otherwise it is rolled back to the very beginning and re-tried again. Also, in this work reversibility is confined within the transactional scope. Let us note that \cite{FieldV05} uses a very fine-grained mechanism to keep track of the causality graph among all the actors that have interacted with the one who wants to commit. We will further study this mechanism when adapting our theory to multi-party session types.
}

\begin{table}
    \centering
  \modif{  \begin{tabular}{|c|c|c|c|c|}
    \hline
    \textbf{Work} & \textbf{Formalism}& \textbf{Commit} & \textbf{Rollback}  & \textbf{Confined}\\
    \hline
        ~\cite{LaneseMSS11} & Process Calculi &I & E & N\\
        \hline
        ~\cite{DanosK05} & Process Calculi& E & I  & Y\\
                \hline
        ~\cite{VriesKH10} & Process Calculi& E & I & N \\
                \hline 
         ~\modifRev{\cite{BarbaneraDd14, BarbaneraDLd16, BarbaneraLd17}} &Contracts& I & I & Y\\
                 \hline
        \cite{MezzinaP17a, MezzinaP21} &Session Types & I & I  & N \\
        \hline
         \cite{CastellaniDG17}& Session Types& I & I & Y \\
         \hline
\cite{MezzinaT17,fmt18,FrancalanzaMT20}&Global Graphs& I & E & Y\\
                 \hline
         \cite{Vidal23} & Process Calculi & E & E& N\\
         \hline
        \modifRev{ \cite{FieldV05}} & Actor Model & E & E & Y\\
                  \hline
        \modifRev{ \cite{SwalensKM17}} & Actor Model & I & I & Y\\
         \hline
                  \textit{Our work} & Actor Model \& & E & E& N\\
                  & Session Types & & & \\
                  \hline

    \end{tabular}
    }
    \caption{Approaches in the literature; commits and rollbacks can be either implicit (I) or explicit (E); reversibility can be confined (Y) or not (N).}
    \label{tab:overview}
\end{table}
Table \ref{tab:overview} summarises all the approaches in the literature. We detach from these works in several ways. Our checkpoint facility is explicit and checkpointing is not 
confined to choices: the programmer can decide at any point when to commit. 
This is because the programmer may be interested in committing, besides choice points, 
a series of interactions (e.g., to make a payment irreversible). 
Once a commit is taken, the system cannot revert to a state prior to it.
 Our rollback  is explicit, meaning that it is the programmer who deliberately triggers a rollback.
 %
 %
%
Our compliance check, which is decidable, resembles those \modifRev{for contracts introduced in} \cite{BarbaneraDd14,BarbaneraDLd16,BarbaneraLd17}, which are defined for 
different rollback recovery approaches based on implicit checkpoints. 
\modifRev{Specifically, our compliance relation is similar to the ones defined in 
\cite{BarbaneraDLd16,BarbaneraLd17} as they consist in requiring that, whenever no 
reduction is possible, all client requests and offers have been satisfied. Similarly, 
in our compliance notion, two participants are compliant if, whenever they reach a 
type configuration where no reduction is possible, they are in the successful (end) 
state. Our notion differs from the others as we do not distinguish client and server 
roles and, of course, we have different technicalities as our types 
(resp. type configurations) differ from retractable contracts 
(resp. client/server pairs).
The compliance relation in \cite{BarbaneraDd14}, instead, differs from the others (including ours)
as it is coinductively defined. We have not considered this approach for our definition as it 
would make the theoretical framework much more complicated and, most of all, 
would be less suitable for a \maude\ implementation of the compliance checking. 
}

Concerning the \maude\ implementation of the compliance check, we have followed 
the approach of the seminal work by Verdejo and Mart{\'\i}-Oliet \cite{CCSMaude2}, 
providing the state-of-the-art implementation of CCS in \maude. Along the same line, 
many other \maude\ implementations of formalisms and languages have been 
proposed, such as BPMN~\cite{BPMN}, Twitlang~\cite{Twitlang}, SCEL~\cite{SCEL}, and 
QFLan~\cite{QFLan}.

\section{Concluding remarks}
\label{sec:conclusion}

This paper proposes rollback recovery primitives for session-based programming. 
%
These primitives come together with session typing, enabling a design time compliance 
check  which ensures 
checkpoint persistency properties (Lemma~\ref{lemma:safe_roll} and 
Theorem~\ref{lemma:undoability}) and session soundness 
(Theorems~\ref{th:roll_soundness} and \ref{th:com_soundness}).
%
Our compliance check has been implemented in  \maude.
%

As future work, 
we plan to extend our approach to deal 
%
with sessions where parties can interleave interactions 
performed along different sessions. 
This requires to deal with subordinate sessions, which may affect enclosing sessions
by performing, e.g., commit actions that make some interaction of the enclosing sessions 
irreversible, similarly to nested transactions~\cite{Weikum92conceptsand}.
To tackle this issue it would be necessary to extend the notion of compliance 
relation  to take into account possible partial commits (in case of nested sub-sessions) that
could be undone at the top level if a rollback is performed. 
\added{Also, the way our checkpoints are taken resembles the Communication Induced Checkpoints (CIC) approach~\cite{surveycheckpoint}; 
we leave as future work a thoughtful comparison between these two mechanisms.}

\modifRev{Another possible future work is to adapt our framework to work in asynchronous settings, like real applications.
The first thing to do is to extend our framework with queues (either a global one or one queue per participant), in order to deal with asynchronous messages.
We could adapt the work of \cite{MezzinaP17a,MezzinaP21} in which uncontrolled reversibility is added to asynchronous binary and multiparty session types. 
Adding controlled reversibility atop on them would require special messages for commit, abort and checkpoint. Such messages should be handled with priority, or at least one has to assume fairness, otherwise there will be no guarantee that a commit or a rollback will be performed. We could start from the asynchronous semantics given in~\cite{LaneseMSS11}. Let us note that considering messages with priority is totally licit, as languages for large scale applications such as Erlang, Elixir and Akka allow for messages with priority.
Starting from this settings and adding a fine-grained causality tracking mechanism, such as the one  of \cite{FieldV05}, is our long-term goal, allowing us to have a complete theory for a real-world fault-tolerant applications.
}

\modifRev{Finally, we plan to investigate the practical application of our work to more realistic programming languages. Since our proposal has been devised for communication-centric systems, as a first testing ground we plan to transfer our ideas to a message-passing programming language based on the \pic, e.g.
SePi\footnote{\modifRev{\url{https://rss.rd.ciencias.ulisboa.pt/tools/sepi/}}}~\cite{sepi}.
The challenge would be not only the extension of the language with our linguistic primitives to program reversible sessions, but also extending our results to a setting richer in terms of programming constructs and features. 
Then, we will consider programming languages that feature session-based programming, but are based on paradigms other than the \pic,
e.g.
\textit{sessionj}\footnote{\modifRev{\url{https://code.google.com/archive/p/sessionj/}}}~\cite{sessionj},
an extension of Java with session-based constructs.
Another direction would be the application of our approach to Scribble\footnote{\modifRev{\url{https://github.com/scribble}}}~\cite{scribble}, a framework to specify application-level protocols among communicating systems that supports
bindings for several high-level languages.
}

\section*{Acknowledgment}
The authors would like to acknowledge anonymous reviewers  for their useful criticisms and suggestions that
have greatly contributed to improve the paper.

\bibliographystyle{alphaurl}
\bibliography{biblio}

\newcommand{\etalchar}[1]{$^{#1}$}
\begin{thebibliography}{BDVW14}

\bibitem[ACG{\etalchar{+}}20]{wg1}
Bogdan Aman, Gabriel Ciobanu, Robert Gl{\"{u}}ck, Robin Kaarsgaard, Jarkko Kari, Martin Kutrib, Ivan Lanese, Claudio~Antares Mezzina, Lukasz Mikulski, Rajagopal Nagarajan, Iain C.~C. Phillips, G.~Michele Pinna, Luca Prigioniero, Irek Ulidowski, and Germ{\'{a}}n Vidal.
\newblock Foundations of reversible computation.
\newblock In Irek Ulidowski, Ivan Lanese, Ulrik~Pagh Schultz, and Carla Ferreira, editors, {\em Reversible Computation: Extending Horizons of Computing - Selected Results of the {COST} Action {IC1405}}, volume 12070 of {\em Lecture Notes in Computer Science}, pages 1--40. Springer, 2020.
\newblock \href {https://doi.org/10.1007/978-3-030-47361-7\_1} {\path{doi:10.1007/978-3-030-47361-7\_1}}.

\bibitem[BDd14]{BarbaneraDd14}
Franco Barbanera, Mariangiola Dezani{-}Ciancaglini, and Ugo de'Liguoro.
\newblock Compliance for reversible client/server interactions.
\newblock In Marco Carbone, editor, {\em BEAT}, volume 162 of {\em {EPTCS}}, pages 35--42, 2014.
\newblock \href {https://doi.org/10.4204/EPTCS.162.5} {\path{doi:10.4204/EPTCS.162.5}}.

\bibitem[BDLd16]{BarbaneraDLd16}
Franco Barbanera, Mariangiola Dezani{-}Ciancaglini, Ivan Lanese, and Ugo de'Liguoro.
\newblock Retractable contracts.
\newblock In Simon Gay and Jade Alglave, editors, {\em PLACES 2015}, volume 203 of {\em {EPTCS}}, pages 61--72, 2016.
\newblock \href {https://doi.org/10.4204/EPTCS.203.5} {\path{doi:10.4204/EPTCS.203.5}}.

\bibitem[BDVW14]{SCEL}
Lenz Belzner, Rocco {De Nicola}, Andrea Vandin, and Martin Wirsing.
\newblock Reasoning (on) service component ensembles in rewriting logic.
\newblock In {\em Specification, Algebra, and Software}, volume 8373 of {\em Lecture Notes in Computer Science}, pages 188--211. Springer, 2014.
\newblock \href {https://doi.org/10.1007/978-3-319-16310-9} {\path{doi:10.1007/978-3-319-16310-9}}.

\bibitem[BLd17]{BarbaneraLd17}
Franco Barbanera, Ivan Lanese, and Ugo de'Liguoro.
\newblock Retractable and speculative contracts.
\newblock In Jean{-}Marie Jacquet and Mieke Massink, editors, {\em COORDINATION}, volume 10319 of {\em Lecture Notes in Computer Science}, pages 119--137. Springer, 2017.
\newblock \href {https://doi.org/10.1007/978-3-319-59746-1\_7} {\path{doi:10.1007/978-3-319-59746-1\_7}}.

\bibitem[BLd18]{BarbaneraLd18}
Franco Barbanera, Ivan Lanese, and Ugo de'Liguoro.
\newblock A theory of retractable and speculative contracts.
\newblock {\em Sci. Comput. Program.}, 167:25--50, 2018.
\newblock URL: \url{https://doi.org/10.1016/j.scico.2018.06.005}, \href {https://doi.org/10.1016/J.SCICO.2018.06.005} {\path{doi:10.1016/J.SCICO.2018.06.005}}.

\bibitem[CDE{\etalchar{+}}07]{maude2007}
Manuel Clavel, Francisco Dur{\'{a}}n, Steven Eker, Patrick Lincoln, Narciso Mart{\'{\i}}{-}Oliet, Jos{\'{e}} Meseguer, and Carolyn~L. Talcott, editors.
\newblock {\em All About Maude - {A} High-Performance Logical Framework, How to Specify, Program and Verify Systems in Rewriting Logic}, volume 4350 of {\em Lecture Notes in Computer Science}.
\newblock Springer, 2007.
\newblock \href {https://doi.org/10.1007/978-3-540-71999-1} {\path{doi:10.1007/978-3-540-71999-1}}.

\bibitem[CDG17]{CastellaniDG17}
Ilaria Castellani, Mariangiola Dezani{-}Ciancaglini, and Paola Giannini.
\newblock Concurrent reversible sessions.
\newblock In Roland Meyer and Uwe Nestmann, editors, {\em CONCUR}, volume~85 of {\em LIPIcs}, pages 30:1--30:17. Schloss Dagstuhl - Leibniz-Zentrum fuer Informatik, 2017.
\newblock URL: \url{https://doi.org/10.4230/LIPIcs.CONCUR.2017.30}, \href {https://doi.org/10.4230/LIPICS.CONCUR.2017.30} {\path{doi:10.4230/LIPICS.CONCUR.2017.30}}.

\bibitem[CDSY17]{ChenDSY17}
Tzu{-}Chun Chen, Mariangiola Dezani{-}Ciancaglini, Alceste Scalas, and Nobuko Yoshida.
\newblock On the preciseness of subtyping in session types.
\newblock {\em Logical Methods in Computer Science}, 13(2), 2017.
\newblock \href {https://doi.org/10.23638/LMCS-13(2:12)2017} {\path{doi:10.23638/LMCS-13(2:12)2017}}.

\bibitem[CFP{\etalchar{+}}18]{BPMN}
Flavio Corradini, Fabrizio Fornari, Andrea Polini, Barbara Re, and Francesco Tiezzi.
\newblock A formal approach to modeling and verification of business process collaborations.
\newblock {\em Sci. Comput. Program.}, 166:35--70, 2018.
\newblock URL: \url{https://doi.org/10.1016/j.scico.2018.05.008}, \href {https://doi.org/10.1016/J.SCICO.2018.05.008} {\path{doi:10.1016/J.SCICO.2018.05.008}}.

\bibitem[CKV13]{CristescuKV13}
Ioana Cristescu, Jean Krivine, and Daniele Varacca.
\newblock {A Compositional Semantics for the Reversible p-Calculus}.
\newblock In {\em LICS}, pages 388--397. {IEEE} Computer Society, 2013.
\newblock \href {https://doi.org/10.1109/LICS.2013.45} {\path{doi:10.1109/LICS.2013.45}}.

\bibitem[DK04]{DanosK04}
Vincent Danos and Jean Krivine.
\newblock Reversible communicating systems.
\newblock In Philippa Gardner and Nobuko Yoshida, editors, {\em CONCUR}, volume 3170 of {\em Lecture Notes in Computer Science}, pages 292--307. Springer, 2004.
\newblock \href {https://doi.org/10.1007/978-3-540-28644-8\_19} {\path{doi:10.1007/978-3-540-28644-8\_19}}.

\bibitem[DK05]{DanosK05}
Vincent Danos and Jean Krivine.
\newblock Transactions in rccs.
\newblock In Mart{\'{\i}}n Abadi and Luca de~Alfaro, editors, {\em CONCUR}, volume 3653 of {\em Lecture Notes in Computer Science}, pages 398--412. Springer, 2005.
\newblock \href {https://doi.org/10.1007/11539452\_31} {\path{doi:10.1007/11539452\_31}}.

\bibitem[dVKH10]{VriesKH10}
Edsko de~Vries, Vasileios Koutavas, and Matthew Hennessy.
\newblock Communicating transactions - (extended abstract).
\newblock In Paul Gastin and Fran{\c{c}}ois Laroussinie, editors, {\em {CONCUR}}, volume 6269 of {\em Lecture Notes in Computer Science}. Springer, 2010.
\newblock \href {https://doi.org/10.1007/978-3-642-15375-4\_39} {\path{doi:10.1007/978-3-642-15375-4\_39}}.

\bibitem[EAWJ02]{surveycheckpoint}
E.~N. Elnozahy, Lorenzo Alvisi, Yi{-}Min Wang, and David~B. Johnson.
\newblock A survey of rollback-recovery protocols in message-passing systems.
\newblock {\em {ACM} Comput. Surv.}, 34(3):375--408, 2002.
\newblock \href {https://doi.org/10.1145/568522.568525} {\path{doi:10.1145/568522.568525}}.

\bibitem[Eng12]{survey-of-reverse-debugging}
J.~Engblom.
\newblock A review of reverse debugging.
\newblock In {\em System, Software, SoC and Silicon Debug Conference (S4D)}, pages 1--6, Sept 2012.

\bibitem[FMT18]{fmt18}
Adrian Francalanza, Claudio~Antares Mezzina, and Emilio Tuosto.
\newblock Reversible choreographies via monitoring in erlang.
\newblock In Silvia Bonomi and Etienne Rivi{\`{e}}re, editors, {\em Distributed Applications and Interoperable Systems - 18th {IFIP} {WG} 6.1 International Conference, {DAIS} 2018}, volume 10853 of {\em Lecture Notes in Computer Science}, pages 75--92. Springer, 2018.
\newblock \href {https://doi.org/10.1007/978-3-319-93767-0\_6} {\path{doi:10.1007/978-3-319-93767-0\_6}}.

\bibitem[FMT20]{FrancalanzaMT20}
Adrian Francalanza, Claudio~Antares Mezzina, and Emilio Tuosto.
\newblock Towards choreographic-based monitoring.
\newblock In {\em Reversible Computation: Extending Horizons of Computing - Selected Results of the {COST} Action {IC1405}}, volume 12070 of {\em Lecture Notes in Computer Science}, pages 128--150. Springer, 2020.
\newblock \href {https://doi.org/10.1007/978-3-030-47361-7\_6} {\path{doi:10.1007/978-3-030-47361-7\_6}}.

\bibitem[FV05]{FieldV05}
John Field and Carlos~A. Varela.
\newblock Transactors: a programming model for maintaining globally consistent distributed state in unreliable environments.
\newblock In Jens Palsberg and Mart{\'{\i}}n Abadi, editors, {\em Proceedings of the 32nd {ACM} {SIGPLAN-SIGACT} Symposium on Principles of Programming Languages, {POPL} 2005}, pages 195--208. {ACM}, 2005.
\newblock \href {https://doi.org/10.1145/1040305.1040322} {\path{doi:10.1145/1040305.1040322}}.

\bibitem[FV13]{sepi}
Juliana Franco and Vasco~Thudichum Vasconcelos.
\newblock A concurrent programming language with refined session types.
\newblock In Steve Counsell and Manuel N{\'{u}}{\~{n}}ez, editors, {\em SEFM 2013}, volume 8368 of {\em Lecture Notes in Computer Science}, pages 15--28. Springer, 2013.
\newblock \href {https://doi.org/10.1007/978-3-319-05032-4\_2} {\path{doi:10.1007/978-3-319-05032-4\_2}}.

\bibitem[GLM14]{GiachinoLM14}
Elena Giachino, Ivan Lanese, and Claudio~Antares Mezzina.
\newblock Causal-consistent reversible debugging.
\newblock In Stefania Gnesi and Arend Rensink, editors, {\em Fundamental Approaches to Software Engineering - 17th International Conference, {FASE} 2014}, volume 8411 of {\em Lecture Notes in Computer Science}, pages 370--384. Springer, 2014.
\newblock \href {https://doi.org/10.1007/978-3-642-54804-8\_26} {\path{doi:10.1007/978-3-642-54804-8\_26}}.

\bibitem[GLMT17]{GiachinoLMT17}
Elena Giachino, Ivan Lanese, Claudio~Antares Mezzina, and Francesco Tiezzi.
\newblock Causal-consistent rollback in a tuple-based language.
\newblock {\em J. Log. Algebr. Meth. Program.}, 88:99--120, 2017.
\newblock URL: \url{https://doi.org/10.1016/j.jlamp.2016.09.003}, \href {https://doi.org/10.1016/J.JLAMP.2016.09.003} {\path{doi:10.1016/J.JLAMP.2016.09.003}}.

\bibitem[HKP{\etalchar{+}}10]{sessionj}
Raymond Hu, Dimitrios Kouzapas, Olivier Pernet, Nobuko Yoshida, and Kohei Honda.
\newblock Type-safe eventful sessions in java.
\newblock In Theo D'Hondt, editor, {\em {ECOOP} 2010}, volume 6183 of {\em Lecture Notes in Computer Science}, pages 329--353. Springer, 2010.
\newblock \href {https://doi.org/10.1007/978-3-642-14107-2\_16} {\path{doi:10.1007/978-3-642-14107-2\_16}}.

\bibitem[HLV{\etalchar{+}}16]{HuttelLVCCDMPRT16}
Hans H{\"{u}}ttel, Ivan Lanese, Vasco~T. Vasconcelos, Lu{\'{\i}}s Caires, Marco Carbone, Pierre{-}Malo Deni{\'{e}}lou, Dimitris Mostrous, Luca Padovani, Ant{\'{o}}nio Ravara, Emilio Tuosto, Hugo~Torres Vieira, and Gianluigi Zavattaro.
\newblock Foundations of session types and behavioural contracts.
\newblock {\em {ACM} Comput. Surv.}, 49(1):3:1--3:36, 2016.
\newblock \href {https://doi.org/10.1145/2873052} {\path{doi:10.1145/2873052}}.

\bibitem[HVK98]{HondaVK98}
Kohei Honda, Vasco~Thudichum Vasconcelos, and Makoto Kubo.
\newblock Language primitives and type discipline for structured communication-based programming.
\newblock In {\em ESOP}, volume 1381 of {\em Lecture Notes in Computer Science}, pages 122--138. Springer, 1998.
\newblock URL: \url{https://doi.org/10.1007/BFb0053558}, \href {https://doi.org/10.1007/BFB0053558} {\path{doi:10.1007/BFB0053558}}.

\bibitem[KU18]{KuhnU18}
Stefan Kuhn and Irek Ulidowski.
\newblock Local reversibility in a calculus of covalent bonding.
\newblock {\em Sci. Comput. Program.}, 151:18--47, 2018.
\newblock URL: \url{https://doi.org/10.1016/j.scico.2017.09.008}, \href {https://doi.org/10.1016/J.SCICO.2017.09.008} {\path{doi:10.1016/J.SCICO.2017.09.008}}.

\bibitem[LLM{\etalchar{+}}13]{LMSS13}
Ivan Lanese, Michael Lienhardt, Claudio~Antares Mezzina, Alan Schmitt, and Jean{-}Bernard Stefani.
\newblock Concurrent flexible reversibility.
\newblock In Matthias Felleisen and Philippa Gardner, editors, {\em ESOP}, volume 7792 of {\em Lecture Notes in Computer Science}, pages 370--390. Springer, 2013.
\newblock \href {https://doi.org/10.1007/978-3-642-37036-6\_21} {\path{doi:10.1007/978-3-642-37036-6\_21}}.

\bibitem[LMS10]{LaneseMS10}
Ivan Lanese, Claudio~Antares Mezzina, and Jean-Bernard Stefani.
\newblock Reversing higher-order pi.
\newblock In Paul Gastin and Fran{\c{c}}ois Laroussinie, editors, {\em CONCUR}, volume 6269 of {\em Lecture Notes in Computer Science}, pages 478--493. Springer, 2010.
\newblock \href {https://doi.org/10.1007/978-3-642-15375-4\_33} {\path{doi:10.1007/978-3-642-15375-4\_33}}.

\bibitem[LMSS11]{LaneseMSS11}
Ivan Lanese, Claudio~Antares Mezzina, Alan Schmitt, and Jean-Bernard Stefani.
\newblock {Controlling Reversibility in Higher-Order Pi}.
\newblock In Joost{-}Pieter Katoen and Barbara K{\"{o}}nig, editors, {\em CONCUR}, volume 6901 of {\em Lecture Notes in Computer Science}, pages 297--311. Springer, 2011.
\newblock \href {https://doi.org/10.1007/978-3-642-23217-6\_20} {\path{doi:10.1007/978-3-642-23217-6\_20}}.

\bibitem[LSU22]{LaneseSuS22}
Ivan Lanese, Ulrik~Pagh Schultz, and Irek Ulidowski.
\newblock Reversible computing in debugging of erlang programs.
\newblock {\em {IT} Prof.}, 24(1):74--80, 2022.
\newblock \href {https://doi.org/10.1109/MITP.2021.3117920} {\path{doi:10.1109/MITP.2021.3117920}}.

\bibitem[Mes92]{meseguer_conditional_1992}
Jos{\'{e}} Meseguer.
\newblock Conditional rewriting logic as a unified model of concurrency.
\newblock {\em Theoretical Computer Science}, 96(1):73--155, 1992.
\newblock \href {https://doi.org/10.1016/0304-3975(92)90182-F} {\path{doi:10.1016/0304-3975(92)90182-F}}.

\bibitem[Mes12]{DBLP:journals/jlp/Meseguer12}
Jos{\'{e}} Meseguer.
\newblock Twenty years of rewriting logic.
\newblock {\em J. Log. Algebraic Methods Program.}, 81(7-8):721--781, 2012.
\newblock URL: \url{https://doi.org/10.1016/j.jlap.2012.06.003}, \href {https://doi.org/10.1016/J.JLAP.2012.06.003} {\path{doi:10.1016/J.JLAP.2012.06.003}}.

\bibitem[Mil89]{CCS}
Robin Milner.
\newblock {\em Communication and concurrency.}
\newblock Prentice-Hall, 1989.

\bibitem[MP17]{MezzinaP17a}
Claudio~Antares Mezzina and Jorge~A. P{\'{e}}rez.
\newblock Reversibility in session-based concurrency: {A} fresh look.
\newblock {\em J. Log. Algebraic Methods Program.}, 90:2--30, 2017.
\newblock URL: \url{https://doi.org/10.1016/j.jlamp.2017.03.003}, \href {https://doi.org/10.1016/J.JLAMP.2017.03.003} {\path{doi:10.1016/J.JLAMP.2017.03.003}}.

\bibitem[MP21]{MezzinaP21}
Claudio~Antares Mezzina and Jorge~A. P{\'{e}}rez.
\newblock Causal consistency for reversible multiparty protocols.
\newblock {\em Log. Methods Comput. Sci.}, 17(4), 2021.
\newblock URL: \url{https://doi.org/10.46298/lmcs-17(4:1)2021}, \href {https://doi.org/10.46298/LMCS-17(4:1)2021} {\path{doi:10.46298/LMCS-17(4:1)2021}}.

\bibitem[MPST17]{Twitlang}
Alessandro Maggi, Marinella Petrocchi, Angelo Spognardi, and Francesco Tiezzi.
\newblock A language-based approach to modelling and analysis of twitter interactions.
\newblock {\em J. Log. Algebr. Meth. Program.}, 87:67--91, 2017.
\newblock URL: \url{https://doi.org/10.1016/j.jlamp.2016.11.003}, \href {https://doi.org/10.1016/J.JLAMP.2016.11.003} {\path{doi:10.1016/J.JLAMP.2016.11.003}}.

\bibitem[MSG{\etalchar{+}}20]{wg2}
Claudio~Antares Mezzina, Rudolf Schlatte, Robert Gl{\"{u}}ck, Tue Haulund, James Hoey, Martin~Holm Cservenka, Ivan Lanese, Torben~{\AE}. Mogensen, Harun Siljak, Ulrik~Pagh Schultz, and Irek Ulidowski.
\newblock Software and reversible systems: {A} survey of recent activities.
\newblock In Irek Ulidowski, Ivan Lanese, Ulrik~Pagh Schultz, and Carla Ferreira, editors, {\em Reversible Computation: Extending Horizons of Computing - Selected Results of the {COST} Action {IC1405}}, volume 12070 of {\em Lecture Notes in Computer Science}, pages 41--59. Springer, 2020.
\newblock \href {https://doi.org/10.1007/978-3-030-47361-7\_2} {\path{doi:10.1007/978-3-030-47361-7\_2}}.

\bibitem[MT17]{MezzinaT17}
Claudio~Antares Mezzina and Emilio Tuosto.
\newblock Choreographies for automatic recovery.
\newblock {\em CoRR}, abs/1705.09525, 2017.
\newblock URL: \url{http://arxiv.org/abs/1705.09525}, \href {https://arxiv.org/abs/1705.09525} {\path{arXiv:1705.09525}}.

\bibitem[MTY23]{coordination}
Claudio~Antares Mezzina, Francesco Tiezzi, and Nobuko Yoshida.
\newblock Rollback recovery in session-based programming.
\newblock In Sung{-}Shik Jongmans and Ant{\'{o}}nia Lopes, editors, {\em Coordination Models and Languages - 25th {IFIP} {WG} 6.1 International Conference, {COORDINATION} 2023}, volume 13908 of {\em Lecture Notes in Computer Science}, pages 195--213. Springer, 2023.
\newblock \href {https://doi.org/10.1007/978-3-031-35361-1\_11} {\path{doi:10.1007/978-3-031-35361-1\_11}}.

\bibitem[PP13]{PerumallaP13}
Kalyan~S. Perumalla and Vladimir~A. Protopopescu.
\newblock Reversible simulations of elastic collisions.
\newblock {\em {ACM} Trans. Model. Comput. Simul.}, 23(2):12:1--12:25, 2013.
\newblock \href {https://doi.org/10.1145/2457459.2457461} {\path{doi:10.1145/2457459.2457461}}.

\bibitem[PRV10]{speculative_exec}
Prakash Prabhu, Ganesan Ramalingam, and Kapil Vaswani.
\newblock Safe programmable speculative parallelism.
\newblock In Benjamin~G. Zorn and Alexander Aiken, editors, {\em PLDI}, pages 50--61. {ACM}, 2010.
\newblock \href {https://doi.org/10.1145/1806596.1806603} {\path{doi:10.1145/1806596.1806603}}.

\bibitem[SKM17]{SwalensKM17}
Janwillem Swalens, Joeri~De Koster, and Wolfgang~De Meuter.
\newblock Transactional actors: communication in transactions.
\newblock In Ali Jannesari, Pablo de~Oliveira~Castro, Yukinori Sato, and Tim Mattson, editors, {\em Proceedings of the 4th {ACM} {SIGPLAN} International Workshop on Software Engineering for Parallel Systems, SEPS@SPLASH 2017}, pages 31--41. {ACM}, 2017.
\newblock \href {https://doi.org/10.1145/3141865.3141866} {\path{doi:10.1145/3141865.3141866}}.

\bibitem[TY15]{JLAMP_TY15}
Francesco Tiezzi and Nobuko Yoshida.
\newblock Reversible session-based pi-calculus.
\newblock {\em J. Log. Algebr. Meth. Program.}, 84(5):684--707, 2015.
\newblock URL: \url{https://doi.org/10.1016/j.jlamp.2015.03.004}, \href {https://doi.org/10.1016/J.JLAMP.2015.03.004} {\path{doi:10.1016/J.JLAMP.2015.03.004}}.

\bibitem[TY16]{TiezziY16}
Francesco Tiezzi and Nobuko Yoshida.
\newblock Reversing single sessions.
\newblock In Simon~J. Devitt and Ivan Lanese, editors, {\em RC}, volume 9720 of {\em Lecture Notes in Computer Science}, pages 52--69. Springer, 2016.
\newblock \href {https://doi.org/10.1007/978-3-319-40578-0\_4} {\path{doi:10.1007/978-3-319-40578-0\_4}}.

\bibitem[Vas21]{Vassor21}
Martin Vassor.
\newblock Reversibility and predictions.
\newblock In Shigeru Yamashita and Tetsuo Yokoyama, editors, {\em Reversible Computation - 13th International Conference, {RC} 2021}, volume 12805 of {\em Lecture Notes in Computer Science}, pages 163--181. Springer, 2021.
\newblock \href {https://doi.org/10.1007/978-3-030-79837-6\_10} {\path{doi:10.1007/978-3-030-79837-6\_10}}.

\bibitem[Vid23]{Vidal23}
Germ{\'{a}}n Vidal.
\newblock From reversible computation to checkpoint-based rollback recovery for message-passing concurrent programs.
\newblock In Javier C{\'{a}}mara and Sung{-}Shik Jongmans, editors, {\em Formal Aspects of Component Software - 19th International Conference, {FACS} 2023, Virtual Event, October 19-20, 2023, Revised Selected Papers}, volume 10853 of {\em Lecture Notes in Computer Science}, pages 103--123. Springer, 2023.
\newblock \href {https://doi.org/10.1007/978-3-031-52183-6\_6} {\path{doi:10.1007/978-3-031-52183-6\_6}}.

\bibitem[VM02]{CCSMaude2}
Alberto Verdejo and Narciso Mart{\'{\i}}{-}Oliet.
\newblock Implementing {CCS} in maude 2.
\newblock In {\em Fourth International Workshop on Rewriting logic and Its Applications, WRLA2002}, volume~71 of {\em Electronic Notes in Theoretical Computer Science}, pages 282--300. Elsevier, 2002.
\newblock \href {https://doi.org/10.1016/S1571-0661(05)82540-X} {\path{doi:10.1016/S1571-0661(05)82540-X}}.

\bibitem[VS18]{VassorS18}
Martin Vassor and Jean{-}Bernard Stefani.
\newblock Checkpoint/rollback vs causally-consistent reversibility.
\newblock In {\em Reversible Computation - 10th International Conference, {RC} 2018,}, volume 11106 of {\em Lecture Notes in Computer Science}, pages 286--303. Springer, 2018.
\newblock \href {https://doi.org/10.1007/978-3-319-99498-7\_20} {\path{doi:10.1007/978-3-319-99498-7\_20}}.

\bibitem[VtBLL18]{QFLan}
Andrea Vandin, Maurice~H. ter Beek, Axel Legay, and Alberto Lluch{-}Lafuente.
\newblock Qflan: {A} tool for the quantitative analysis of highly reconfigurable systems.
\newblock In {\em FM}, volume 10951 of {\em Lecture Notes in Computer Science}, pages 329--337. Springer, 2018.
\newblock \href {https://doi.org/10.1007/978-3-319-95582-7} {\path{doi:10.1007/978-3-319-95582-7}}.

\bibitem[WS92]{Weikum92conceptsand}
Gerhard Weikum and Hans-J. Schek.
\newblock Concepts and applications of multilevel transactions and open nested transactions.
\newblock In {\em Database Transaction Models for Advanced Applications}, pages 515--553. Morgan Kaufmann, 1992.

\bibitem[YHNN13]{scribble}
Nobuko Yoshida, Raymond Hu, Rumyana Neykova, and Nicholas Ng.
\newblock The scribble protocol language.
\newblock In Mart{\'{\i}}n Abadi and Alberto Lluch{-}Lafuente, editors, {\em {TGC} 2013}, volume 8358 of {\em Lecture Notes in Computer Science}, pages 22--41. Springer, 2013.
\newblock \href {https://doi.org/10.1007/978-3-319-05119-2\_3} {\path{doi:10.1007/978-3-319-05119-2\_3}}.

\bibitem[YV07]{YoshidaV07}
Nobuko Yoshida and Vasco~Thudichum Vasconcelos.
\newblock {Language Primitives and Type Discipline for Structured Communication-Based Programming Revisited: Two Systems for Higher-Order Session Communication}.
\newblock {\em Electr. Notes Theor. Comp. Sci.}, 171(4):73--93, 2007.
\newblock URL: \url{https://doi.org/10.1016/j.entcs.2007.02.056}, \href {https://doi.org/10.1016/J.ENTCS.2007.02.056} {\path{doi:10.1016/J.ENTCS.2007.02.056}}.

\end{thebibliography}

\newpage
\appendix

\section{Proofs}
\label{proofs:binary}

\newcommand{\mFunc}[2]{tl_#1(#2)}


\modiff{The auxiliary lemmata required to prove the soundness results rely on the following definitions:}
\begin{itemize}
\item a process $\checkpointType{P}$ and  a type $\checkpointType{\typeT}$
are in \emph{checkpoint accordance} if $\checkpointType{P}=P$ implies $\checkpointType{\typeT}=\typeT$,
and $\checkpointType{P}=\imposed{P}$ implies $\checkpointType{\typeT}=\imposed{\typeT}$;
\item let $\actionLab$ a process label, its \emph{dual label} $\bar{\actionLab}$ is defined as follows:
\added{$\overline{\send{k}{v}}=\receive{k}{x}$ for some $x$,}
\added{$\overline{\receive{k}{x}}=\send{k}{v}$ for some $v$,}
$\overline{\select{k}{l}}=\branching{k}l$,
$\overline{\branching{k}l}=\select{k}{l}$; this notion of duality straightforwardly extends to type labels;
\item the function $\mFunc{\sorting}{\cdot}$, mapping process labels to type labels under sorting $\sorting$, is defined as follows:
$\mFunc{\sorting}{\send{k}{v}}=\outType{S}$ with $\sorting \judge v \hasType S$,
$\mFunc{\sorting}{\receive{k}{x}}=\inpType{S}$ with $\sorting \judge x \hasType S$,
$\mFunc{\sorting}{\select{k}{l}}=\selTypeLabel{l}$,
$\mFunc{\sorting}{\branching{k}l}=\branchTypeLabel{l}$,
$\mFunc{\sorting}{\commitLab}=\commitType$,
$\mFunc{\sorting}{\rollLab}=\rollType$,
$\mFunc{\sorting}{\abortLab}=\abortType$, and
$\mFunc{\sorting}{\ite}=\tau$.
\end{itemize}

The following lemma states that each reduction of a reachable collaboration corresponds 
to a reduction of its configuration types.
\begin{lem}\label{lemma:subj}
Let $C=\singleSession{s}{C'}(\logged{{\genSession}}{\checkpointType{P_1}} P_2
\ \mid \ 
\logged{\genSession}{\checkpointType{Q_1}} Q_2)$ 
be a reachable collaboration,
with 
$C'=(\requestAct{a}{x}{P} \mid \acceptAct{a}{y}{Q})$,
\mbox{$(\emptyset;\emptyset \judge P \hasType x:\typeT)$},
\mbox{$(\emptyset;\emptyset \judge Q \hasType y:\typeT')$}, 
\mbox{$\typeT \compliant \typeT'$},
\mbox{$(\basis_1;\sorting_1 \judge P_1\subst{x}{\ce{s}} \hasType x:\typeT_1)$},
\mbox{$(\basis_2;\sorting_2 \judge P_2\subst{x}{\ce{s}} \hasType x:\typeT_2)$},
{$(\basis_1';\sorting_1' \judge Q_1\subst{y}{{s}} \hasType y:\typeU_1)$}, and 
\mbox{$(\basis_2';\sorting_2' \judge Q_2\subst{y}{{s}} \hasType y:\typeU_2)$}.
If $C \fwbwred \singleSession{s}{C'}(\logged{{\genSession}}{\checkpointType{P_1'}} P_2'
\ \mid \ 
\logged{\genSession}{\checkpointType{Q_1'}} Q_2')$
then there exist $\typeT_1'$, $\typeT_2'$, $\typeU_1'$, $\typeU_2'$ such that
$\initConf{\typeT}{\typeT'}\conf{\checkpointType{\typeT_1}}{\typeT_2} \confcomp \conf{\checkpointType{\typeU_1}}{\typeU_2}
\typered
\initConf{\typeT}{\typeT'}\conf{\checkpointType{\typeT_1'}}{\typeT_2'} 
\confcomp \conf{\checkpointType{\typeU_1'}}{\typeU_2'}$ 
with 
$\checkpointType{P_1'}$ (resp. $\checkpointType{Q_1'}$) in checkpoint accordance 
with $\checkpointType{\typeT_1'}$ (resp. $\checkpointType{\typeU_1'}$), 
{$(\hat{\basis}_1;\hat{\sorting}_1 \judge P_1'\subst{x}{\ce{s}} \hasType x:\typeT_1')$},
{$(\hat{\basis}_2;\hat{\sorting}_2 \judge P_2'\subst{x}{\ce{s}} \hasType x:\typeT_2')$},
{$(\hat{\basis}_1';\hat{\sorting}_1' \judge Q_1'\subst{y}{{s}} \hasType y:\typeU_1')$}, and 
{$(\hat{\basis}_2';\hat{\sorting}_2' \judge Q_2'\subst{y}{{s}} \hasType y:\typeU_2')$}.
\end{lem}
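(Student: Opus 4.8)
The plan is to argue by case analysis on the rule that derives the single step $C \fwbwred C''$, exploiting the fact that the conclusion fixes the shape of $C''$ as two logs inside the same session. First I would note that the derivation ends with a restriction/parallel congruence rule (\rulelabel{F-Res}/\rulelabel{B-Res}, possibly \rulelabel{F-Par}/\rulelabel{B-Par}) wrapping an inner reduction of the two-log body, taken up to structural congruence so as to absorb recursion unfolding $\recAct{X}{P}\congr P\subst{\recAct{X}{P}}{X}$. Stripping these, the inner step is one of \rulelabel{F-Com}, \rulelabel{F-Lab}, \rulelabel{F-If}, \rulelabel{E-Cmt$_1$}, \rulelabel{E-Cmt$_2$}, or \rulelabel{E-Rll$_1$}; every other rule is excluded by the shape of $C''$, since \rulelabel{B-Abt} yields the bare initiator collaboration and \rulelabel{E-Rll$_2$}, \rulelabel{E-Com$_{1,2}$}, \rulelabel{E-Lab$_{1,2}$} yield $\rollError$ or $\comError$, none having the required form. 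As the configuration semantics comes with symmetric rules, I would fix without loss of generality that the active party is the left one.

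For each surviving case the engine is inversion of the process typing derivations combined with the action-to-label translation $\mFunc{\sorting}{\cdot}$: from a labelled step $P_2 \auxrel{\actionLab} P_2'$ and the typing of $P_2\subst{x}{\ce{s}}$, inverting the rules of Fig.~\ref{fig:typingSystem_proc} exposes the head of $\typeT_2$ and yields the matching type transition $\typeT_2 \typeTrans{\mFunc{\sorting}{\actionLab}} \typeT_2'$, with \rulelabel{TS-rec} absorbing any recursion unfolding used at the process level. In the communication and label cases the dual actions of the two parties are paired through \rulelabel{TS-Com} and \rulelabel{TS-Lab}; sort agreement between the two sides is guaranteed by the typing together with $\typeT \compliant \typeT'$, since a mismatch would leave the configuration stuck outside an end state and so contradict compliance. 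The conditional case maps $P_2\auxrel{\ite}P_2'$ to the relevant branch of \rulelabel{TS-If$_1$}/\rulelabel{TS-If$_2$} lifted by \rulelabel{TS-Tau}. In all three, the logs, hence the checkpoints and their imposed markings, are untouched, so checkpoint accordance is inherited unchanged, and retyping the continuations needs only the standard value-substitution lemma (for the value received in \rulelabel{F-Com}) and a recursion-unfolding typing lemma.

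The commit and rollback cases carry the imposed-marking bookkeeping, and this is where accordance is created or consumed. For rollback, \rulelabel{E-Rll$_1$} restores both codes to their checkpoints; the premise that the active checkpoint is non-imposed transfers to its checkpoint type through checkpoint accordance, so \rulelabel{TS-Rll$_1$} applies, and since the restored codes coincide with the checkpoints both their typing and their accordance carry over verbatim. For commit, the active checkpoint is reset to the continuation ($P_1'$ on the process side, $\typeT_1'$ on the type side, non-imposed at both levels), while the passive checkpoint is either marked imposed (\rulelabel{E-Cmt$_1$}/\rulelabel{TS-Cmt$_1$}) or left untouched (\rulelabel{E-Cmt$_2$}/\rulelabel{TS-Cmt$_2$}).

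I expect the main obstacle to be showing that the two commit rules are selected consistently at the two levels, i.e. that the process-level test $\checkpointType{Q_2}\congr P_2$ agrees with the syntactic type-level test $\checkpointType{\typeU_2}=\typeT_2$. The forward implication is immediate because typing is a function of the process. The converse is delicate precisely because of recursion: a recursive code may loop back to a state congruent to its checkpoint, and I must check that exactly when this happens the code type also returns, as a syntactic term, to the checkpoint type via \rulelabel{TS-rec}, so that the two equality tests never desynchronise. I would handle this by establishing, as an invariant accompanying the lockstep simulation (and provable over the reductions witnessing reachability of $C$), that from a commit point the code and its checkpoint are equal up to $\congr$ at the process level iff they are syntactically equal at the type level. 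Pinning down this interaction between recursion, structural congruence, and the commit side conditions is the crux of the argument.
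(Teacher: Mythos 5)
Your proposal matches the paper's proof in all essentials: the same case analysis on the rule deriving the single step (with \rulelabel{F-Res}/\rulelabel{B-Res} stripped off first), typing inversion paired with the corresponding type transition via $\mFunc{\sorting}{\cdot}$, the compliance-based contradiction to force sort agreement in the \rulelabel{F-Com} case, and a correct selection of \rulelabel{TS-Rll$_1$} over \rulelabel{TS-Rll$_2$} in the rollback case (you argue it from checkpoint accordance, the paper from compliance; both work). The one point where you go beyond the paper is the commit case: the paper's proof dismisses \rulelabel{E-Cmt$_1$}/\rulelabel{E-Cmt$_2$} as ``similar to the previous case,'' whereas you correctly identify that keeping the side conditions $\checkpointType{Q_2}=P_2$ and $\checkpointType{\typeU}_2=\typeT_2$ synchronised across recursion unfolding is the genuinely delicate step and requires the invariant you sketch --- a legitimate tightening of the argument rather than a divergence from it.
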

\begin{proof}
We have two cases depending whether the reduction $\fwbwred$
has forward or backward direction. 
\begin{description}
\item[$(\fwbwred \,=\, \fwred)$]
From rule \rulelabel{F-Res}, we have 
$\logged{{\genSession}}{\checkpointType{P_1}} P_2
\ \mid \ \logged{\genSession}{\checkpointType{Q_1}} Q_2
\fwred 
\logged{{\genSession}}{\checkpointType{P_1'}} P_2'
\ \mid \ \logged{\genSession}{\checkpointType{Q_1'}} Q_2'$.
We prove the result by case analysis on the last rule applied in 
the inference of the above reduction.
\begin{itemize}
\item \rulelabel{F-Com}. 
In this case we have $P_2 \auxrel{\send{\ce{s}}{v}} P_2'$,
$Q_2 \auxrel{\receive{s}{z}} Q_2''$, with $Q_2'=Q_2''\subst{v}{z}$,
$\checkpointType{P_1'}=\checkpointType{P_1}$ and 
$\checkpointType{Q_1'}=\checkpointType{Q_1}$.
Thus, $P_2\subst{x}{\ce{s}}=\sendAct{x}{e}{P_2'\subst{x}{\ce{s}}}$
for some $e$ such that $\expreval{e}{v}$, and 
$Q_2\subst{y}{{s}}=\receiveAct{y}{z:S'}{Q_2'\subst{y}{{s}}}$.
By rule \rulelabel{T-Snd}, we have that $\typeT_2=\outType{S}.\typeT_2'$,
with $\sorting_2 \judge e \hasType S$, (hence $\sorting_2 \judge v \hasType S$),
and $\basis_2;\sorting_2 \judge P_2'\subst{x}{\ce{s}} \hasType x:\typeT_2'$
(hence $\hat{\basis}_2 = \basis_2$ and $\hat{\sorting}_2 = \sorting_2$).
Similarly, by rule \rulelabel{T-Rcv}, we have that 
$\typeU_2=\inpType{S'}.\typeU_2'$ and 
$\basis_2';\sorting_2'\comp z:S' \judge Q_2''\subst{y}{{s}} \hasType y:\typeU_2'$
(hence $\hat{\basis}_2' = \basis_2'$ and $\hat{\sorting}_2' = \sorting_2'\comp z:S'$).
By rules \rulelabel{TS-Snd} and \rulelabel{TS-Rcv}, we get
$\typeT_2 \typeTrans{\outType{S}} \typeT_2'$ and 	
$\typeU_2 \typeTrans{\inpType{S'}} \typeU_2$.
Now, reasoning by contradiction, let us suppose that $S\neq S'$. Thus, the term 
$\initConf{\typeT}{\typeT'}\conf{\checkpointType{\typeT_1}}{\typeT_2} \confcomp \conf{\checkpointType{\typeU_1}}{\typeU_2}
\typered\!\!\!\!\!\!\!\!\!/\ \ \ $, since no rule in Fig.~\ref{fig:typeSemantics_ext}
can be applied. However, since $C$ is a reachable collaboration, 
this type configuration is originated from 
$\initConf{\typeT}{\typeT'}\conf{{\typeT}}{\typeT} \confcomp \conf{{\typeT'}}{\typeT'}$.
By Def.~\ref{def:compliance}, $\typeT \compliant \typeT'$ implies 
$\typeT_2=\typeU_2=\inactType$, which is a contradiction.
Therefore, it holds that $S=S'$. Hence, by applying rule \rulelabel{TS-Com} we can 
conclude.
\item \rulelabel{F-Lab}, \rulelabel{E-Cmt$_1$} and \rulelabel{E-Cmt$_2$}.
Similar to the previous case.
\item \rulelabel{F-Par}.
In this case we have that
$\logged{{\genSession}}{\checkpointType{P_1}} P_2
\fwred 
\logged{{\genSession}}{\checkpointType{P_1'}} P_2'$.
Since this transition involves only one log term, it can be inferred only by applying rule \rulelabel{F-If},
from which we have $P_2 \auxrel{\ite} P_2'$ and $\checkpointType{P_1'}=\checkpointType{P_1}$.
By rule \rulelabel{P-IfT} (the case of rule \rulelabel{P-IfF} is similar), we have $P_2\subst{x}{\ce{s}}=
\ifthenelseAct{e}{P_2'\subst{x}{\ce{s}}}{R}$ with $\expreval{e}{\ctrue}$.
By rule \rulelabel{T-If} we get $\typeT_2=\typeT_2' \choiceType \typeV$
and $\basis_2;\sorting_2 \judge P_2'\subst{x}{\ce{s}} \hasType x:\typeT_2'$.
By rule \rulelabel{TS-If$_1$}, $\typeT_2 \typeTrans{\tau} \typeT_2'$.
By applying rule \rulelabel{TS-Tau} we can conclude.
\end{itemize}

\item[$(\fwbwred \,=\, \bwred)$] 
\modiff{The reduction can be inferred by applying rule \rulelabel{B-Res} or 
rule \rulelabel{B-Abt}. Let us consider the former case, the latter is similar.}
From rule \rulelabel{B-Res}, we have 
$\logged{{\genSession}}{\checkpointType{P_1}} P_2
\ \mid \ \logged{\genSession}{\checkpointType{Q_1}} Q_2
\bwred 
\logged{{\genSession}}{\checkpointType{P_1'}} P_2'
\ \mid \ \logged{\genSession}{\checkpointType{Q_1'}} Q_2'$.
We prove the result by case analysis on the last rule applied in 
the inference of the above reduction.
\begin{itemize}
\item \rulelabel{B-Rll}. 
In this case $P_2 \auxrel{\rollLab} P_2'$,
$\checkpointType{P_1'}=\checkpointType{P_1}$,
$\checkpointType{Q_1'}= \checkpointType{Q_1}$
and $Q_2'=Q_1$.
By rule \rulelabel{P-Rll}, we have $P_2\subst{x}{\ce{s}}=\roll$
and $P_2'\subst{x}{\ce{s}}=\inact$. 
By rule \rulelabel{T-Rll} we get $\typeT_2=\rollType$.
\modiff{By hypothesis $\typeT \compliant \typeT'$, which implies that 
rule \rulelabel{TS-Rll$_2$} is not applicable, because by Def.~\ref{def:compliance} an erroneous type configuration cannot be reached. Hence, by rule \rulelabel{TS-Rll$_1$},} $\typeT_2 \typeTrans{roll} \typeT_2'$,
with $\typeT_2'=\inactType$.
By rule \rulelabel{T-Inact}, we have 
$\basis_2;\sorting_2 \judge P_2'\subst{x}{\ce{s}} \hasType x:\typeT_2'$.		
Finally, by applying rule \rulelabel{TS-Rll$_1$} we can conclude.
\item 
\rulelabel{B-Par}. Similarly to the forward case. \qedhere
\end{itemize}
\end{description}
\end{proof}

The following lemma relates collaboration reductions to type reductions when 
a \linebreak[5] $\rollError$ is produced.
\begin{lem}\label{lemma:typeCorrespondence}
Let $C=(\requestAct{a}{x}{P} \mid \acceptAct{a}{y}{Q})$ such that
$C \hasType \{\ce{a}:\typeT_1,a:\typeT_2\}$
and 
$\typeT_1 \compliant \typeT_2$.
If $C \fwbwred^* \singleSession{s}{C}(\logged{\ce{\genSession}}{\imposed{P_1}} P_2 
\ \mid \ \logged{\genSession}{\checkpointType{Q_1}} Q_2)$
and $P_2 \auxrel{\rollLab} P_2'$, 
then there exist
$\typeU_1$, $\typeU_2$, $\typeU_1'$, $\typeU_2'$, $\typeU_1''$ such that
$\initConf{\typeT_1}{\typeT_2}\conf{\typeT_1}{\typeT_1} \confcomp \conf{\typeT_2}{\typeT_2}
\typered^*
\initConf{\typeT_1}{\typeT_2}\conf{\imposed{\typeU_1}}{\typeU_1'} 
\confcomp \conf{\checkpointType{\typeU_2}}{\typeU_2'}$
and 
$\typeU_1' \typeTrans{roll} \typeU_1''$.
\end{lem}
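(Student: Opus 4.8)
The plan is to lift the collaboration reduction sequence to a type-configuration reduction sequence by repeated application of the step-wise correspondence established in Lemma~\ref{lemma:subj}, and then to read off the required roll-enabledness at the type level from checkpoint accordance together with type preservation. First I would observe that the reduction $C \fwbwred^* \singleSession{s}{C}(\logged{\ce{\genSession}}{\imposed{P_1}} P_2 \mid \logged{\genSession}{\checkpointType{Q_1}} Q_2)$ must begin with an application of rule \rulelabel{F-Con} (no other rule applies to an initial collaboration $C$), yielding the session term $\singleSession{s}{C}(\logged{\ce{s}}{P\subst{\ce{s}}{x}} P\subst{\ce{s}}{x} \mid \logged{s}{Q\subst{s}{y}} Q\subst{s}{y})$. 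At this point I establish the base correspondence: both (non-imposed) process checkpoints coincide with their running processes, which are typed by $\typeT_1$ and $\typeT_2$ respectively, and they correspond to the initial type configuration $\initConf{\typeT_1}{\typeT_2}\conf{\typeT_1}{\typeT_1} \confcomp \conf{\typeT_2}{\typeT_2}$, in which both type checkpoints are non-imposed and equal to the running types. Checkpoint accordance and the typing judgements hold trivially here, since this \rulelabel{F-Con} step lies outside the scope of Lemma~\ref{lemma:subj} (which is tailored to terms already in session form).

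Next I proceed by induction on the number of reduction steps following the initial \rulelabel{F-Con} step. Since the sequence reaches a clean session term, no error-producing reduction (\rulelabel{E-Com}, \rulelabel{E-Lab}, \rulelabel{E-Rll$_2$}) occurs along the way — an error, once produced, sits in an error context and admits no reduction back to a clean session — so every intermediate step is one of those covered by Lemma~\ref{lemma:subj}. For the inductive step I apply Lemma~\ref{lemma:subj} to the current (reachable) session term, obtaining a matching type-configuration reduction $\typered$ that preserves both checkpoint accordance (between each process checkpoint and the corresponding type checkpoint) and the typing of the running processes. The crucial point is that the side hypotheses of Lemma~\ref{lemma:subj} are maintained along the whole sequence: reachability is preserved by reduction; the compliance requirement $\typeT_1 \compliant \typeT_2$ concerns the fixed initial types and is therefore invariant; and the typing judgements needed as input for the next application are exactly those produced as output by the previous one. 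Chaining the single steps yields a derivation $\initConf{\typeT_1}{\typeT_2}\conf{\typeT_1}{\typeT_1} \confcomp \conf{\typeT_2}{\typeT_2} \typered^* \initConf{\typeT_1}{\typeT_2}\conf{\checkpointType{\typeU_1}}{\typeU_1'} \confcomp \conf{\checkpointType{\typeU_2}}{\typeU_2'}$ reaching a configuration that corresponds to the final collaboration.

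Finally I read off the two remaining facts. Since the first party's process checkpoint in the final collaboration is the imposed checkpoint $\imposed{P_1}$, checkpoint accordance forces the matching type checkpoint to be imposed as well, i.e. $\checkpointType{\typeU_1} = \imposed{\typeU_1}$. Moreover, from $P_2 \auxrel{\rollLab} P_2'$ and rule \rulelabel{P-Rll} we get $P_2\subst{x}{\ce{s}} = \roll$, whose only typing (rule \rulelabel{T-Rll}) gives $\typeU_1' = \rollType$; rule \rulelabel{TS-Rll} then provides the transition $\typeU_1' \typeTrans{\rollLab} \inactType$, so that $\typeU_1'' = \inactType$ witnesses the claim (and $\typeU_1, \typeU_2, \typeU_2'$ are the remaining components of the reached configuration). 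The main obstacle is the inductive chaining: one must verify carefully that each intermediate session term still satisfies all the side hypotheses of Lemma~\ref{lemma:subj}, and handle the initial \rulelabel{F-Con} step separately from that lemma.
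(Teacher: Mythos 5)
Your proposal is correct and follows essentially the same route as the paper's proof: a first \rulelabel{F-Con} step establishing the initial correspondence with $\initConf{\typeT_1}{\typeT_2}\conf{\typeT_1}{\typeT_1} \confcomp \conf{\typeT_2}{\typeT_2}$, repeated application of Lemma~\ref{lemma:subj} to lift the remaining reductions, and the observation that $P_2 \auxrel{\rollLab} P_2'$ forces $P_2 = \roll$, hence $\typeU_1' = \rollType$ by \rulelabel{T-Rll}, which can perform the $\rollType$ transition. Your extra remarks on checkpoint accordance for the imposed checkpoint and on the absence of error reductions along the sequence are sound elaborations of what the paper leaves implicit.
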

\begin{proof}
From $C \hasType \{\ce{a}:\typeT_1,a:\typeT_2\}$, by applying 
\rulelabel{T-Par}, \rulelabel{T-Acc} and \rulelabel{T-Req}, 
we have that  
$\emptyset;\emptyset \judge P \hasType x:\typeT_1$
and 
$\emptyset;\emptyset \judge Q \hasType x:\typeT_2$.
By applying rule \rulelabel{F-Con} to the collaboration $C$, we obtain
$C \fwred \singleSession{s}{C}
(\logged{\ce{s}}{P\subst{\ce{s}}{x} } P\subst{\ce{s}}{x}
        \ \mid\ 
	\logged{s}{Q\subst{s}{y}} Q\subst{s}{y} ) = C'
$.
Now, by repeatedly applying Lemma~\ref{lemma:subj}, from 
$C' \fwbwred^* \singleSession{s}{C}(\logged{\ce{\genSession}}{\imposed{P_1}} P_2 
\ \mid \ \logged{\genSession}{\checkpointType{Q_1}} Q_2)$ 
we get 
$\initConf{\typeT_1}{\typeT_2}\conf{\typeT_1}{\typeT_1} \confcomp \conf{\typeT_2}{\typeT_2}
\typered^*
\initConf{\typeT_1}{\typeT_2}\conf{\imposed{\typeU_1}}{\typeU_1'} 
\confcomp \conf{\checkpointType{\typeU_2}}{\typeU_2'}$
for some $\typeU_1$, $\typeU_2$, $\typeU_1'$, $\typeU_2'$, 
with 
$\basis_2;\sorting_2 \judge P_2\subst{x}{\ce{s}} \hasType x:\typeU_1'$.
Now, let us consider the transition $P_2 \auxrel{\rollLab} P_2'$.
This can be derived only by the application of rules \rulelabel{P-Rll}.
Thus, $P_2 = \roll$, from which we have $P_2\subst{x}{\ce{s}}=\roll$.
From $\basis_2;\sorting_2 \judge \roll \hasType x:\typeU_1'$, by rule 
\rulelabel{T-Rll}, we get $\typeU_1'=\rollType$. Therefore, 
by \modiff{rule \rulelabel{TS-Rll$_2$}}, 
we can conclude $\typeU_1' \typeTrans{roll} \typeU_1''$
with $\typeU_1''=\inactType$.
\end{proof}

The following lemma relates collaboration reductions to type reductions when 
a \linebreak[5] $\comError$ is produced.
\begin{lem}\label{lemma:typeCorrespondenceCom}
Let $C=(\requestAct{a}{x}{P} \mid \acceptAct{a}{y}{Q})$ such that
$C \hasType \{\ce{a}:\typeT_1,a:\typeT_2\}$
and 
$\typeT_1 \compliant \typeT_2$.
If $C \fwbwred^* \singleSession{s}{C}(\logged{\ce{\genSession}}{\checkpointType{P_1}} P_2 
\ \mid \ \logged{\genSession}{\checkpointType{Q_1}} Q_2)$,
$P_2 \auxrel{\actionLab} P_2'$ and
$\neg Q_2 \Barb{\bar{\actionLab}}$
with $\actionLab$ of the form 
$\send{k}{v}$,
$\receive{k}{x}$,
$\select{k}{l}$ or
$\branching{k}l$, 
then there exist
$\typeU_1$, $\typeU_2$, $\typeU_1'$, $\typeU_2'$, $\typeU_1''$ such that
$\initConf{\typeT_1}{\typeT_2}\conf{\typeT_1}{\typeT_1} \confcomp \conf{\typeT_2}{\typeT_2}
\typered^*
\initConf{\typeT_1}{\typeT_2}\conf{\checkpointType{\typeU_1}}{\typeU_1'} 
\confcomp \conf{\checkpointType{\typeU_2}}{\typeU_2'}$ with
$\checkpointType{P_1}$ (resp. $\checkpointType{Q_1}$)
in checkpoint accordance with $\checkpointType{\typeU_1}$ 
(resp. $\checkpointType{\typeU_2}$),
$\typeU_1' \typeTrans{\mFunc{\sorting}{\actionLab}} \typeU_1''$, with
$\sorting$ sorting for typing $P_2'$,
and for all $\typeU_2''$ such that $\typeU_2' \typeTrans{\tau}^* \typeU_2''$
we have 
$\typeU_2'' \typeTrans{\overline{\mFunc{\sorting}{{\actionLab}}}}\!\!\!\!\!\!/\ $.
\end{lem}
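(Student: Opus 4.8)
The plan is to follow the same overall strategy as the proof of Lemma~\ref{lemma:typeCorrespondence}, concentrating the genuinely new work on the universal condition concerning $\typeU_2'$. First I would build the type-level counterpart of the given trace. From $C \hasType \{\ce{a}:\typeT_1,a:\typeT_2\}$, inverting \rulelabel{T-Par}, \rulelabel{T-Acc} and \rulelabel{T-Req} yields $\emptyset;\emptyset \judge P \hasType x:\typeT_1$ and $\emptyset;\emptyset \judge Q \hasType y:\typeT_2$; applying \rulelabel{F-Con} gives the first step $C \fwred C'$, and then iterating the subject-reduction Lemma~\ref{lemma:subj} along the whole trace $C' \fwbwred^{*} \singleSession{s}{C}(\logged{\ce{\genSession}}{\checkpointType{P_1}} P_2 \mid \logged{\genSession}{\checkpointType{Q_1}} Q_2)$ produces a matching reduction $\initConf{\typeT_1}{\typeT_2}\conf{\typeT_1}{\typeT_1} \confcomp \conf{\typeT_2}{\typeT_2} \typered^{*} \initConf{\typeT_1}{\typeT_2}\conf{\checkpointType{\typeU_1}}{\typeU_1'} \confcomp \conf{\checkpointType{\typeU_2}}{\typeU_2'}$ with the required checkpoint accordance and with $\typeU_1'$, $\typeU_2'$ typing $P_2$, $Q_2$. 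A point to record here is that in every \rulelabel{F-If} step Lemma~\ref{lemma:subj} follows precisely the branch the process takes (via \rulelabel{TS-If$_1$}/\rulelabel{TS-If$_2$}), so the type configuration tracks the process's actual $\tau$-choices rather than arbitrary ones.

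Next I would establish $\typeU_1' \typeTrans{\mFunc{\sorting}{\actionLab}} \typeU_1''$. Since $P_2 \auxrel{\actionLab} P_2'$ with $\actionLab$ a communication action, $P_2$ has the corresponding communication prefix at its head (no guarding conditional), so a case analysis on $\actionLab \in \{\send{k}{v}, \receive{k}{x}, \select{k}{l}, \branching{k}l\}$ — inverting the process rule (\rulelabel{P-Snd}, \rulelabel{P-Rcv}, \rulelabel{P-Sel}, \rulelabel{P-Brn}) and the matching typing rule (\rulelabel{T-Snd}, \rulelabel{T-Rcv}, \rulelabel{T-Sel}, \rulelabel{T-Br}) — shows that $\typeU_1'$ is a communication type whose head transition is exactly $\mFunc{\sorting}{\actionLab}$ by the corresponding \rulelabel{TS-}rule. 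This is completely parallel to the handling of $P_2 \auxrel{\rollLab} P_2'$ in Lemma~\ref{lemma:typeCorrespondence}.

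The crux is the universal condition, and the idea is to exploit the existential freedom in the statement to pin down a convenient $\typeU_2'$. Since $Q_2$ is closed, every conditional guard evaluates to a definite value, so the auxiliary $\tau$-relation on $Q_2$ is deterministic and (with guarded recursion) $Q_2$ has a unique $\tau$-normal form $Q_2^{*}$ with $Q_2 \Rightarrow Q_2^{*}$ and $Q_2^{*}$ guarded. I would then extend the type-configuration reduction by the $\tau$-steps on the second component (rule \rulelabel{TS-Tau}, using \rulelabel{TS-If$_1$}/\rulelabel{TS-If$_2$} and \rulelabel{TS-rec}) that mirror $Q_2 \Rightarrow Q_2^{*}$, reaching $\typeU_2'$ equal to the type of $Q_2^{*}$; these steps leave $\checkpointType{\typeU_2}$ and the first component $\typeU_1'$ untouched, so the earlier conclusions survive. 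Because $Q_2^{*}$ is guarded, $\typeU_2'$ has no leading internal choice, hence $\typeU_2' \typeTrans{\tau}^{*} \typeU_2''$ forces $\typeU_2'' = \typeU_2'$, and the universal claim collapses to showing that $\typeU_2'$ cannot fire $\overline{\mFunc{\sorting}{\actionLab}}$. This last fact follows from $\neg Q_2 \Barb{\bar{\actionLab}}$: as $Q_2^{*}$ is $\tau$-reachable from $Q_2$ it cannot offer $\bar{\actionLab}$, and since the session is binary (a single channel), the head of $Q_2^{*}$ being different from $\bar{\actionLab}$ translates, by the head-action/type-transition correspondence, into $\typeU_2'$ being unable to perform $\overline{\mFunc{\sorting}{\actionLab}}$.

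The main obstacle is exactly the mismatch between the internal choice $\typeT_1 \choiceType \typeT_2$ at the type level, which may move to \emph{either} branch via \rulelabel{TS-If$_1$}/\rulelabel{TS-If$_2$}, and the deterministic conditional at the process level, which takes only the branch its guard selects. A naive contrapositive argument — turning a type $\tau$-successor that offers $\overline{\mFunc{\sorting}{\actionLab}}$ into a process $\tau$-descendant offering $\bar{\actionLab}$ — breaks down precisely because the type may explore a branch the process never enters, so $\neg Q_2 \Barb{\bar{\actionLab}}$ alone does not constrain it. The remedy, and the delicate point to justify carefully, is to fix the configuration at $Q_2$'s $\tau$-normal form so that no spurious internal-choice successors remain; this hinges on that normal form being well defined, i.e.\ on determinism of closed guards together with guardedness of recursion ensuring termination of $\tau$-reduction for $Q_2$.
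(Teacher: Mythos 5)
Your proposal coincides with the paper's proof on its skeleton---inverting \rulelabel{T-Par}/\rulelabel{T-Acc}/\rulelabel{T-Req}, applying \rulelabel{F-Con}, iterating Lemma~\ref{lemma:subj} along the trace to obtain the matching type-configuration reduction, and a case analysis pairing \rulelabel{P-Snd}/\rulelabel{P-Rcv}/\rulelabel{P-Sel}/\rulelabel{P-Brn} with \rulelabel{T-Snd}/\rulelabel{T-Rcv}/\rulelabel{T-Sel}/\rulelabel{T-Br} to get $\typeU_1' \typeTrans{\mFunc{\sorting}{\actionLab}} \typeU_1''$---but it takes a genuinely different route on the universal condition for $\typeU_2'$. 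The paper dispatches that clause in one line (``from $\neg Q_2 \Barb{\bar{\actionLab}}$, following a similar reasoning''), implicitly keeping $\typeU_2'$ equal to the type of $Q_2$ delivered by Lemma~\ref{lemma:subj}; you instead exploit the existential in the statement and append \rulelabel{TS-Tau} steps mirroring $Q_2 \Rightarrow Q_2^{*}$, so that $\typeU_2'$ is the type of the $\tau$-normal form and has no proper $\tau$-successor. Your motivation is sound: a type-level $\tau$ on an internal choice may enter the branch that the deterministic, closed conditional never takes, so $\neg Q_2 \Barb{\bar{\actionLab}}$ does not by itself exclude a $\tau$-successor of the type of $Q_2$ offering $\overline{\mFunc{\sorting}{\actionLab}}$ (e.g.\ $Q_2 = \ifthenelseAct{\ctrue}{\roll}{\receiveAct{k}{z:S}{\inact}}$ has no input barb, yet its type $\rollType \choiceType \inpType{S}.\inactType$ has a $\tau$-successor that inputs). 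Your detour therefore supplies an argument for the universal clause that the paper's ``similar reasoning'' does not actually spell out. What it costs is a new dependency that you yourself flag but should state as a hypothesis: the existence of the $\tau$-normal form $Q_2^{*}$ requires termination of $\tau$-reduction, i.e.\ guardedness of recursion (note that $\recAct{X}{\ifthenelseAct{\ctrue}{X}{\inact}}$ $\tau$-diverges under the structural unfolding of $\rec$), an assumption the paper never makes explicit. With that caveat recorded, your argument is correct and, on the delicate step, more complete than the paper's own.
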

\begin{proof}
From $C \hasType \{\ce{a}:\typeT_1,a:\typeT_2\}$, by applying 
\rulelabel{T-Par}, \rulelabel{T-Acc} and \rulelabel{T-Req}, 
we have that  
$\emptyset;\emptyset \judge P \hasType x:\typeT_1$
and 
$\emptyset;\emptyset \judge Q \hasType x:\typeT_2$.
By applying rule \rulelabel{F-Con} to the collaboration $C$, we obtain
$C \fwred \singleSession{s}{C}
(\logged{\ce{s}}{P\subst{\ce{s}}{x} } P\subst{\ce{s}}{x}
        \ \mid\ 
	\logged{s}{Q\subst{s}{y}} Q\subst{s}{y} ) = C'
$.
Now, by repeatedly applying Lemma~\ref{lemma:subj}, from 
$C' \fwbwred^* \singleSession{s}{C}(\logged{\ce{\genSession}}{\checkpointType{P_1}} P_2 
\ \mid \ \logged{\genSession}{\checkpointType{Q_1}} Q_2)$
we get 
$\initConf{\typeT_1}{\typeT_2}\conf{\typeT_1}{\typeT_1} \confcomp \conf{\typeT_2}{\typeT_2}
\typered^*
\initConf{\typeT_1}{\typeT_2}\conf{\checkpointType{\typeU_1}}{\typeU_1'} 
\confcomp \conf{\checkpointType{\typeU_2}}{\typeU_2'}$
for some $\typeU_1$, $\typeU_2$, $\typeU_1'$, $\typeU_2'$, 
with 
$\basis_2;\sorting_2 \judge P_2\subst{x}{\ce{s}} \hasType x:\typeU_1'$
and 
$\basis_2';\sorting_2' \judge Q_2\subst{y}{{s}} \hasType y:\typeU_2'$.
Now, let us reason by case analysis on the rule for deriving the transition $P_2 \auxrel{\actionLab} P_2'$.
\begin{description}
\item[Rule \rulelabel{P-Snd}]
Thus, $P_2 = \sendAct{\ce{s}}{e}{P_2'}$ and $\actionLab=\send{\ce{s}}{v}$
with $\expreval{e}{v}$. 
From $\basis_2;\sorting_2 \judge  P_2\subst{x}{\ce{s}}  \hasType x:\typeU_1'$, by rule 
\rulelabel{T-Snd}, we get $\typeU_1'=\outType{S}.\typeU_1''$ with 
$\sorting \judge e \hasType S$. Therefore, by rule \rulelabel{TS-Snd}, we get 
$\typeU_1' \typeTrans{\outType{S}} \typeU_1''$.

\item[Rule \rulelabel{P-Rcv}]
Thus, $P_2 = \receiveAct{\ce{s}}{y:S}{P_2'}$ and $\actionLab=\receive{\ce{s}}{y}$. 
From $\basis_2;\sorting_2 \judge  P_2\subst{x}{\ce{s}}  \hasType x:\typeU_1'$, by rule 
\rulelabel{T-Rcv}, we get $\typeU_1'=\inpType{S}.\typeU_1''$. Therefore, by rule \rulelabel{TS-Rcv}, we get 
$\typeU_1' \typeTrans{\inpType{S}} \typeU_1''$.

\item[Rules \rulelabel{P-Sel} and \rulelabel{P-Brn}] Similar to the previous cases.

\end{description}
Finally, from $\neg Q_2 \Barb{\bar{\actionLab}}$, following a similarly reasoning, we can 
conclude $\typeU_2' \typeTrans{\tau}^* \typeTrans{\overline{\mFunc{\sorting}{{\actionLab}}}}\!\!\!\!\!\!/\ $.
\end{proof}

\bigskip

\modiff{The soundness results are as follows.}

\bigskip

\theoremApp{\ref{th:roll_soundness}}
If $C$ is a roll-safe collaboration, then we have that $C\ {\fwbwred\!\!\!\!\!\!/\ }^*\ \collContext[\rollError]$.
\begin{proof} 
The proof proceeds by contradiction.
Suppose that there exists an initial collaboration $C$ that is rollback safe and such that 
$C \fwbwred^* \collContext[\rollError]$. 
The erroneous collaboration $\rollError$ can be only produced by applying rule \rulelabel{E-Rll$_2$}.
Thus, to infer at least one reduction of the sequence $C \fwbwred^* \collContext[\rollError]$,
rule \rulelabel{E-Rll$_2$} must be used. From this, we have that there exists a runtime 
collaboration $C' \congr \collContext'[C'']$, with 
$C''=(\logged{\ce{\genSession}}{\imposed{Q_1}} P_1 
\ \mid \ \logged{\genSession}{\checkpointType{Q_2}} P_2)$, such that 
$C \fwbwred^* C'$,
$P_1 \auxrel{\rollLab} P_1'$, 
and 
$C' 	\ \fwred\  \collContext'[\rollError] \ \fwbwred^*\ \collContext[\rollError]$.
By rules \rulelabel{F-Con}, \rulelabel{F-Res} and \rulelabel{B-Res},   
and the fact that the scope of operator $\singleSession{s}{\_}$ is statically defined
(i.e., neither the operational rules nor $\congr$ allow scope extension), 
the term $C''$ can only be the argument of the operator $\singleSession{s}{C_1}$,
i.e. $\collContext'=C_2\ \mid \ \singleSession{s}{C_1}[\cdot]$, 
with $C_1=\requestAct{a}{x}{P} \mid \acceptAct{a}{y}{Q}$ for some 
$a$, $x$, $y$, $P$ and $Q$.
In its own turn, the term $\singleSession{s}{C_1}C''$ can only be generated by 
applying rule \rulelabel{F-Con} from $C_1$, which must be a subterm of $C$,
i.e. $C \congr C_1 \mid C_2'$ for some $C_2'$.
Since the scope of $\singleSession{s}{\_}$ operator cannot be extended,
all reductions performed by terms in parallel with it by applying rules \rulelabel{F-Par}
and \rulelabel{B-Par} do not affect the argument of such operator.
Therefore, focussing on the subterm $C_1$ of $C$, by exploiting rules  
\rulelabel{F-Par} and \rulelabel{B-Par} we can set apart the reductions in 
$C \fwbwred^* \collContext[\rollError]$ involving $C_1$ and its derivatives,
thus obtaining $C_1 \fwbwred^* \singleSession{s}{C_1}C''  \fwred \singleSession{s}{C_1}\rollError$.

Now, since $C$ is rollback safe, by Def.~\ref{def:rollback_safety} we have that $C \hasType \sessions$
and for all pairs $\ce{b}:\typeV_1$ and $b:\typeV_2$ in $\sessions$ we have 
$\typeV_1 \compliant \typeV_2$. 
Since $C \congr C_1 \mid C_2'$, by rule \rulelabel{T-Par} we obtain 
$C_1 \hasType \sessions_1$ with $\sessions_1 \subseteq \sessions$.
By rules \rulelabel{T-Req} and \rulelabel{T-Acc}, we have  
$\sessions_1 = \{\ce{a}:\typeT_1,a:\typeT_2\}$. 
Since \modiff{$\sessions_1$ is a subset of $\sessions$,} we have that $\typeT_1 \compliant \typeT_2$.

By Lemma~\ref{lemma:typeCorrespondence}, we have that there exist
$\typeU_1$, $\typeU_2$, $\typeU_1'$, $\typeU_2'$, $\typeU_1''$ such that
$\initConf{\typeT_1}{\typeT_2}\conf{\typeT_1}{\typeT_1} \confcomp \conf{\typeT_2}{\typeT_2}
\typered^*
\initConf{\typeT_1}{\typeT_2}\conf{\imposed{\typeU_1}}{\typeU_1'} 
\confcomp \conf{\checkpointType{\typeU_2}}{\typeU_2'}
=t
$
and 
$\typeU_1' \typeTrans{roll} \typeU_1''$.
%
%
\added{Since $\typeU_1'$ can only perform $roll$, \modiff{the only rules that can be applied are \rulelabel{TS-Rll$_1$} and \rulelabel{TS-Rll$_2$}. However,} rule \rulelabel{TS-Rll$_1$} cannot be applied due to the imposed checkpoint $\imposed{\typeU_1}$. 
Therefore, the only rule that can be applied is 
\rulelabel{TS-Rll$_2$}, leading to the configuration 
$
t'= 
\initConf{\typeT_1}{\typeT_2}\conf{\imposed{\typeU_1}}{\typeU_1''} 
\confcomp \conf{\checkpointType{\typeU_2}}{\typeU_2''}
$
with $\typeU_1''=\typeU_2''=\errType$.
Now, no rule in Fig.~\ref{fig:typeSemantics_ext} allows the term $t'$ to evolve, i.e. $t' \typered\!\!\!\!\!\!\!\!\!/\ \ \ $.
Since $\typeT_1 \compliant \typeT_2$, by Def.~\ref{def:compliance} it must hold that $\typeU_1'' = \typeU_2'' = \inactType$. However,  $\typeU_1'' = \errType \neq \inactType$ and $\typeU_2'' = \errType  \neq \inactType$, which is a contradiction.}
\end{proof}

\bigskip
\theoremApp{\ref{th:com_soundness}}
If $C$ is a roll-safe collaboration, then we have that $C\ {\fwbwred\!\!\!\!\!\!/\ }^*\ \collContext[\comError]$.
\begin{proof} 
The proof proceeds by contradiction.
Suppose that there exists an initial collaboration $C$ that is rollback safe and such that 
$C \fwbwred^* \collContext[\comError]$. 
The erroneous collaboration $\comError$ can be produced by applying one of 
the rules \rulelabel{E-Com1}, \rulelabel{E-Com2}, \rulelabel{E-Lab1} and \rulelabel{E-Lab2}.
Let us consider the case \rulelabel{E-Com1}, the other cases are similar.
Proceeding as in the proof of Theorem~\ref{th:roll_soundness}, without loss of generality 
we can focus on the subterm $C_1=(\requestAct{a}{x}{P} \mid \acceptAct{a}{y}{Q})$ 
of $C$, such that 
$C_1 \fwbwred^* \singleSession{s}{C_1}C''  \fwred \singleSession{s}{C_1}\comError$,
with $C''=(\logged{{\genSession}}{\checkpointType{Q_1}} P_1 
\mid \logged{\genSession}{\checkpointType{Q_2}} P_2)$,
and $C_1 \hasType \{\ce{a}:\typeT_1,a:\typeT_2\}$, with $\typeT_1 \compliant \typeT_2$. 

By Lemma~\ref{lemma:typeCorrespondenceCom}, we have that there exist
$\typeU_1$, $\typeU_2$, $\typeU_1'$, $\typeU_2'$, $\typeU_1''$ such that
$\initConf{\typeT_1}{\typeT_2}\conf{\typeT_1}{\typeT_1} \confcomp \conf{\typeT_2}{\typeT_2}
\typered^*
\initConf{\typeT_1}{\typeT_2}\conf{\checkpointType{\typeU_1}}{\typeU_1'} 
\confcomp \conf{\checkpointType{\typeU_2}}{\typeU_2'}=t$ with
$\checkpointType{Q_1}$ (resp. $\checkpointType{Q_2}$)
in checkpoint accordance with $\checkpointType{\typeU_1}$ 
(resp. $\checkpointType{\typeU_2}$),
$\typeU_1' \typeTrans{\outType{S}} \typeU_1''$, 
and for all $\typeU_2''$ such that $\typeU_2' \typeTrans{\tau}^* \typeU_2''$
we have 
$\typeU_2'' \typeTrans{\inpType{S}}\!\!\!\!\!\!/\ $.
Thus, for all $\typeU_2''$ as above, we have
$t \typered^* 
\initConf{\typeT_1}{\typeT_2}\conf{\checkpointType{\typeU_1}}{\typeU_1'} 
\confcomp \conf{\checkpointType{\typeU_2}}{\typeU_2''}=t'$.
No rule in Fig.~\ref{fig:typeSemantics_ext} allows the term $t'$ to evolve, i.e. $t' \typered\!\!\!\!\!\!\!\!\!/\ \ \ $,
because $\typeU_1'$ can only perform $\outType{S}$ and rule \rulelabel{TS-Com} cannot be applied 
since $\typeU_2'' \typeTrans{\inpType{S}}\!\!\!\!\!\!/\ $.
Since $\typeT_1 \compliant \typeT_2$, by Def.~\ref{def:compliance} it must hold that 
$\typeU_1' = \inactType$.
However, since $\typeU_1'$ is able to perform an action (as it holds that 
$\typeU_1' \typeTrans{\outType{S}} \typeU_1''$), we get that 
it cannot be an end type, i.e. $\typeU_1' \neq \inactType$, which is a contradiction. 
\end{proof}

\bigskip

\newcommand{\corollaryApp}[2]{
\noindent
\textbf{Corollary #1.} \emph{#2}}




\end{document}